\newcommand{\one}{\mathbbm 1}
\newcommand{\ap}[1]{\prescript{a}{}{#1}}
\def\reals{\mathbb{R}}
\def\ereals{\overline{\mathbb{R}}}
\def\comp{\raise 1pt \hbox{$\scriptstyle\circ$}}
\def\minimize{\mathop{\rm minimize}\limits}
\def\st{\mathop{\rm subject\ to}}
\def\dom{\mathop{\rm dom}\nolimits}
\def\upto{{\raise 1pt \hbox{$\scriptstyle \,\nearrow\,$}}}
\def\downto{{\raise 1pt \hbox{$\scriptstyle \,\searrow\,$}}}
\def\inte{\mathop{\rm int}}
\def\cl{\mathop{\rm cl}\nolimits}
\def\epi{\mathop{\rm epi}\nolimits}
\def\FF{(\F_t)_{t=0}^T}
\def\one{\mathbbm 1}
\def\ovr{\mathop{\rm over}}
\def\A{{\cal A}}
\def\B{{\cal B}}
\def\C{{\cal C}}
\def\D{{\cal D}}
\def\F{{\cal F}}
\def\L{{\cal L}}
\def\M{{\cal M}}
\def\N{{\cal N}}
\def\P{{\cal P}}
\def\Q{{\cal Q}}
\def\U{{\cal U}}
\def\Y{{\cal Y}}
\newtheorem{theorem}{Theorem}
\newtheorem{lemma}[theorem]{Lemma}
\newtheorem{corollary}[theorem]{Corollary}
\newtheorem{example}{Example}
\newtheorem{remark}{Remark}
\theoremstyle{definition}
\newtheorem{assumption}{Assumption}
\title{Convex duality in optimal investment and contingent claim valuation in illiquid markets}
\author{Teemu Pennanen\thanks{Department of Mathematics, King's College London,
Strand, London, WC2R 2LS, United Kingdom, {\it teemu.pennanen@kcl.ac.uk}.} \and Ari-Pekka Perkki\"o\thanks{Department of Mathematics, Technische Universit\"at Berlin, Building MA,
Str. des 17. Juni 136, 10623 Berlin, Germany, {\it perkkioe@math.tu-berlin.de}. The author is grateful to the Einstein Foundation for the financial support.}}
\begin{document}

\maketitle

\begin{abstract}
This paper studies convex duality in optimal investment and contingent claim valuation in markets where traded assets may be subject to nonlinear trading costs and portfolio constraints. 
Under fairly general conditions, the dual expressions decompose into tree terms, corresponding to the agent's risk preferences, trading costs and portfolio constraints, respectively. The dual representations are shown to be valid when the market model satisfies an appropriate generalization of the no-arbitrage condition and the agent's utility function satisfies an appropriate generalization of asymptotic elasticity conditions. When applied to classical liquid market models or models with bid-ask spreads, we recover well-known pricing formulas in terms of martingale measures and consistent price systems. Building on the general theory of convex stochastic optimization, we also derive optimality conditions in terms of an extended notion of a ``shadow price''.
\end{abstract}

\section{Introduction}

A fundamental problem in financial economics is that of managing investments so that their proceeds cover given liabilities as well as possible. This is important not only in wealth management but also in valuation of financial products. Both optimal investment and asset valuation are often analyzed by introducing certain dual variables that appear e.g.\ in optimality conditions and valuation operators. 
Classical references include \cite{hk79}, \cite{hp81}, \cite{kre81} and \cite{dmw90} where the no-arbitrage condition was related to the existence of martingale measures; see \cite{ds6} for a detailed discussion of the topic. In problems of optimal investment, convex duality has become an important tool in the analysis of optimal solutions; see e.g.\ \cite{ck92}, \cite{ks99}, \cite{dgrsss}, \cite{kz3},~\cite{rog13} and their references.


This article studies convex duality in optimal investment and contingent claim valuation in illiquid markets where one may face nonlinear trading costs and portfolio constraints that hinder transfer of funds between asset classes or through time. Building on general theory of convex stochastic optimization, we give explicit expressions for the dual objective as well as sufficient conditions for absence of a duality gap and for the attainment of the dual optimum. The first set of conditions involve a multivariate generalizations of the asymptotic elasticity conditions from \cite{ks99,sch1} as well as a condition that generalizes the robust no-arbitrage condition in unconstrained models with proportional transaction costs. The conditions for attainment of the dual optimum are closely related to those of \cite[Corollary~5.2]{bc11} in the context of perfectly liquid market models. The dual solutions provide a natural extension of the notion of a ``shadow price'' from unconstrained sublinear market model of \cite{cms14} to general convex models. Analogously to shadow prices, the optimal dual variables, when they exist, yield pointwise optimality conditions for optimal investment strategies.

Under nonlinear illiquidity effects and nonconical portfolio constraints, the dual objective turns out to be the sum of three nonlinear terms. Like in classical perfectly liquid models, the first term is the conjugate of the primal objective. The two other terms correspond to trading costs and portfolio constraints, respectively. In the classical perfectly liquid model with a cash-account, we recover the familiar dual problem over positive multiples of martingale densities. In the absence of a cash-account, however, the dual variables become stochastic discount factors that account for uncertainty as well as for time-value of money much like in the numeraire-free conical models of \cite{dr91}, \cite{jk1}, \cite{nap1}, \cite{mad15} and \cite{teh??}. Introduction of short-sales constraints result in stochastic discount factors that turn the market prices into supermartingales like in \cite{jk95b}. Adding proportional transaction costs, one gets similar conditions for a ``shadow price'' process that evolves within the bid-ask spread; see \cite{jk95a,cms14}. 

Our model of nonlinear trading costs is closely related to the continuous-time model of \cite{gr15}. Within the discrete-time framework, we are able to relax the coercivity conditions of \cite{gr15} so as to allow for nondecreasing as well as linear trading costs. Analogously to \cite{gr15}, we extend the analysis also to contingent claims with physical delivery. It turns out that this is the natural framework of analysis and the results on cash-valued contingent claims are derived from the corresponding results for multivariate claims. 


Building on the duality theory of optimal investment, we derive dual expressions for the values of contingent claims that may have several payout dates. As in \cite{pen14}, two widely studied notions of a value will be considered. The first one, which we will refer to as {\em accounting value}, is defined as the least amount of initial cash that is needed to hedge a claim at an acceptable level of risk. In the most risk averse case, where one does not accept any risk of a loss at maturity, this reduces to the classical superhedging cost. More reasonable versions are obtained by relaxing the superhedging requirement to that of acceptable ``risk'' associated with the terminal net position. Such valuations have been studied under various names including (but not restricted to) ``risk measure'' \cite{adeh99,adk9}, ``efficient hedging'' \cite{fl,fs11} or ``good-deal bounds'' \cite{ch2}. The second notion of value is that of {\em indifference price} which is the natural notion for an agent who considers entering a financial transaction and is interested in how the trade would change his existing financial position; see e.g. \cite{hn89,re0,car9} and their references. 

Dual expressions for the accounting values turn out to have the same general structure as in ``good deal bounds'' studied e.g.\ in \cite{jk1} and \cite{ch2}. The dual representation may also be seen as a multiperiod generalization of the dual representation of a convex risk measure in illiquid markets. In the case of indifference swap rates, we obtain illiquid discrete-time versions of the pricing formulas derived e.g.\ in~\cite[Section~7]{oz9} and~\cite[Section~4]{bfg11}.

\section{Optimal investment and contingent claim valuation}\label{sec:oicv}

Consider a financial market where a finite set $J$ of assets can be traded over $t=0,\ldots,T$. As usual, we will model uncertainty and information by a filtered probability $(\Omega,\F,\FF,P)$. The cost of buying a portfolio $x\in\reals^J$ at time $t$ and state $\omega$ will be denoted by $S_t(x,\omega)$. We assume that $S_t:\reals^J\times\Omega\to\ereals$ is a an $\F_t$-measurable convex normal integrand (see the appendix) such that $S_t(0,\omega)=0$. This abstract specification covers e.g.\ proportional transaction costs as well as transient illiquidity effects one faces e.g.\ in limit order markets. In the presence of portfolio constraints, the set of portfolios that can be held over $(t,t+1]$ will denoted by $D_t(\omega)\subseteq\reals^J$. We assume that $D_t$ is $\F_t$-measurable, closed and convex-valued with $0\in D_t(\omega)$. We emphasize that we do not assume a priori the existence of a perfectly liquid numeraire asset that is free of such constraints. Such models can, however, be treated as special cases of the market model $(S,D)$; see Example~\ref{ex:liquid} below. More examples with e.g.\ proportional transactions and specific instances of portfolio constraints can be found in \cite{pen11,pen11b,pen12,pen14}.

In a market without perfectly liquid assets it is important to distinguish between payments at different points in time. See \cite{pen14,mad15,teh??} for further discussion. Accordingly, we will describe the agent's preferences over sequences of payments by a normal integrand $V$ on $\reals^{T+1}\times\Omega$. More precisely, the agent prefers to make a random sequence $c^1$ of payments over another $c^2$ if
\[
EV(c^1)<EV(c^2)
\]
while she is indifferent between the two if the expectations are equal. A possible choice would be
\begin{equation}\label{separable}
V(c,\omega)=\sum_{t=0}^TV_t(c_t,\omega).
\end{equation}
In general, the function $V$ can be seen as a multivariate generalization of a ``loss function'' in the sense of \cite[Definition~4.111]{fs11}, where a loss function was a nonconstant nondecreasing function on $\reals$. Multivariate utility functions have been studied also in \cite{co11} where the utility was a function of a portfolio of assets. More recently, \cite{teh??} studied optimal consumption investment

We allow for random, nondifferentiable, extended real-valued loss functions $V$ but will require the following.

\begin{assumption}\label{ass1}
The loss function $V$ is a normal integrand on $\reals^{T+1}\times\Omega$ such that for $P$-almost every $\omega\in\Omega$, the function $V(\cdot,\omega)$ is convex and nondecreasing with respect to the componentwise order of $\reals^J$, $V(0,\omega)=0$ and 
for every nonzero $c\in\reals^{T+1}_+$ there exists an $\alpha>0$ such that $V(\alpha c,\omega)>0$.
\end{assumption}

The last condition in Assumption~\ref{ass1} means that the agent is not completely indifferent with respect to nonzero nonnegative payments. It holds in particular if $V$ is of the form \eqref{separable} and each $V_t$ is convex $\F_t$-normal integrand on $\reals\times\Omega$ such that $V_t(\cdot,\omega)$ is nonconstant and nondecreasing with $V(0,\omega)=0$. 

We will study optimal investment from the point of view of an agent who has financial liabilities described by a random sequence $c$ of payments to be made over time. We allow $c$ to take arbitrary real values so it can describe endowments as well as liabilities. The agent's problem is to find a trading strategy that hedges against the liabilities $c$ as well as possible as measured by $EV$. Denoting the linear space of adapted $\reals^J$-valued trading strategies by $\N$, we can write the problem as
\begin{equation}\label{alm}\tag{ALM}
\minimize\quad E V(S(\Delta x)+c)\quad\ovr\quad x\in\N_D,
\end{equation}
where $\N_D=\{x\in\N\mid x_t\in D_t\ \forall t\ P\text{-a.s.}\}$ is the set of feasible trading strategies, $S(\Delta x)$ denotes the process $(S_t(\Delta x_t))_{t=0}^T$ and $x_{-1}:=0$. Here and in what follows, we define the expectation of a measurable function as $+\infty$ unless the positive part is integrable. For any $z\in\reals^J$ and $c\in\reals$, we set $V(S(z,\omega)+c,\omega):=+\infty$ unless $z_t\in\dom S_t(\cdot,\omega)$ for all $t$. Thus, \eqref{alm} incorporates the constraint $\Delta x_t\in\dom S_t$ almost surely for all $t$, i.e.\ there may be limits on traded amounts. Assumption~\ref{ass1} guarantees that the objective of \eqref{alm} is a well-defined convex function on $\N$; see Corollary~\ref{cor:vs} in the appendix. We assume that $D_T(\omega)=\{0\}$, i.e., the agent is required to close all positions at time $T$.

Throughout this paper, we will denote the optimum value of \eqref{alm} by $\varphi(c)$. The {\em optimum value function} $\varphi$ is an extended real-valued function on the linear space $\M$ of random sequences of payments.

Problem \eqref{alm} covers discrete-time versions of many more traditional models of optimal investment that have appeared in the literature. In particular, \eqref{alm} can be interpreted as an optimal consumption-investment problem with random endowment $-c_t$ and consumption $-S_t(\Delta x_t)-c_t$ at time $t$. Our formulation extends the more common formulations with a perfectly liquid numeraire asset; see e.g.\ \cite{ck92,kz3} and their references. Problem~\ref{alm} is close to \cite[Section~2]{teh??}, where optimal investment-consumption was studied in a linear discrete-time market model without a perfectly liquid numeraire asset. 

Allowing for extended real-valued loss functions $V$, we can treat superhedging problems and problems with explicit budget constraints as special cases of \eqref{alm}. We will denote by
\[
\C := \{c\in\M\,|\,\exists x\in\N_D:\ S(\Delta x) + c\le 0\},
\]
the {\em set of claims that can be superhedged without a cost}.

\begin{example}\label{ex:terminal}
When $V$ is of the form
\[
V(c,\omega)=
\begin{cases}
  V_T(c_T,\omega) & \text{if $c_t\le 0$ for $t<T$},\\
  +\infty & \text{otherwise},
\end{cases}
\]
problem \eqref{alm} can be written with explicit budget constraints as follows
\begin{equation}\label{alm2}
\begin{aligned}
&\minimize\quad & &EV_T(S_T(\Delta x_T)+c_T) \quad\ovr \quad x\in\N_D\\
&\st & &S_t(\Delta x_t)+c_t\le 0,\quad \forall t<T\quad P\text{-a.s}.
\end{aligned}
\end{equation}
This is an illiquid version of the classical problem of maximizing expected utility of terminal wealth. Indeed, since $x_T=0$ for all $x\in\N_D$, we have $S_T(\Delta x_T)=S_T(-x_{T-1})$, which is the cost of closing all positions at time $T$. Alternatively, one may interpret $-S_T(-x_{T-1})$ as the liquidation value of $x_{T-1}$. When\footnote{Here and in what follows, $\delta_C$ denotes the {\em indicator function} of a set $C$: $\delta_C(x)$ equals $0$ or $+\infty$ depending on whether $x\in C$ or not.} $V=\delta_{\reals^{T+1}}$ we have $\varphi=\delta_\C$.
\end{example}

In models with perfectly liquid cash, problem \eqref{alm2} can be written in a more familiar form.

\begin{example}\label{ex:liquid}
Models with perfectly liquid cash correspond to
\[
S_t(x,\omega) = x^0 + \tilde S_t(\tilde x,\omega)\quad\text{and}\quad D_t(\omega) = \reals\times\tilde D_t(\omega),
\]
where $x=(x^0,\tilde x)$ and $\tilde S$ and $\tilde D$ are the trading cost and the constraints for the remaining risky assets $J\setminus\{0\}$. When $c$ is adapted, one can then use the budget constraint in problem \eqref{alm2} to substitute out the cash investments $x^0$ from the problem formulation 
much like in \cite[Equation~(2.9)]{cr7} and \cite[Equation~(4)]{gr15}. Recalling that $x_T=0$ for $x\in\N_D$, problem \eqref{alm2} can thus be written as
\[
\begin{aligned}
&\minimize\quad & &EV_T\left(\sum_{t=0}^T\tilde S_t(\Delta\tilde x_t)+\sum_{t=0}^Tc_t\right)\quad\ovr\quad x\in\N_D
\end{aligned}
\]
If one specializes further to the perfectly liquid market model where $\tilde S_t(\tilde x)=\tilde s_t\cdot\tilde x_t$ for a unit price process $\tilde s$ that is independent of the trades, one can express the accumulated trading costs in the objective as a stochastic integral of $\tilde x$ with respect to $\tilde s$, thus recovering a discrete-time version of the classical formulation from e.g.\ in \cite{ks99,dgrsss,ks3,bf8,bfg11}. Note that in~\cite{ks99,ks3,bf8}, the financial position of the agent was described solely in terms of an initial endowment $w\in\reals$ without future liabilities. This corresponds to $c_0=-w$ and $c_t=0$ for $t>0$.
\end{example}

We finish this section with a short review of contingent claim valuation under illiquidity; see \cite{pen14} for further discussion. The {\em accounting value} of the liability of delivering a contingent claim $c\in\M$ is defined in terms of the optimum value function $\varphi$ of \eqref{alm} as
\[
\pi^0_s(c) = \inf\{\alpha\in\reals\,|\,\varphi(c-\alpha p^0)\le 0\},
\]
where $p^0\in\M$ is given by $p^0_0=1$ and $p^0_t=0$ for $t>0$. The accounting value gives the least amount of initial capital needed to hedge the claim $c$ at an acceptable level of risk as measured by $EV$. This extends the notion of ``efficient hedging'' with minimal shortfall risk from \cite[Section~8.2]{fs11} to illiquid markets and claims with multiple payout dates. Alternatively, $\pi^0_s(c)$ can be seen as an extension of a ``risk measure'' from \cite{adeh99} to markets without a perfectly liquid ``reference asset''. The function $\pi^0_s:\M\to\ereals$ is convex and nondecreasing with respect to the pointwise ordering on $\M$. It also has a translation property that extends the one proposed in \cite{adeh99}; see the remarks after \cite[Theorem~3.2]{pen14}. Applications of $\pi^0_s$ in pension insurance are studied in \cite{hkp11b}. While $\pi^0_s(c)$ gives the least amount of initial cash required to construct an acceptable hedging strategy for a liability $c\in\M$, the number
\[
\pi^0_l(c):=\sup\{\alpha\in\reals\,|\,\varphi(\alpha p^0-c)\le 0\}
\]
gives the greatest initial {\em payment} one could cover at an acceptable level of risk (by shorting traded assets at time $t=0$ and dynamically trading to close positions by time $t=T$) when receiving $c\in\M$. We will call $\pi^0_l(c)$ the {\em accounting value of an asset} $c\in\M$. Clearly, $\pi^0_l(c)=-\pi^0_s(-c)$. Accounting values for assets are relevant e.g.\ in banks whose assets are divided into the {\em trading book} and the {\em banking book}. Assets in the banking book are typically illiquid assets without secondary markets. The above definition of $\pi^0_l(c)$ gives a market consistent, hedging based value for such assets.

The second notion of a ``value'' considered in this paper extends the notion of indifference price from \cite{hn89}. The {\em indifference swap rate} of exchanging a claim $c\in\M$ for a multiple of a premium sequence $p\in\M$ is defined as
\[
\pi_s(\bar c,p;c) = \inf\{\alpha\in\reals\,|\, \varphi(\bar c+c-\alpha p)\le\varphi(\bar c)\},
\]
where $\bar c\in\M$ describes the agent's initial liabilities. The interpretation is that the indifference swap rate gives the least swap rate that would allow the agent to enter the swap contract without worsening her position as measured by the optimum value of \eqref{alm}. While accounting values are important in accounting and financial supervision, indifference swap rates are more relevant in trading. The indifference swap rate for taking the other side (``long position'') of the swap contract is given analogously by
\[
\pi_l(\bar c,p;c) = \sup\{\alpha\in\reals\,|\, \varphi(\bar c-c+\alpha p)\le\varphi(\bar c)\}.
\]
Note that when $c=(0,\ldots,0,c_T)$ and $p=p^0$, the indifference swap rate reduces to the familiar notion of indifference price.

We end this section by recalling some basic facts about superhedging and the arbitrage bounds for accounting values and indifference swap rates; see \cite{pen14} for more details. We have
\[
\varphi(c) = \inf_{d\in\C}EV(c-d).
\]
Moreover, $\varphi$, $\pi^0_s$ and $\pi(\bar c,p;\cdot)$ are convex functions on $\M$ and they are nonincreasing in the directions of the {\em recession cone}
\[
\C^\infty = \{c\in\M\,|\,\bar c + \alpha c\in\C\ \forall\bar c\in\C,\ \alpha>0\}
\]
of $\C^\infty$. Clearly, $\C^\infty$ is a convex cone and $\C^\infty\subseteq\C$ with equality when $\C$ is a cone (as is the case e.g.\ under proportional transaction costs and conical constraints). We will give an explicit expression for $\C^\infty$ in terms of the market model $(S,D)$; see Corollary~\ref{cor:rec} below.

The accounting values can be bounded between the {\em super-} and {\em subhedging cost} defined by
\[
\pi^0_{\sup}(c) = \inf\{\alpha\in\reals\,|\,c-\alpha p^0\in\C\}\quad\text{and}\quad\pi_{\inf}^0(c) = \sup\{\alpha\in\reals\,|\,\alpha p^0-c\in\C\},
\]
respectively. Indeed, if $\pi^0_s(0)=0$, then, by \cite[Theorem~3.2]{pen14},
\[
\pi^0_{\inf}(c)\le\pi^0_l(c)\le\pi^0_s(c)\le\pi^0_{\sup}(c)
\]
with equalities throughout when $c-\bar\alpha p^0\in\C\cap(-\C)$ for some $\bar\alpha\in\reals$ and in this case, $\pi^0_s(c)=\bar\alpha$. Similarly, the indifference swap rates can be bounded by the {\em super-} and {\em subhedging swap rates} defined by
\[
\pi_{\sup}(p;c) = \inf\{\alpha\in\reals\,|\,c-\alpha p\in\C^\infty\}\quad\text{and}\quad\pi_{\inf}(p;c) = \sup\{\alpha\in\reals\,|\,\alpha p-c\in\C^\infty\},
\]
respectively. Indeed, if $\pi_s(\bar c,p;0)=0$, then, by \cite[Theorem~4.1]{pen14},
\[
\pi_{\inf}(p;c)\le\pi_l(\bar c,p;c)\le\pi_s(\bar c,p;c)\le\pi_{\sup}(p;c)
\]
with equalities throughout if $c-\bar\alpha p\in\C^\infty\cap(-\C^\infty)$ for some $\bar\alpha\in\reals$ and in this case, $\pi_s(\bar c,p;c)=\bar\alpha$. This last condition extends the classical condition of {\em replicability} (attainability) to nonlinear market models. In \cite[Section~2]{mad15}, elements of $\C^\infty\cap(-\C^\infty)$ are interpreted as ``liquid zero cost attainable claims''. Recall that in conical market models, $\C^\infty=\C$.

\section{Duality in optimal investment}\label{sec:ALM}

Our overall goal is to derive dual expressions for the optimum value function $\varphi$, the accounting value $\pi^0_s$ and the indifference swap rate $\pi_s(\bar c,p;\cdot)$, to analyze their properties and to relate them to more specific  instances of market models and duality theory that have appeared in the literature. We will use the functional analytic techniques of convex analysis where duality is derived from the notion of a {\em conjugate} of a convex function; see \cite{roc74}. More precisely, we apply the results of \cite{pen11c,pp12,bpp,per16}, where the general conjugate duality framework of Rockafellar was specialized to general convex stochastic optimization problems.

In order to apply conjugate duality in the setting of Section~\ref{sec:oicv}, we will assume from now on that $\M\subset L^0(\Omega,\F,P;\reals^{T+1})$ is in {\em separating duality} with another space $\Q\subset L^0(\Omega,\F,P;\reals^{T+1})$ of random sequences under the bilinear form
\[
\langle c,q\rangle := E(c\cdot q),
\]
where ``$\cdot$'' denotes the usual Euclidean inner product. We will also assume that both $\M$ and $Q$ are {\em decomposable} in the sense of \cite{roc68} and that they are closed under {\em adapted projections}. Decomposability means that $c\one_A+c'\one_{\Omega\setminus A}\in \M$ for every $c\in\M$, $c'\in L^\infty$ and $A\in\F$. The adapted projection of a $c=(c_t)_{t=0}^T\in\M$ is the $\FF$-adapted stochastic process $\ap c$ given by $\ap c_t:=E_tc_t$. Here and in what follows, $E_t$ denotes the conditional expectation with respect to $\F_t$. Recall that (see e.g.~\cite[Lemma~1]{bis73}) decomposability $\M$ and $\Q$ implies that $L^\infty\subseteq\M\subseteq L^1$ and that the pointwise maximum of two elements of $\M$ belongs to $\M$.

We will denote the adapted elements of $\M$ and $\Q$ by $\M^a$ and $\Q^a$, respectively. Since $\M$ and $\Q$ are assumed to be closed under adapted projections, $\M^a$ and $\Q^a$ are in separating duality under $(c,q)\mapsto\langle c,q\rangle$. The elements of $\Q^a$ can be interpreted as ``stochastic discount factors'' (or ``state price densities'') that extend the notion of a martingale density from classical market models. In markets without perfectly liquid cash (or other perfectly liquid numeraires), more general dual objects from $\Q^a$ are needed in order to bring out the time value of money. In a deterministic setting, the dual variables $q\in\Q^a$ represent the term structure of interest rates (prices of zero coupon bonds). In \cite[Theorem~3.4]{afp12} and \cite[Section~3]{mad15}, dual variables $q\in\Q^a$ are expressed as products of martingale densities and predictable discount processes; see also \cite[Section~3.2]{teh??}.

The convex conjugate of $\varphi:\M^a\to\ereals$ with respect to the pairing of $\M^a$ with $\Q^a$ is defined by
\[
\varphi^*(c) = \sup_{c\in\M^a}\{\langle c,q\rangle - \varphi(c)\}.
\]
The conjugate of a function on $\Q^a$ is defined analogously. If $\varphi$ is closed and proper, then $\varphi^{**}=\varphi$; see e.g.\ \cite[Theorem~5]{roc74}. In other words, we then have the dual representation
\[
\varphi(c) = \sup_{q\in\Q^a}\{\langle c,q\rangle - \varphi^*(q)\}.
\]
The maximization problem on the right is known as the {\em dual problem}. If $\varphi$ is not closed, then for some $c\in\M^a$ the dual optimum value is strictly less than $\varphi(c)$ (there is a ``duality gap''). The main result of this section gives an expression for the conjugate $\varphi^*$ in terms of the loss function $V$ and the market model $(S,D)$. Section~\ref{sec:cl} below gives sufficient conditions for the closedness of $\varphi$. 

Note that if $V$ is $P$-almost surely the indicator function of the nonpositive cone $\reals^{T+1}_-$, then the optimum value function $\varphi$ coincides with the indicator function $\delta_\C$ (of the set $\C$ of claims that can be superhedged without a cost) on $\M^a$ and its conjugate of $\varphi^*$ becomes the {\em support function}\footnote{Here one can think of $\C$ as the set of adapted claims that can be superhedged without a cost. Indeed, it is not difficult to show that the supremum is not affected if one restricts $c$ to be adapted.}
\[
\sigma_{\C}(q):=\sup_{c\in\C}\langle c,q\rangle.
\]
When $S$ is sublinear and $D$ is conical, the set $\C$ is a cone as well and $\sigma_\C$ becomes the indicator function of the {\em polar cone} 
\[
\C^* := \{q\in\Q^a\,|\,\langle c,q\rangle\le 0\ \forall c\in\C\}.
\]
In the classical perfectly liquid market model, the elements of $\C^*$ come out as the positive multiples of martingale density processes; see the remarks after Corollary~\ref{cor:support2}. Combined with the Kreps--Yan theorem, this gives a quick proof of the fundamental theorem of asset pricing.


The expression for $\varphi^*$ in Theorem~\ref{thm:varphi*} below applies to loss functions $V$ that exhibit risk aversion with respect to sequences of random payments.

\begin{assumption}\label{ass2}
$EV(\ap{c})\le EV(c)$ for all $c\in\M$.
\end{assumption}

Clearly, Assumption~\ref{ass2} holds also if $\F_t=\F$ for all $t$. It holds also if $V$ is of the form \eqref{separable} where each $V_t$ is a convex normal $\F_t$-integrand such that there exists $q\in\Q^a$ with $EV_t^*(q_t)<\infty$ for all $t$. Indeed, in that case, the inequality in Assumption~\ref{ass2} is simply Jensen's inequality for convex normal integrands; see e.g.\ \cite[Corollary~2.1]{pen11c}. 


We use the notation
\[
(\alpha S_t)(x,\omega):= 
\begin{cases}
  \alpha S_t(x,\omega)\quad&\text{if } \alpha >0\\
  \delta_{\cl\dom S_t(\cdot,\omega)}(x)\quad&\text{if }\alpha=0.
\end{cases}
\]
The definition in the case $\alpha=0$ is motivated by the fact that the function $\alpha S_t(\cdot,\omega)$ is the {\em epi-graphical limit} of $\alpha^\nu S_t(\cdot,\omega)$ as $\alpha^\nu\downto 0$; this follows from \cite[Proposition 7.29]{rw98} and the property that proper closed convex functions are uniformly bounded from below on bounded sets. By \cite[Proposition 14.44 and Proposition 14.46]{rw98}, $\alpha S$ is a convex normal integrand for every measurable $\alpha\ge 0$.

We will denote the space of adapted integrable $\reals^J$-valued processes by $\N^1$. In order to simplify the notation in the statements below, we fix a dummy variable $w_{T+1}\in L^1(\Omega,\F,P;\reals^J)$. Since $D_T\equiv\{0\}$, we have $\sigma_{D_T}\equiv 0$, so the value of $w_{T+1}$ does not affect any of the expressions. The proof of the following will be given in Section~\ref{sec:adapted}.

\begin{theorem}\label{thm:varphi*}
If $V$ satisfies Assumptions~\ref{ass1} and \ref{ass2} and if $S(x,\cdot)\in\M^a$ for all $x\in\reals^J$, then the conjugate of the optimum value function $\varphi$ of \eqref{alm} can be expressed as
\[
\varphi^*(q) = EV^*(q) + \inf_{w\in\N^1}E\left\{\sum_{t=0}^T\left[(q_tS_t)^*(w_t) + \sigma_{D_t}(E_t\Delta w_{t+1})\right]\right\},
\]
where the infimum is attained for every $q\in\Q^a$.
\end{theorem}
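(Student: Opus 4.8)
The plan is to compute $\varphi^*$ by writing $\varphi$ as an inf-projection (marginal function) of a jointly convex functional on a product space and then applying the conjugation rule for inf-projections together with an interchange of expectation and pointwise infimum. Recall from the excerpt that $\varphi(c)=\inf_{x\in\N_D}EV(S(\Delta x)+c)$. Introduce the decomposable space $\N$ of adapted strategies paired with $\N^1$ under $\langle x,w\rangle=E\sum_t x_t\cdot w_t$, and define on $\M^a\times\N$ the functional $F(c,x)=EV(S(\Delta x)+c)+\delta_{\N_D}(x)$, so that $\varphi(c)=\inf_x F(c,x)$. Then $\varphi^*(q)=\sup_c\{\langle c,q\rangle-\inf_x F(c,x)\}=F^*(q,0)$, the conjugate being taken in the first argument at the point $0$ of the second. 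So the task reduces to computing $F^*(q,0)$.

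The main computation then splits into two pieces. First, I would handle the constraint term: using that $\N_D=\{x\in\N\mid x_t\in D_t\ \forall t\}$ and that $D_t$ is $\F_t$-measurable, the conjugate of $\delta_{\N_D}$ (as a function of $x$ with a perturbation dual variable $u$) is $E\sum_t\sigma_{D_t}(E_t u_t)$ by the standard interchange rule for decomposable spaces (\cite{roc68}, and as used in \cite{pen11c,pp12}) — the conditional expectation $E_t$ appears precisely because $D_t$ is only $\F_t$-measurable while $u$ ranges over all of $\N^1$. Second, for the objective $EV(S(\Delta x)+c)$, I would introduce an auxiliary variable for $\Delta x$, pass the conjugation through the expectation (again by the interchange rule, valid because $V\circ S$ is a normal integrand with the integrability properties guaranteed by Assumption~\ref{ass1} and Corollary~\ref{cor:vs}), and compute the pointwise conjugate. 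Pointwise, one is conjugating $(z,c)\mapsto V(S(z)+c)$ where $z=(z_t)$ and $S(z)=(S_t(z_t))_t$; the conjugate in $c$ alone gives $V^*(q)$, while the coupling through $S$ produces, after the change of variables accounting for $\Delta$ (Abel summation, so that the dual variable conjugate to $z_t$ is $\Delta w_{t+1}$ with $w$ the variable conjugate to $x$), the terms $(q_tS_t)^*(w_t)$. The notation $(\alpha S_t)^*$ with the $\alpha=0$ convention is exactly what is needed so the formula is correct when $q_t$ vanishes on a set of positive measure. Assembling the two pieces gives
\[
\varphi^*(q)=EV^*(q)+\inf_{w\in\N^1}E\sum_{t=0}^T\left[(q_tS_t)^*(w_t)+\sigma_{D_t}(E_t\Delta w_{t+1})\right],
\]
where $\Delta w_{t+1}=w_{t+1}-w_t$ and the dummy $w_{T+1}$ drops out because $\sigma_{D_T}\equiv0$.

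The hard part is the attainment of the infimum over $w\in\N^1$ for every $q\in\Q^a$ — i.e.\ that the dual is genuinely a minimum, not just an infimum. This is where Assumption~\ref{ass2} (the adapted-projection inequality $EV(\ap c)\le EV(c)$) and Assumption~\ref{ass1} (monotonicity and the coercivity-type condition that $V(\alpha c)>0$ eventually for nonzero $c\ge0$) must be used. My plan here is to invoke the general stochastic-optimization machinery of \cite{pen11c,pp12,bpp,per16}: the attainment is a lower-semicontinuity/compactness statement about the dual problem, and the cited framework gives closedness of the relevant value function and existence of dual minimizers under exactly such normal-integrand and recession conditions. Concretely, I would verify that the integrand $(z,x)\mapsto(q_tS_t)^*(\cdot)+\sigma_{D_t}(\cdot)$ has the recession properties forcing any minimizing sequence to be bounded in $L^1$ (using $V(0)=0$, $S_t(0)=0$, $0\in D_t$, and the growth condition), then extract a weakly convergent subsequence and pass to the limit by normal-integrand lower semicontinuity; Assumption~\ref{ass2} is what lets one reduce to adapted $w$ (replacing $w$ by its adapted projection does not increase the objective, since $E_t\Delta w_{t+1}$ already only sees the adapted part). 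I expect that the bulk of the work, and the delicate estimates, are deferred to Section~\ref{sec:adapted} as the excerpt states, and that the present sketch is the skeleton of that argument.
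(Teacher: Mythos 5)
Your derivation of the formula itself follows essentially the paper's route: write $\varphi$ as an inf-projection, conjugate through the expectation by the interchange rule, handle the measurability of $D_t$ via conditional expectations, and use Assumption~\ref{ass2} to pass to adapted dual variables by Jensen's inequality for normal integrands. The paper organizes this through the physical-delivery value function $\bar\varphi(\theta,c)$ of Section~\ref{sec:physical} and its Lagrangian, which also resolves the issue you elide: $\N$ paired with $\N^1$ under $E\sum_t x_t\cdot w_t$ is not a well-defined duality (strategies in $\N$ need not be integrable), so the paper first works with bounded strategies and invokes \cite[Theorem~2]{bpp} to show the conjugates agree. That is a technical but fixable point.

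The genuine gap is the attainment of the infimum over $w\in\N^1$. Your plan --- show minimizing sequences are bounded in $L^1$, extract a weakly convergent subsequence, pass to the limit by lower semicontinuity --- fails twice over. First, norm-bounded subsets of $L^1$ are not weakly relatively compact; you would need uniform integrability, which nothing here provides. Second, and more fundamentally, there is no coercivity forcing minimizing sequences to be bounded: when $\dom S_t^*(\cdot,\omega)$ is bounded (e.g.\ proportional transaction costs), $(q_tS_t)^*$ grows only linearly, and feasible $w$ can escape to infinity while the objective stays bounded. The attainment in the theorem is not a compactness statement about the dual integrand at all; it is a \emph{dual attainment} statement, i.e.\ nonemptiness of a subdifferential of a perturbed primal value function. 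Concretely, the paper sets $\gamma_q(\theta):=\inf_{c}\{\bar\varphi_a(\theta,c)-\langle c,q\rangle\}$ on $\N^\infty=L^\infty$, observes that $\gamma_q^*(w)=\bar\varphi_a^*(w,q)$ so that the minimizers over $w$ are exactly the elements of $\partial\gamma_q(0)$, and then shows $\gamma_q$ is bounded above near $\theta=0$ by the Mackey-continuous function $\theta\mapsto E\sum_t q_tS_t(\theta_t)$; subdifferentiability follows from \cite[Theorem~17]{roc74}. This is precisely where the hypothesis $S(x,\cdot)\in\M^a$ does its real work --- it is not a mere integrability technicality as your sketch suggests, but the source of the Mackey continuity that yields the dual minimizer. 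Without introducing the perturbation in the $\theta$ direction (or an equivalent device), your framework has no handle on attainment over $w$.
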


The assumption that $S(x,\cdot)\in\M^a$ for $x\in\reals^J$ is a generalization of the assumption $S(x,\cdot)\in L^1$ made in \cite{pen11}. Recall that $\M^a\subseteq L^1$. We will see in Section~\ref{sec:oc} that a stronger assumption on $S$ may allow one to establish the existence of dual solutions and the necessity of optimality conditions for \eqref{alm}. The assumption that $S(x,\cdot)\in\M^a$ is close also to \cite[Assumption~3.1]{bc11} which, in the perfectly liquid case, asks the price process to belong ``locally'' to an Orlicz space associated with the utility function being optimized. 

When $V(\cdot,\omega)=\delta_{\reals^{T+1}_-}$ for $P$-almost all $\omega$, Theorem~\ref{thm:varphi*} gives the following is a variant of \cite[Lemma~A.1]{pen11}.

\begin{corollary}\label{cor:support}
If $S(x,\cdot)\in\M^a$ for all $x\in\reals^J$, then
\[
\sigma_{\C}(q) = \inf_{w\in\N^1}E\left\{\sum_{t=0}^T[(q_tS_t)^*(w_t) + \sigma_{D_t}(E_t\Delta w_{t+1})]\right\}
\]
for $q\in\M^a_+$ while $\sigma_{\C}(q)=+\infty$ otherwise. Moreover, the infimum is attained for every $q\in\M^a$. 
\end{corollary}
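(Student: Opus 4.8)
The plan is to obtain Corollary~\ref{cor:support} as the special case of Theorem~\ref{thm:varphi*} in which $V(\cdot,\omega)=\delta_{\reals^{T+1}_-}$ for $P$-almost every $\omega$. First I would check that this $V$ satisfies Assumptions~\ref{ass1} and~\ref{ass2}. Convexity, monotonicity with respect to the componentwise order, and $V(0,\omega)=0$ are immediate, and for any nonzero $c\in\reals^{T+1}_+$ we have $V(\alpha c,\omega)=+\infty>0$ for every $\alpha>0$, so Assumption~\ref{ass1} holds. Assumption~\ref{ass2} reduces to the claim that $\ap c\in\reals^{T+1}_-$ a.s.\ whenever $c\in\reals^{T+1}_-$ a.s., which follows since $\ap c_t=E_tc_t\le 0$ when $c_t\le 0$; thus $EV(\ap c)\le EV(c)$ (both sides being $0$ or $+\infty$). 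Hence Theorem~\ref{thm:varphi*} applies.

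Next I would identify the two pieces of the resulting formula. As already noted in the excerpt (in the remarks preceding Assumption~\ref{ass2}), when $V=\delta_{\reals^{T+1}_-}$ a.s.\ the optimum value function $\varphi$ coincides with $\delta_\C$ on $\M^a$, so $\varphi^*=\sigma_\C$. For the first term of the conjugate, the conjugate of $\delta_{\reals^{T+1}_-}$ is the support function of $\reals^{T+1}_-$, which is $\delta_{\reals^{T+1}_+}$; therefore $V^*(q,\omega)=\delta_{\reals^{T+1}_+}(q)$ pointwise, and $EV^*(q)$ equals $0$ if $q\in\M^a_+$ (i.e.\ $q\ge 0$ a.s.) and $+\infty$ otherwise. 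Substituting into the expression of Theorem~\ref{thm:varphi*} gives exactly
\[
\sigma_\C(q)=\inf_{w\in\N^1}E\left\{\sum_{t=0}^T[(q_tS_t)^*(w_t)+\sigma_{D_t}(E_t\Delta w_{t+1})]\right\}
\]
for $q\in\M^a_+$ and $\sigma_\C(q)=+\infty$ otherwise, with the infimum attained for every $q\in\M^a$ by the attainment clause of Theorem~\ref{thm:varphi*}.

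The only genuine point requiring care — and the step I expect to be the main obstacle — is the identification $\varphi=\delta_\C$ on the adapted space $\M^a$, together with the footnote's assertion that one may restrict the claims in $\sigma_\C$ to be adapted without changing the supremum. For the first: if $c\in\C\cap\M^a$ then there is $x\in\N_D$ with $S(\Delta x)+c\le 0$, so the objective $EV(S(\Delta x)+c)=0$ and $\varphi(c)\le 0$; conversely $\varphi(c)\le 0$ forces $S(\Delta x)+c\le 0$ a.s.\ for some feasible $x$ (since $V$ only takes values $0$ and $+\infty$), i.e.\ $c\in\C$; and $\varphi(c)\in\{0,+\infty\}$ throughout, so $\varphi=\delta_{\C\cap\M^a}$ on $\M^a$, whence $\varphi^*=\sigma_{\C\cap\M^a}=\sigma_\C$ using that the supremum over $c\in\C$ of $\langle c,q\rangle$ is unaffected by restricting to adapted $c$ — this last equality is where the adaptedness of $q$ and closedness of $\M$ under adapted projections enter, via $\langle c,q\rangle=\langle\ap c,q\rangle$ for adapted $q$ and the fact that $c\in\C$ implies $\ap c\in\C$ (an instance of Assumption~\ref{ass2}-type reasoning for this $V$, or of Corollary~\ref{cor:vs}/the adapted-projection results quoted in the appendix). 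Once this identification is in place, the corollary is just the substitution described above.
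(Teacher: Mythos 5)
Your proposal is correct and follows essentially the same route as the paper: specialize Theorem~\ref{thm:varphi*} to $V=\delta_{\reals^{T+1}_-}$, note $\varphi=\delta_\C$ so $\varphi^*=\sigma_\C$, and use $V^*=\delta_{\reals^{T+1}_+}$ so that $EV^*(q)$ vanishes exactly on $\M^a_+$. The paper's proof is terser (it asserts the assumptions "clearly hold" and relies on the footnote for the adapted-versus-nonadapted identification of $\sigma_\C$), whereas you spell out those verifications explicitly; the extra detail is correct and does not change the argument.
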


\begin{proof}
When $V(\cdot,\omega)=\delta_{\reals^{T+1}_-}$, Assumptions~\ref{ass1} and \ref{ass2} clearly hold, and we have $\varphi=\delta_{\C}$ so that $\varphi^*=\sigma_{\C}$. Since $V^*(\cdot,\omega)=\delta_{\reals^{T+1}_+}$, Theorem~\ref{thm:varphi*} gives
\[
\varphi^*(q) = \inf_{w\in\N^1}E\left\{\sum_{t=0}^T[(q_tS_t)^*(w_t) + \sigma_{D_t}(E_t\Delta w_{t+1})]\right\}
\]
for $q\in\M^a_+$ and $\varphi^*(q)=+\infty$ for $q\notin\M^a_+$.
\end{proof}

When $S$ is sublinear and $D$ is conical, we have $S^*_t=\delta_{\dom S^*_t}$ and $\sigma_{D_t}=\delta_{D^*_t}$. If in addition, $S(x,\cdot)\in\M^a$ for all $x\in\reals^J$, so that $\dom S_t=\reals^J$ and $(q_t S_t)^*(w)=\delta(w\,|\,q_t\dom S^*_t)$, Corollary~\ref{cor:support} gives the following expression for the polar cone
\[
\C^* =\{q\in\Q^a_+\,|\, \exists w\in\N^1:\ w_t\in q_t\dom S^*_t,\ E_t[\Delta w_{t+1}]\in D_t^*\ P\text{-a.s.}\}.
\]
The polar cone of $\C$ can also be written as\footnote{Here we use the fact that every selector $w$ of $\dom S^*_t$ is integrable. Indeed, by Fenchel inequality, $|w|\le \sup_{|x|\le 1}S_t(x)+S^*_t(w)$, where the supremum is integrable since $S(x,\cdot)\in\M^a$.}
\[
\C^* =\{q\in\Q^a_+\,|\, \exists s\in\N^1:\ s_t\in\dom S^*_t,\ E_t[\Delta(q_{t+1}s_{t+1})]\in D_t^*\ P\text{-a.s.}\}.
\]
In unconstrained linear models with $S_t(x,\omega)=s_t(\omega)\cdot x$ and $D_t(\omega)=\reals^J$, we have $\dom S^*_t=\{s_t\}$ and $D_t^*(\omega)=\{0\}$ so that
\[
\C^* = \{q\in\Q^a_+\,|\, \text{$qs$ is a martingale}\}
\]
in consistency with \cite[Definition~2.2]{teh??} in the context of numeraire-free linear models. In unconstrained models with bid-ask spreads, $\dom S^*_t$ is the cube $[s^-_t,s^+_t]$ and
\[
\C^* = \{q\in\Q^a_+\,|\, \exists s\in\N:\ s^-_t\le s_t\le s^+_t,\ qs\ \text{is a martingale}\}.
\]
The processes $s$ above correspond to ``shadow prices'' in the sense of \cite{cms14}; see Section~\ref{sec:oc} for more details on this. 
When short selling is prohibited, i.e.\ when $D_t=\reals^J_+$, the condition $E_t[\Delta(q_{t+1}s_{t+1})]\in D_t^*$ means that $qs$ is a supermartingale. In models with perfectly liquid cash (see Example~\ref{ex:liquid}), the elements of $\C^*$ can be expressed as
\[
\C^* = \{q\in\Q^a_+\,|\, \exists Q\in\P,\ \alpha\ge 0: q_t=\alpha E_t\frac{dQ}{dP}\}
\]
where $\P = \{Q\ll P\,|\,\exists\tilde s\in\tilde\N:\ \tilde s_t\in\dom\tilde S^*_t,\ E^Q_t\Delta\tilde s_{t+1}\in\tilde D^*_t\ Q\text{-a.s.}\}$ and $\tilde\N$ denotes the set of adapted $\reals^{J\setminus\{0\}}$-valued processes. In unconstrained models, $\P$ is the set of absolutely continuous probability measures $Q$ that admit martingale selectors of $\dom S^*_t$. In the classical linear model with $S_t(x,\omega)=s_t(\omega)\cdot x$ we simply have $\dom S_t^*=\{s_t\}$ so the set $\P$ becomes the set of absolutely continuous martingale measures for $s$.

Using Corollary~\ref{cor:support}, we can write Theorem~\ref{thm:varphi*} more compactly as follows.

\begin{corollary}\label{cor:support2}
Under the assumptions of Theorem~\ref{thm:varphi*}, $\varphi^* = EV^* + \sigma_{\C}$.
\end{corollary}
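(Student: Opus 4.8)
\emph{Proof proposal.} The plan is to obtain the identity by a direct substitution: the bracketed infimum appearing in Theorem~\ref{thm:varphi*} is, term for term, the expression for the support function $\sigma_\C$ furnished by Corollary~\ref{cor:support}. Both results are available under the hypotheses at hand, since they require only $S(x,\cdot)\in\M^a$ for all $x\in\reals^J$ together with Assumptions~\ref{ass1} and~\ref{ass2}, the latter being automatic for the degenerate loss function $V=\delta_{\reals^{T+1}_-}$ used in Corollary~\ref{cor:support}.

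Concretely, I would introduce
\[
I(q):=\inf_{w\in\N^1}E\left\{\sum_{t=0}^T\left[(q_tS_t)^*(w_t)+\sigma_{D_t}(E_t\Delta w_{t+1})\right]\right\},\qquad q\in\Q^a,
\]
so that Theorem~\ref{thm:varphi*} reads $\varphi^*=EV^*+I$ on $\Q^a$. By Corollary~\ref{cor:support}, $I(q)=\sigma_\C(q)$ for every $q$ in the nonnegative cone, while $I(q)=+\infty$ off it; since $\sigma_\C(q)=+\infty$ precisely off that cone (because $\C$ contains $\C-\reals^{T+1}_+$, hence its support function is finite only at nonnegative $q$), we conclude $I=\sigma_\C$ identically on $\Q^a$. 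Substituting into $\varphi^*=EV^*+I$ gives $\varphi^*=EV^*+\sigma_\C$.

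The one point I would take care to spell out is that the right-hand side is everywhere well defined, i.e.\ that $EV^*(q)+\sigma_\C(q)$ never has the indeterminate form $(+\infty)+(-\infty)$. This follows from the normalizations built into the model: $V(0,\cdot)=0$ forces $V^*\ge 0$ pointwise, so $EV^*(q)\in[0,+\infty]$; and $S_t(0,\cdot)=0$ together with $0\in D_t$ forces the integrands $(q_tS_t)^*(\cdot)$ and $\sigma_{D_t}(\cdot)$ to be nonnegative, so $I(q)=\sigma_\C(q)\in[0,+\infty]$. Hence the sum is unambiguous and the asserted equality holds in $\ereals$ for every $q\in\Q^a$. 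I do not expect any genuine obstacle here: all the analytic content has already been expended in Theorem~\ref{thm:varphi*} and Corollary~\ref{cor:support}, and what remains is bookkeeping — chiefly matching the nonnegativity descriptions of the dual variable used in the two statements on the set where both sides are finite.
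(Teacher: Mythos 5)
Your overall route is the paper's: the paper gives no separate argument for this corollary beyond observing that the infimum in Theorem~\ref{thm:varphi*} is the expression that Corollary~\ref{cor:support} identifies with $\sigma_\C$. However, one step in your write-up is wrong as stated. You claim that ``by Corollary~\ref{cor:support}, $I(q)=+\infty$ off the nonnegative cone.'' Corollary~\ref{cor:support} does not say this: it says $\sigma_\C(q)=+\infty$ off the cone, while the identity $\sigma_\C(q)=I(q)$ is asserted only for nonnegative $q$. And the claim is in fact false in general: in the frictionless unconstrained model $S_t(x)=s_t\cdot x$, $D_t=\reals^J$, one has $(q_tS_t)^*(w_t)=\delta_{\{q_ts_t\}}(w_t)$ and $\sigma_{D_t}=\delta_{\{0\}}$, so $I(q)=0$ whenever $qs$ is a martingale --- e.g.\ for $q\equiv-1$ and $s$ a martingale --- even though such $q$ lies outside $\Q^a_+$. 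So ``$I=\sigma_\C$ identically on $\Q^a$'' is not true, and the substitution as you present it is not justified off the nonnegative cone.

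The conclusion nevertheless holds, and the repair is one line which you almost have. Since $V(\cdot,\omega)$ is nondecreasing with $V(0,\omega)=0$ (Assumption~\ref{ass1}), one gets $V^*(\cdot,\omega)\ge 0$ and $\dom V^*(\cdot,\omega)\subseteq\reals^{T+1}_+$; with the paper's convention for expectations this forces $EV^*(q)=+\infty$ for every $q\notin\Q^a_+$. Combining this with the nonnegativity of $I$ and of $\sigma_\C$ --- which you correctly establish from $S_t(0,\cdot)=0$ and $0\in D_t$ --- both $EV^*(q)+I(q)$ and $EV^*(q)+\sigma_\C(q)$ equal $+\infty$ for $q\notin\Q^a_+$, irrespective of whether $I(q)$ and $\sigma_\C(q)$ agree there. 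On $\Q^a_+$, Corollary~\ref{cor:support} gives $I(q)=\sigma_\C(q)$ directly, and Theorem~\ref{thm:varphi*} then yields $\varphi^*=EV^*+\sigma_\C$ everywhere. In short: the case split should be driven by where $EV^*$ is infinite, not by an alleged identity between $I$ and $\sigma_\C$ on all of $\Q^a$.
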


Corollary~\ref{cor:support2} was given in \cite[Lemma~3]{pen14b} under more restrictive assumptions on the loss function. Moreover, the proof of \cite[Lemma~3]{pen14b} was based on \cite[Theorem~2.2]{pen11c}, which suffers from errors that were corrected in \cite{bpp}; see \cite{pen16}.

\section{Closedness of the value function}\label{sec:cl}

This section gives conditions for the closedness of $\varphi$ and thus, the validity of the dual representation $\varphi=\varphi^{**}$. The closedness conditions will be given in terms of asymptotic behavior of the market model and the loss function. The asymptotic behavior of a convex function $h$ is described by its {\em recession function} $h^\infty$ which can be expressed as
\[
h^\infty(x)=\sup_{\alpha>0}\frac{h(\alpha x+\bar x)-h(\bar x)}{\alpha},
\]
where the supremum is independent of the choice of $\bar x\in\dom h$; see \cite[theorem~8.5]{roc70a}. The recession function of a convex function is convex and positively homogeneous. 


Given a market model $(S,D)$, the function $S_t^\infty(\cdot,\omega):=S_t(\cdot,\omega)^\infty$ is a convex $\F_t$-integrand and $D_t^\infty(\omega):=D_t(\omega)^\infty$ is an $\F_t$-measurable set; see \cite[Exercises~14.54 and 14.21]{rw98}. This way, every $(S,D)$ gives rise to a conical market model $(S^\infty,D^\infty)$. Similarly, a convex loss function $V$ gives rise to a sublinear loss function $V^\infty(\cdot,\omega):=V(\cdot,\omega)^\infty$. Note that the growth properties in Assumption~\ref{ass1} mean that $V^\infty(c,\omega)\le 0$ for all $c\in\reals^{T+1}_-$ and $\{c\in\reals^{T+1}_+\,|\, V^\infty(c,\omega)\le 0\}=\{0\}$. Our closedness result requires the following.

\begin{assumption}\label{ass3}
$\{x\in\N_{D^\infty}\,|\, V^\infty(S^\infty(\Delta x))\le 0\}$ is a linear space.
\end{assumption}

If the loss function $V$ satisfies
\begin{equation}\label{assV}
\{c\in\reals^{T+1}\,|\,V^\infty(c,\omega)\le 0\}=\reals^{T+1}_-,
\end{equation}
then Assumption~\ref{ass3} means that $\{x\in\N_{D^\infty}\,|\, S^\infty(\Delta x)\le 0\}$ is a linear space. This generalizes the classical no-arbitrage condition from perfectly liquid market models to illiquid ones. In particular, when $D\equiv\reals^J$ (no portfolio constraints), the linearity condition becomes a version of the ``robust no-arbitrage condition'' introduced in \cite{sch4} for the currency market model of \cite{kab99}; see \cite{pen12} for the details and further examples that go beyond no-arbitrage conditions. The linearity of $\{x\in\N_{D^\infty}\,|\, S^\infty(\Delta x)\le 0\}$ can be seen as a relaxation of the condition of ``efficient friction'', which, in the present setting, would require this set to reduce to the origin; see \cite{bcr15} for a review of the efficient friction condition and an extension to a models with dividend payments.

Condition \eqref{assV} clearly implies Assumption~\ref{ass1}. It holds in particular if $V$ is of the form \eqref{separable}, where $V_t(\cdot,\omega)$ are univariate nonconstant loss functions with $V_t^\infty\ge 0$. Indeed, since $V_t$ are nondecreasing, we have $V_t^\infty\le 0$ on $\reals_-$ while the nonconstancy implies $V_t^\infty(c_t,\omega)>0$ for $c_t>0$; see \cite[Corollary~8.6.1]{roc70a}. 
Condition~\eqref{assV} certainly holds if $V_t$ are differentiable and satisfy the Inada conditions
\[
\lim_{c\downto-\infty}V_t'(c,\omega)=0\quad\text{and}\quad\lim_{c\upto\infty}V_t'(c,\omega)=+\infty
\]
since then, $V_t^\infty\equiv\delta_{\reals_-}$.

We will need one more assumption on the loss functions.

\begin{assumption}\label{scale}
There exists $\lambda\ne 1$ such that
\[
\lambda\dom EV^*\subseteq \dom EV^*.
\]
\end{assumption}

Assumption~\ref{scale} holds, in particular, if $V$ is bounded from below by an integrable function since then $0\in\dom EV^*$. Lemma~\ref{lem:rrae} below gives more general conditions. The conditions extend the {\em asymptotic elasticity} conditions from \cite{ks99} and \cite{sch1} to multivariate, random and possibly nonsmooth loss functions. For a univariate loss function $v$, the conditions can be stated as
\begin{align}
\exists \lambda>1,\, \bar y\in\dom Ev^*,\, C\ge 0:\ v^*(\lambda y)&\le Cv^*(y)\quad\forall\, y\ge \bar y,\tag{RAE$_+$}\label{rae+}\\
\exists  \lambda\in(0,1),\, \bar y\in\dom Ev^*,\, C>0:\ v^*(\lambda y)&\le Cv^*(y)\quad\forall\, y\le\bar y \tag{RAE$_-$}\label{rae-}.
\end{align}
Various reformulations of the asymptotic elasticity conditions given for deterministic functions in \cite{ks99,sch1} are extended to random and nonsmooth loss functions in the appendix. The following lemma gives multivariate extensions.

\begin{lemma}\label{lem:rrae}
If $v^*(\eta,\omega):=V^*(\eta y(\omega),\omega)$ satisfies \eqref{rae+} for every $y\in\dom EV^*$, then
\[
\lambda\dom EV^*\subseteq\dom EV^*\quad\forall\lambda\ge 1.
\]
If $v^*(\eta,\omega)$ satisfies \eqref{rae-} for every $y\in\dom EV^*$, then
\[
\lambda\dom EV^*\subseteq\dom EV^*\quad\forall\lambda\in(0,1].
\]    
\end{lemma}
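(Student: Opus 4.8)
The plan is to reduce the multivariate assertion to a one-dimensional statement about the restriction of $V^*$ along the ray through a fixed $y\in\dom EV^*$, and then to finish with a short iteration. So fix $y\in\dom EV^*$ and consider $\eta\mapsto v^*(\eta,\omega):=V^*(\eta y(\omega),\omega)$; this is convex in $\eta$, being a convex function precomposed with a linear map, the hypothesis $EV^*(y)<\infty$ gives $1\in\dom Ev^*$, and for every $\lambda>0$ one has $\lambda y\in\dom EV^*$ if and only if $\lambda\in\dom Ev^*$. Since $\dom Ev^*$ is an interval and $\dom EV^*$ is convex (both being domains of convex functions), it therefore suffices to prove that $[1,\infty)\subseteq\dom Ev^*$ under \eqref{rae+} and that $(0,1]\subseteq\dom Ev^*$ under \eqref{rae-}.

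For the first implication I would invoke \eqref{rae+} for this particular $y$: there are $\lambda_0>1$, $\bar\eta\in\dom Ev^*$ and $C\ge0$ with $v^*(\lambda_0\eta,\omega)\le Cv^*(\eta,\omega)$ for all $\eta\ge\bar\eta$ and $P$-a.e.\ $\omega$. The key passage is from this pointwise bound to a statement about domains: since $Ev^*(\eta)<\infty$ means precisely that $v^*(\eta,\cdot)^+$ is integrable and $C\ge0$, the bound yields $v^*(\lambda_0\eta,\cdot)^+\le Cv^*(\eta,\cdot)^+$ whenever $\eta\ge\bar\eta$, so that $\eta\in\dom Ev^*$ with $\eta\ge\bar\eta$ forces $\lambda_0\eta\in\dom Ev^*$. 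Iterating from $\bar\eta\in\dom Ev^*$ produces $\lambda_0^n\bar\eta\in\dom Ev^*$ for all $n\in\naturals$, and since $\lambda_0>1$ these points increase to $+\infty$; convexity of the interval $\dom Ev^*$ then gives $[\bar\eta,\infty)\subseteq\dom Ev^*$, and combining this with $1\in\dom Ev^*$ gives $[1,\infty)\subseteq\dom Ev^*$. Hence $\lambda y\in\dom EV^*$ for every $\lambda\ge1$.

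The case of \eqref{rae-} is symmetric. With $\lambda_0\in(0,1)$, $\bar\eta\in\dom Ev^*$ and $C>0$ such that $v^*(\lambda_0\eta,\omega)\le Cv^*(\eta,\omega)$ for all $\eta\le\bar\eta$, the same positive-part argument shows that $\eta\in\dom Ev^*$ with $\eta\le\bar\eta$ forces $\lambda_0\eta\in\dom Ev^*$; iterating from $\bar\eta$ produces $\lambda_0^n\bar\eta\in\dom Ev^*$ with $\lambda_0^n\bar\eta\downto0$, so $(0,\bar\eta]\subseteq\dom Ev^*$ by convexity, and then $(0,1]\subseteq\dom Ev^*$ after using $1\in\dom Ev^*$. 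This gives $\lambda y\in\dom EV^*$ for all $\lambda\in(0,1]$.

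I expect no serious obstacle here; the only points needing care are bookkeeping. Because $V^*$ (hence $v^*$) need not be minorised by an integrable function, membership in $\dom Ev^*$ controls only the positive part of $v^*$, which is why the inequalities must be read at the level of $v^*(\cdot,\cdot)^+$; this is exactly where the sign condition $C\ge0$ (resp.\ $C>0$) enters. Also, the threshold $\bar\eta$ supplied by \eqref{rae+} need not lie below $1$ (nor, in the \eqref{rae-} case, above it), so convexity of $\dom Ev^*$ together with the point $1\in\dom Ev^*$ must be used to pass from $[\bar\eta,\infty)$ to $[1,\infty)$. The conceptual crux is simply the reduction to the one-dimensional restriction along rays; everything after that is the elementary iteration above.
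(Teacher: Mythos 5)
Your reduction to the ray through $y$, and the bookkeeping that membership in $\dom Ev^*$ is a statement about integrability of positive parts (so that the constant $C\ge 0$ can be pulled through the inequality), are both correct and are in fact how the paper proceeds: it delegates the one-dimensional statement to Lemma~\ref{lem:scale} in the appendix. The gap is in how you use the threshold. In \eqref{rae+} the threshold $\bar y$ is an element of $\dom Ev^*$, i.e.\ a \emph{random variable}, and the inequality is only required for $\eta\ge\bar y(\omega)$; this generality is not incidental --- the Remark following the lemma verifies the hypothesis for $v^*$ with $\bar y$ replaced by $\one_{\{y>0\}}\bar y/y$, which is genuinely random and typically essentially unbounded. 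Your argument treats $\bar\eta$ as a scalar: you iterate $\bar\eta\mapsto\lambda_0\bar\eta$ and then invoke convexity of the ``interval'' $\dom Ev^*$ together with $1\in\dom Ev^*$ to capture $[1,\infty)$. With a random $\bar\eta$ this breaks exactly at the decisive point: there need be no constant $\eta$ with $\eta\ge\bar\eta$ a.s., and knowing $1\in\dom Ev^*$ and $\lambda_0^n\bar\eta\in\dom Ev^*$ does not place the \emph{constant} $\lambda_0$ in $\dom Ev^*$, because $\lambda_0$ is not a convex combination of $1$ and the random variables $\lambda_0^n\bar\eta$. (There is also a degenerate failure of ``$\lambda_0^n\bar\eta$ increases to $+\infty$'' when $\bar\eta=0$, but that case is harmless.)

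The missing ingredient is an $\omega$-wise case distinction, which is precisely the content of Lemma~\ref{lem:scale}: to show that $\lambda_0\eta\in\dom Ev^*$ for a constant $\eta\in\dom Ev^*$ with $\eta\ge0$, write
\[
v^*(\lambda_0\eta)=\one_{\{\eta\ge\bar\eta\}}v^*(\lambda_0\eta)+\one_{\{\eta<\bar\eta\}}v^*(\lambda_0\eta),
\]
bound the first term by $Cv^*(\eta)$ using \eqref{rae+}, and bound the second by $\max\{v^*(\eta),v^*(\lambda_0\bar\eta)\}\le\max\{v^*(\eta),Cv^*(\bar\eta)\}$, using convexity of $v^*(\cdot,\omega)$ on the segment from $\eta$ to $\lambda_0\bar\eta(\omega)$, which contains $\lambda_0\eta$ when $0\le\eta\le\bar\eta(\omega)$. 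Both bounds have integrable positive parts because $\eta,\bar\eta\in\dom Ev^*$, so $\lambda_0\eta\in\dom Ev^*$; your iteration and interpolation then go through verbatim. The same repair applies to the \eqref{rae-} case with the roles of $\{\eta\le\bar\eta\}$ and $\{\eta>\bar\eta\}$ exchanged.
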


\begin{proof}
Given $y\in\dom EV^*$, the normal integrand $h=v^*$ satisfies in both cases the assumptions of Lemma~\ref{lem:scale} in the appendix with $\one\in\dom Eh$.
\end{proof}

\begin{remark}
In the univariate case, the radial conditions in Lemma~\ref{lem:rrae} are satisfied by loss functions that satisfy the corresponding asymptotic elasticity condition. Indeed, if $V^*$ satisfies \eqref{rae+} and $y\in\dom EV^*$, then $v^*(\eta,\omega):=V^*(\eta y(\omega),\omega)$ satisfies \eqref{rae+} with the same $\lambda$ and $C$ but with $\bar y$ replaced by $\one_{\{y>0\}}\bar y/y$. Similarly for \eqref{rae-}.
\end{remark}

We are now ready to state the main result of this section. Besides closedness of the optimum value function $\varphi$ and the existence of optimal trading strategies, it gives an explicit expression for the recession function of $\varphi$. These results turn out to be useful in the analysis of valuations of contingent claims in Section~\ref{sec:val}. The proof of the following will be given in Section~\ref{sec:adapted}.

\begin{theorem}\label{thm:cl}
Let Assumptions~\ref{ass1}, \ref{ass3} and \ref{scale} hold and assume that
\[
EV^*(q) + \inf_{w\in\N^1}E\left\{\sum_{t=0}^T[(q_tS_t)^*(w_t) + \sigma_{D_t}(E_t\Delta w_{t+1})]\right\}<\infty
\]
for some $q\in\Q$. Then $\varphi$ is $\sigma(\M^a,\Q^a)$-closed in $\M^a$, the infimum in \eqref{alm} is attained for all $c\in\M^a$ and 
\begin{align*}
\varphi^\infty(c) &= \inf_{x\in\N_D}EV^\infty(S^\infty(\Delta x)+c).
\end{align*}
\end{theorem}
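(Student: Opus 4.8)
The plan is to obtain all three conclusions from the general duality machinery for convex stochastic optimization developed in \cite{pen11c,pp12,bpp,per16}, applied to a suitable reformulation of \eqref{alm}. First I would rewrite \eqref{alm} as an unconstrained convex stochastic optimization problem over the adapted strategy space $\N$ by absorbing the constraint $x\in\N_D$ and the implicit constraint $\Delta x_t\in\dom S_t$ into the (extended-real-valued) integral functional $x\mapsto EV(S(\Delta x)+c)+\sum_t E\delta_{D_t}(x_t)$. The dependence on $c$ makes this a parametrized family, so $\varphi$ is exactly the associated optimum value function; the key structural facts are that the integrand is a convex normal integrand (Corollary~\ref{cor:vs}) and that $\varphi$ is convex on $\M^a$. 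The theory of \cite{bpp,per16} then says that $\varphi$ is $\sigma(\M^a,\Q^a)$-closed, with the infimum attained, provided a suitable \emph{lower bound / inf-compactness} condition holds: roughly, that the only directions of recession of the objective along which it stays $\le 0$ form a linear space (on which the objective is moreover constant). This is the role of Assumption~\ref{ass3}, which is precisely the statement that the recession problem $\inf_{x\in\N_{D^\infty}}EV^\infty(S^\infty(\Delta x))$ has its ``zero-level argmin'' equal to a linear space; here one uses that recession of $EV(S(\Delta\cdot)+c)$ in the $x$-variable is governed by $EV^\infty(S^\infty(\Delta\cdot))$ via \cite[Exercises~14.54, 14.21]{rw98} and the interchange of recession with the integral functional (Corollary~\ref{cor:vs} / the results of \cite{pen11c}).

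The subtlety is that closedness of a value function under a linearity-of-recession-cone hypothesis typically still requires a \emph{dual feasibility} ingredient to rule out the value function being identically $-\infty$ on part of its domain or failing to be proper; this is exactly why the hypothesis ``$EV^*(q)+\inf_w E\{\cdots\}<\infty$ for some $q\in\Q$'' is imposed. By Theorem~\ref{thm:varphi*}, that quantity is $\varphi^*(q)$, so the hypothesis says $\dom\varphi^*\ne\emptyset$, i.e.\ $\varphi$ is minorized by a continuous affine functional and hence (being convex) is proper once we know it is not identically $-\infty$. Combined with the linearity-of-recession condition, this is the standard package under which \cite[Theorem~2]{per16} (or \cite[Theorem~14]{bpp}) yields both closedness of $\varphi$ and attainment of the infimum in \eqref{alm} for every $c\in\M^a$; Assumption~\ref{scale} enters here to guarantee that $\dom EV^*$, and hence $\dom\varphi^*$, is large enough (it is used via Lemma~\ref{lem:rrae}-type scaling to propagate finiteness of $\varphi^*$ and to control the recession function of $EV$ itself, ensuring $EV^\infty$ is a genuine sublinear loss function rather than degenerate).

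For the recession-function formula, the plan is to invoke the general principle that for a closed proper convex value function arising from parametrized convex stochastic optimization, $\varphi^\infty$ is the value function of the \emph{recession problem}: $\varphi^\infty(c)=\inf_{x}\,(EV(S(\Delta\cdot)+\cdot)+\sum_t E\delta_{D_t})^\infty$ evaluated at direction $(x,c)$, minimized over $x$. One then identifies the recession of the objective: the recession of $c\mapsto EV(c)$ is $EV^\infty$ (interchange of recession and integration for normal integrands, using Assumption~\ref{scale} to ensure $EV$ is closed proper so $(EV)^\infty=E(V^\infty)$), the recession of the affine-composition $x\mapsto S(\Delta x)$ contributes $S^\infty(\Delta x)$, and $\delta_{D_t}^\infty=\delta_{D_t^\infty}$. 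Putting these together gives $\varphi^\infty(c)=\inf_{x\in\N_{D^\infty}}EV^\infty(S^\infty(\Delta x)+c)$, as claimed.

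The main obstacle I anticipate is the passage from the abstract closedness criterion of \cite{bpp,per16} to the concrete Assumption~\ref{ass3}: one must verify carefully that the ``recession cone of the feasible set along which the objective is bounded above'' that appears in those abstract theorems coincides with $\{x\in\N_{D^\infty}\mid V^\infty(S^\infty(\Delta x))\le 0\}$, and that linearity of this set (together with the dual-feasibility hypothesis) is \emph{exactly} what those theorems require — in particular checking the integrability/measurability bookkeeping (that $S^\infty(\Delta x)\in\M^a$ when $S(x,\cdot)\in\M^a$, that $D_t^\infty$ is $\F_t$-measurable, etc.) and that the argmin on the zero level is not merely linear but that $EV^\infty\circ S^\infty\circ\Delta$ vanishes on it, which is automatic from sublinearity once linearity is known. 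A secondary obstacle is justifying the recession-function interchange $E(V^\infty)=(EV)^\infty$ in the generality of random nonsmooth $V$; this is where Assumption~\ref{scale} is doing real work, via the appendix lemmas on scaling of $\dom EV^*$.
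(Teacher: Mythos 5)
Your plan is essentially the paper's proof: the paper applies the abstract closedness criterion of \cite[Theorem~5 and Lemma~6]{per16} to the normal integrand $f(x,u,\omega)=V(S(\Delta x+\theta,\omega)+c,\omega)+\delta_{D(\omega)}(x)$ (working on the larger space $\U$ of possibly nonadapted claims with physical delivery, then restricting $\bar\varphi(0,\cdot)$ to $\M^a$), and the roles you assign to Assumption~\ref{ass3} (linearity of $\{x\in\N\mid f^\infty(x,0)\le 0\}$, with $f^\infty$ computed via the recession calculus of \cite{roc70a,pen99}) and to the finiteness hypothesis together with Assumption~\ref{scale} (supplying a dual point $(v,y)$ with $\lambda(v,y)\in\dom Ef^*$ for two different $\lambda>0$, since all terms of $Ef^*$ other than $EV^*$ are positively homogeneous) are exactly right. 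One caution: you should not interpret the finiteness hypothesis as $\varphi^*(q)<\infty$ via Theorem~\ref{thm:varphi*}, since that theorem requires Assumption~\ref{ass2} and $S(x,\cdot)\in\M^a$, neither of which is assumed in Theorem~\ref{thm:cl} (and the hypothesis allows $q\in\Q$ rather than $\Q^a$); the paper instead reads the displayed quantity directly as $Ef^*(v,y)$ for $v_t=E_t\Delta w_{t+1}-\Delta w_{t+1}$ through the Lagrangian computation in the appendix.
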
 


Under the assumptions of Theorem~\ref{thm:varphi*}, the last assumption in Theorem~\ref{thm:cl} holds, in particular, if $\varphi^*$ is proper. The assumption that $\varphi^*$ be proper holds automatically if $V$ is bounded from below by an integrable function since then $\varphi$ is bounded as well so that $\varphi^*(0)<\infty$. If $V$ is unbounded from below, then the properness of $\varphi^*$ is a more delicate matter. In the classical linear model, this holds if there exists a martingale measure $Q\ll P$ with a density process $q\in\Q^a$ with $EV^*(q)<\infty$. This is analogous to Assumption~1 of \cite{sch3} where the market was assume perfectly liquid and the dual problem was formulated over equivalent martingale measures.
In general, properness of $\varphi^*$ means that $\varphi$ is bounded from below on a Mackey-neighborhood\footnote{The {\em Mackey-topology} on $\M^a$ is the strongest locally convex topology compatible with the pairing with $\Q^a$.} $U$ of the origin. Indeed, this means that the lower semicontinuous hull of $\varphi$ is finite at the origin and then $\varphi^*$ is proper, by \cite[Theorem~4]{roc74}. This certainly holds if $\varphi$ is finite and continuous at the origin. Sufficient conditions for that will be given in Section~\ref{sec:oc}.

The following result summarizes our findings so far. It gives a dual representation for the optimum value function of \eqref{alm} by combining Theorems~\ref{thm:varphi*}, \ref{thm:cl} and Corollary~\ref{cor:support} with the general biconjugate theorem for convex functions.

\begin{theorem}\label{thm:dr}
If Assumptions~\ref{ass1}-\ref{scale} hold, $S(x,\cdot)\in\M^a$ for all $x\in\reals^J$ and $\varphi^*$ is proper, then
\[
\varphi(c) = \sup_{q\in\Q^a}\left\{\langle c,q\rangle - EV^*(q) - \sigma_\C(q)\right\},
\]
where $\sigma_\C$ is given by Corollary~\ref{cor:support}.
\end{theorem}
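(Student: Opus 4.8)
The plan is to combine Theorem~\ref{thm:varphi*}, Corollary~\ref{cor:support2} and Theorem~\ref{thm:cl} with the Fenchel--Moreau biconjugation theorem. Under Assumptions~\ref{ass1} and~\ref{ass2} and the hypothesis $S(x,\cdot)\in\M^a$ for all $x\in\reals^J$, Theorem~\ref{thm:varphi*} gives
\[
\varphi^*(q) = EV^*(q) + \inf_{w\in\N^1}E\left\{\sum_{t=0}^T\left[(q_tS_t)^*(w_t) + \sigma_{D_t}(E_t\Delta w_{t+1})\right]\right\},
\]
and Corollary~\ref{cor:support2} rewrites this as $\varphi^* = EV^* + \sigma_\C$, with $\sigma_\C$ as described in Corollary~\ref{cor:support}. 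So it remains only to justify the biconjugate identity $\varphi=\varphi^{**}$, i.e.\ that $\varphi$ is convex, closed and proper with respect to the pairing of $\M^a$ with $\Q^a$.

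Convexity of $\varphi$ was recorded in Section~\ref{sec:oicv}. For closedness I would invoke Theorem~\ref{thm:cl}, checking its hypotheses as follows: Assumptions~\ref{ass1}, \ref{ass3} and~\ref{scale} are among the standing assumptions, and the finiteness condition of Theorem~\ref{thm:cl} is exactly the requirement that the right-hand side of the $\varphi^*$-formula above be finite at some $q\in\Q$; since $\varphi^*$ is proper by hypothesis there is $q\in\Q^a\subseteq\Q$ with $\varphi^*(q)<\infty$, which by that formula is precisely the required finiteness (this is the point already flagged in the remark following Theorem~\ref{thm:cl}). Hence Theorem~\ref{thm:cl} applies and $\varphi$ is $\sigma(\M^a,\Q^a)$-closed on $\M^a$. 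Properness of $\varphi$ then follows from properness of $\varphi^*$ alone: for any $q$ with $\varphi^*(q)<\infty$ the inequality $\varphi(c)\ge\langle c,q\rangle-\varphi^*(q)$ keeps $\varphi$ above an affine function, so $\varphi>-\infty$ everywhere, while $\varphi^*(0)>-\infty$ forces $\inf_c\varphi(c)<\infty$; thus $\varphi$ is proper.

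With $\varphi$ closed, proper and convex, \cite[Theorem~5]{roc74} yields $\varphi=\varphi^{**}$, that is,
\[
\varphi(c)=\sup_{q\in\Q^a}\{\langle c,q\rangle-\varphi^*(q)\},
\]
and substituting $\varphi^* = EV^* + \sigma_\C$ gives the asserted representation. The argument is essentially bookkeeping across the already-established results; the only points needing a little care are matching the finiteness hypothesis of Theorem~\ref{thm:cl} with properness of $\varphi^*$ and deducing properness of $\varphi$ from that of its conjugate, and neither presents a genuine obstacle once Theorems~\ref{thm:varphi*} and~\ref{thm:cl} are in hand.
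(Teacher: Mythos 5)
Your proposal is correct and follows exactly the route the paper intends: Theorem~\ref{thm:varphi*} together with Corollary~\ref{cor:support2} identifies $\varphi^*=EV^*+\sigma_\C$, properness of $\varphi^*$ supplies the finiteness hypothesis of Theorem~\ref{thm:cl} (as the paper itself notes in the remark following that theorem), and closedness plus properness of $\varphi$ yield $\varphi=\varphi^{**}$ via the biconjugate theorem. Your extra care in deducing properness of $\varphi$ from properness of $\varphi^*$ is a correct and welcome detail that the paper leaves implicit.
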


The following example specializes Theorem~\ref{thm:dr} to conical market models with perfectly liquid cash.

\begin{example}
Consider Example~\ref{ex:liquid} and assume that $D$ is conical, $S$ is sublinear, and that $\{x\in\N_{D}\,|\, S(\Delta x)\le 0\}$ is a linear space (which holds, in particular, under the robust no arbitrage condition if there are no constraints; see \cite[Section~4]{pen12}). If $V_T(\cdot,\omega)$ is nonconstant, convex loss function satisfying $V^\infty\ge 0$ and either \eqref{rae+} or \eqref{rae-}, then
\begin{align*}
\varphi(c) 
&= \sup_{\lambda\ge 0}\sup_{Q\in\P}\left\{\lambda E^Q\sum_{t=0}^Tc_t-EV_T^*(\lambda E_T\frac{dQ}{dP})\right\}
\end{align*}
(see the representations of $\C^*$ at the end of Section~\ref{sec:ALM}). This is analogous to the dual problems derived for continuous time models e.g.\ in 
\cite{oz9} or \cite{bfg11}. While these references studied optimal investment with random endowment in perfectly liquid markets, the above applies to illiquid markets in discrete time. In the exponential case
\[
V_T(c) = \frac{1}{\alpha}(e^{\alpha c}-1),
\]
the supremum over $\lambda>$ is easily found analytically and one gets
\begin{align*}
\varphi(c) 
&= \frac{1}{\alpha}\exp\left[\sup_{Q\in\P}\{\alpha E^Q\sum_{t=0}^Tc_t -H(Q|P)\}\right]-\frac{1}{\alpha},
\end{align*}
where $H(Q|P)$ denotes the entropy of $Q$ relative to $P$. This is an illiquid discrete-time version of the dual representation obtained in \cite{dgrsss}; see also \cite{bec3}.
\end{example}

Much of duality theory in optimal investment has studied the optimum value as a function of the initial endowment only; see e.g.\ \cite{ks99} or \cite{kr7}. 

\begin{example}[Indirect utilities]
The function $v(c_0)=\varphi(c_0,0,\ldots,0)$ gives the optimum value of \eqref{alm} for an agent with initial capital $-c_0$ and no future liabilities/endowments. We can express $v$ as
\begin{equation}\label{V}
v(c_0) = \inf_{d\in\C(c_0)}EV(d),
\end{equation}
where $\C(c_0) = \{d\in\M^a\,|\,(c_0,0,\ldots,0)-d\in\C\}$ is the set of future endowments needed to risklessly cover an initial payment of $c_0$. If $\varphi$ is proper and lower semicontinuous (see Theorem~\ref{thm:cl}), the biconjugate theorem gives
\[
v(c_0) = \sup_{q\in\Q^a}\{c_0q_0 - \varphi^*(q)\} = \sup_{q_0\in\reals}\{c_0q_0-u(q_0)\} 
\]
where
\[
u(q_0) = \inf_{z\in\Q^a}\{\varphi^*(z)\,|\,z_0=q_0\}.
\]
If $\C$ is conical, we can use Corollary~\ref{cor:support2} to write $u$ analogously to \eqref{V} as
\[
u(q_0) = \inf_{z\in\Y(q_0)}EV^*(z),
\]
where $\Y(q_0)=\{z\in\C^*\,|\, z_0=q_0\}$. Even if $\varphi$ is closed, there is no reason to believe that $u$ would be lower semicontinuous as well nor that the infimum in its definition is attained. In some cases, however, it is possible to enlarge the set $\Y(q_0)$ so that the function $u$ becomes lower semicontinuous and the infimum is attained; see \cite{cms14} for a sublinear two asset model without constraints.
\end{example}

When $V=\delta_{\reals^{T+1}_-}$, the assumptions of Theorem~\ref{thm:cl} are automatically satisfied so we obtain the following result on the set $\C$ of claims that can be superhedged without a cost.

\begin{corollary}\label{cor:rec}
If $\{x\in\N_{D^\infty}\,|\,S^\infty(\Delta x)\le 0\}$ is a linear space, then $\C$ is closed and its recession cone can be expressed as
\[
\C^\infty=\{c\in\M^a\,|\,\exists x\in\N_{D^\infty}:\ S^\infty(\Delta x)+c\le 0\}.
\]
\end{corollary}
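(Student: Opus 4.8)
The plan is to derive Corollary~\ref{cor:rec} as the special case $V=\delta_{\reals^{T+1}_-}$ of Theorem~\ref{thm:cl}, exactly parallel to the way Corollary~\ref{cor:support} was extracted from Theorem~\ref{thm:varphi*}. First I would record that when $V(\cdot,\omega)=\delta_{\reals^{T+1}_-}$ we have $\varphi=\delta_\C$ on $\M^a$. I would then verify the three hypotheses of Theorem~\ref{thm:cl}. Assumption~\ref{ass1} holds trivially for this $V$ (it is convex, nondecreasing, vanishes at $0$, and $V(\alpha c,\omega)=\delta_{\reals^{T+1}_-}(\alpha c)>0$ for any nonzero $c\in\reals^{T+1}_+$ and any $\alpha>0$). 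For Assumption~\ref{ass3}, note that $V^\infty=V=\delta_{\reals^{T+1}_-}$, so condition~\eqref{assV} holds, and hence Assumption~\ref{ass3} reduces precisely to the hypothesis that $\{x\in\N_{D^\infty}\mid S^\infty(\Delta x)\le 0\}$ is a linear space. For Assumption~\ref{scale}: since $V\ge 0$ is bounded below by the integrable function $0$, we have $0\in\dom EV^*$, so $\lambda\dom EV^*\subseteq\dom EV^*$ holds for, say, $\lambda=0$ (indeed for all $\lambda\in[0,1]$); thus Assumption~\ref{scale} holds with any $\lambda\ne 1$ in $[0,1)$.

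Next I would check the final, quantitative hypothesis of Theorem~\ref{thm:cl}: that $EV^*(q)+\inf_{w\in\N^1}E\{\sum_t[(q_tS_t)^*(w_t)+\sigma_{D_t}(E_t\Delta w_{t+1})]\}<\infty$ for some $q\in\Q$. Here $V^*(\cdot,\omega)=\delta_{\reals^{T+1}_+}$, so $EV^*(q)=0$ for any $q\in\Q^a_+$. Taking $q=0$ (or any $q\in\Q^a_+$ together with $w=0$), the bracketed sum is $\sum_t[(0\cdot S_t)^*(0)+\sigma_{D_t}(0)]=\sum_t[\delta_{\cl\dom S_t}(0)^*(\text{at }0)+0]$; more directly, with $q=0$ one has $(0S_t)^*(w_t)=\sigma_{\cl\dom S_t}(w_t)$, which at $w=0$ equals $0$ since $0\in\dom S_t$, and $\sigma_{D_t}(0)=0$ because $0\in D_t$. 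Hence the infimum is $\le 0<\infty$, so this hypothesis is satisfied unconditionally. (Alternatively, one may cite Corollary~\ref{cor:support}, which gives $\sigma_\C(0)=0$.)

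With all hypotheses of Theorem~\ref{thm:cl} in force, the theorem yields that $\varphi=\delta_\C$ is $\sigma(\M^a,\Q^a)$-closed, i.e.\ $\C$ is $\sigma(\M^a,\Q^a)$-closed, and that
\[
\varphi^\infty(c)=\inf_{x\in\N_D}EV^\infty(S^\infty(\Delta x)+c)=\inf_{x\in\N_{D^\infty}}E\,\delta_{\reals^{T+1}_-}(S^\infty(\Delta x)+c),
\]
using $V^\infty=\delta_{\reals^{T+1}_-}$ and $\N_D{}^\infty=\N_{D^\infty}$ (the recession cone of $\N_D$ is computed componentwise, and $\N$ being a linear space is its own recession cone; this uses $D_t^\infty=(D_t)^\infty$ from \cite[Exercise~14.21]{rw98}). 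The right-hand side equals $0$ if there exists $x\in\N_{D^\infty}$ with $S^\infty(\Delta x)+c\le 0$ and $+\infty$ otherwise, i.e.\ $\varphi^\infty=\delta_{\C'}$ where $\C'=\{c\in\M^a\mid\exists x\in\N_{D^\infty}:\ S^\infty(\Delta x)+c\le 0\}$. On the other hand, since $\C$ is closed convex (closedness just shown; convexity is inherited from $\varphi$ being convex), its recession cone $\C^\infty$ is the recession cone in the usual sense of the closed convex set $\C$, and $\delta_\C{}^\infty=\delta_{\C^\infty}$. Comparing, $\C^\infty=\C'$, which is the claimed formula.

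The only genuine subtlety — and the step I would treat most carefully — is the identification $\varphi^\infty=\delta_{\C^\infty}$ together with the commutation of recession operations through the integral functional and through $\N_D$; everything else is bookkeeping that Theorem~\ref{thm:cl} has already absorbed. Concretely, I would make explicit that the recession cone of the indicator of a closed convex set is the indicator of the set's recession cone, that $\N_{D^\infty}=\bigcap_t\{x\in\N\mid x_t\in D_t^\infty\}$ is the recession cone of $\N_D=\bigcap_t\{x\in\N\mid x_t\in D_t\}$, and that the recession function $EV^\infty(\cdot)$ of the integral functional $EV(\cdot)$ composed with the (positively homogeneous in the recession sense) map $S^\infty(\Delta\cdot)$ produces exactly the stated infimum — all of which Theorem~\ref{thm:cl} already asserts in the displayed formula for $\varphi^\infty$, so in the write-up this amounts to substituting $V^\infty=\delta_{\reals^{T+1}_-}$ and reading off the result.
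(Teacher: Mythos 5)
Your proposal is correct and is essentially the paper's own argument: the paper derives Corollary~\ref{cor:rec} precisely by specializing Theorem~\ref{thm:cl} to $V=\delta_{\reals^{T+1}_-}$, noting that all of its assumptions hold automatically in that case. Your verification of Assumptions~\ref{ass1}, \ref{ass3}, \ref{scale} and of the finiteness condition (via $q=0$, $w=0$), and your reading of the recession formula with $\N_{D^\infty}$ in place of $\N_D$ (consistent with $f^\infty$ containing $\delta_{D^\infty}$ in the proof of Theorem~\ref{thm:barcl}), simply make explicit what the paper leaves implicit.
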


The first part of Corollary~\ref{cor:rec} was given in \cite[Section~4]{pen12}, where it was also shown that the linearity condition is implied by the ``robust no-arbitrage condition'', which reduces to the classical no-arbitrage condition in the classical perfectly liquid market model. Combined with the Kreps--Yan theorem, this gives a quick proof of the famous Dalang--Morton--Willinger theorem.

The recession cone $\C^\infty$ of $\C$ plays an important role in contingent claim valuation; see \cite{pen14} and Section~\ref{sec:val} below. In particular, the indifference swap rates are uniquely determined by the market model (and are thus independent of the agent's preferences $V$ and financial position $\bar c$) when $c-\alpha p\in\C^\infty\cap(-\C^\infty)$ for some $\alpha\in\reals$; see \cite[Theorem~4.1]{pen14}. This generalizes the classical notion of ``attainability'' to illiquid market models. Accordingly, in general market models, the notion of completeness (attainability of all $c\in\M^a$) extends to the property of $\C^\infty\cap(-\C^\infty)$ being a maximal linear subspace of $\M^a$. The maximality means that if $p\notin\C^\infty\cap(-\C^\infty)$, then the linear span of $p\cup[\C^\infty\cap(-\C^\infty)]$ is all of $\M^a$. Under the conditions of Theorem~\ref{thm:cl} and the mild additional condition that $V^\infty\ge 0$, the condition $p\notin\C^\infty$, turns out to be necessary and sufficient for $\pi_s(\bar c,p;\cdot)$ to be a proper lsc function on $\M^a$; see Theorem~\ref{thm:idsr} below.

\section{Optimality conditions and shadow prices}\label{sec:oc}

Under the assumptions of Theorems~\ref{thm:varphi*} and \ref{thm:cl}, the optimum value of \eqref{alm} equals that of the {\em dual} problem
\[
\minimize_{q\in\Q^a, w\in\N^1}\quad E\left\{V^*(q) + \sum_{t=0}^T[(q_tS_t)^*(w_t) + \sigma_{D_t}(E_t\Delta w_{t+1}) - c_tq_t]\right\}.
\]
The optimal $q\in\Q^a$, if any exist, are characterized by the equality $\varphi(c)+\varphi^*(q)=\langle c,q\rangle$ which means that $q$ is a {\em subgradient} of $\varphi$ at $c$, i.e.
\[
\varphi(c')\ge\varphi(c)+\langle c'-c,q\rangle\quad\forall c'\in\M^a.
\]
Optimal dual solutions $q\in\Q^a$ can thus be interpreted as {\em marginal prices} for the claims $c\in\M^a$. In particular, if $\varphi$ happens to be Gateaux differentiable at $c$, then $q$ is the derivative of $\varphi$ at $c$. We will denote the {\em subdifferential}, i.e.\ the set of subgradients, of $\varphi$ at $c\in\M^a$ by $\partial\varphi(c)$. 

Much like in the conjugate duality framework of \cite{roc74}, the dual solutions allow us to write down optimality conditions for the solutions of the primal problem \eqref{alm}. The proof of the following will be given in Section~\ref{sec:adapted}.

\begin{theorem}\label{thm:oc}
Let $c\in\M^a$. If Assumptions~\ref{ass1}--\ref{scale} hold and $(w,q)\in\Q^a\times\N^1$ solves the dual, then an $x\in\N$ solves \eqref{alm} if and only if it is feasible and
\begin{align*}
E_t\Delta w_{t+1} &\in N_{D_t}(x_t),\\
q &\in\partial V(S(\Delta x)+c),\\
w_t&\in\partial(q_tS_t)(\Delta x_t).
\end{align*}
Conversely, if $x$ and $(w,q)$ are feasible primal-dual pair satisfying the above system, then $x$ solves the primal, $(w,q)$ solves the dual, and $\varphi$ is closed at $c$.
\end{theorem}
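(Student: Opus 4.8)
The plan is to derive Theorem~\ref{thm:oc} from the general optimality conditions of conjugate duality by identifying the right Lagrangian decomposition of \eqref{alm} and then computing the associated subdifferential conditions termwise. Concretely, I would set up the primal problem with the perturbation space used implicitly in the proof of Theorem~\ref{thm:varphi*}: write the objective of \eqref{alm} as $EV(S(\Delta x)+c)+\delta_{\N_D}(x)$, and introduce dual variables $w\in\N^1$ conjugate to the ``trading'' variables $z_t=\Delta x_t$ and $q\in\Q^a$ conjugate to the ``payment'' variables $u=S(\Delta x)+c$. From the formula $\varphi^*(q)=EV^*(q)+\inf_{w\in\N^1}E\sum_t[(q_tS_t)^*(w_t)+\sigma_{D_t}(E_t\Delta w_{t+1})]$ of Theorem~\ref{thm:varphi*}, the dual problem displayed just before the statement is exactly $\minimize_{q,w}\varphi^*(q)-\langle c,q\rangle$ after the inner infimum over $w$ is incorporated.

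The key steps, in order: (i) Under Assumptions~\ref{ass1}--\ref{scale}, Theorems~\ref{thm:varphi*} and \ref{thm:cl} give that $\varphi$ is proper, lsc and $\varphi^{**}=\varphi$, the infimum in \eqref{alm} is attained, and the infimum defining $\varphi^*$ (i.e.\ over $w$) is attained. Hence there is no duality gap at any $c\in\M^a$ where $\varphi$ is finite, and both primal and dual optima are attained whenever the common value is finite. (ii) The pair $(x,(w,q))$ with $x$ primal-feasible and $(w,q)$ dual-feasible is optimal for both problems if and only if the sum of the (nonnegative, by Fenchel's inequality) gaps in each of the three conjugate pairs vanishes. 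Writing out $\langle c,q\rangle=\varphi(c)+\varphi^*(q)$ and substituting the definitions, the single scalar equality splits, via Fenchel's equality $h(a)+h^*(b)=\langle a,b\rangle\iff b\in\partial h(a)$ applied pointwise, into the three stated inclusions: $q\in\partial V(S(\Delta x)+c)$ (from the $EV/EV^*$ pair), $w_t\in\partial(q_tS_t)(\Delta x_t)$ (from the $(q_tS_t)/(q_tS_t)^*$ pair, after noting $E(q\cdot(S(\Delta x)+c))=\sum_t E(q_tS_t(\Delta x_t))+\langle c,q\rangle$), and $E_t\Delta w_{t+1}\in N_{D_t}(x_t)$ (from the $\sigma_{D_t}/\delta_{D_t}$ pair, after an adaptedness-based summation by parts that moves $\sum_t E(w_t\cdot\Delta x_t)$ to $\sum_t E(E_t[\Delta w_{t+1}]\cdot x_t)$ using $x_{-1}=x_T=0$ and $x_t\in\F_t$). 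The interchange of expectation and pointwise conjugation/subdifferentiation uses the normal-integrand calculus (e.g.\ \cite[Theorem~14.60]{rw98}) together with decomposability of $\M$ and $\Q$. (iii) For the converse: if a feasible pair satisfies the three inclusions, reversing the Fenchel equalities and summing shows $\varphi(c)+\varphi^*(q)\le\langle c,q\rangle$; combined with the weak-duality inequality $\varphi(c)\ge\langle c,q\rangle-\varphi^*(q)$ (which needs only $\varphi^*(q)$ finite, hence no closedness assumption) this forces equality, so $x$ is primal-optimal, $(w,q)$ is dual-optimal, and $\varphi(c)$ equals the dual value; since the dual value is always the value of the biconjugate $\varphi^{**}(c)$, we get $\varphi(c)=\varphi^{**}(c)$, i.e.\ $\varphi$ is closed at $c$.

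The main obstacle is the careful handling of the summation-by-parts / adapted-projection step that produces the constraint multiplier condition $E_t\Delta w_{t+1}\in N_{D_t}(x_t)$ from a term of the form $\sum_t E(w_t\cdot\Delta x_t)$, and making sure all the expectations involved are well-defined (the positive parts integrable) so that the scalar identity $\varphi(c)+\varphi^*(q)=\langle c,q\rangle$ may legitimately be decomposed term by term rather than only as an inequality. This is precisely where Assumption~\ref{ass2}, the hypothesis $S(x,\cdot)\in\M^a$, integrability of selectors of $\dom S_t^*$ (noted in the footnote after Corollary~\ref{cor:support}), and attainment of the inner infimum over $w\in\N^1$ from Theorem~\ref{thm:varphi*} all enter; I would lean on the corresponding lemmas in \cite{bpp,per16} (where this conjugate-duality-for-stochastic-optimization machinery is developed) to justify the interchange rather than reprove it here. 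Everything else is a routine application of Fenchel's inequality and the biconjugate theorem.
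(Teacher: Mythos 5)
Your proposal is correct in substance and rests on the same conjugate-duality mechanism, but it is organized quite differently from the paper's argument. The paper proves Theorem~\ref{thm:oc} in one line by specializing Theorem~\ref{thm:barvarphia*} to $\theta=0$: the real work is done for the physical-delivery problem \eqref{alm+} on the nonadapted space, where Theorem~\ref{thm:baroc} obtains the optimality system from the abstract saddle-point conditions $v\in\partial_x l(x,y)$, $(\theta,c)\in\partial_{(w,q)}[-l](x,y)$ of \cite[Theorem~8]{bpp} applied to the Lagrangian integrand \eqref{le}; the second inclusion is then unpacked by recognizing $(q_tS_t)^*(w_t)$ as the support function of $E_t=\{(\eta,d)\mid S_t(\eta)+d\le 0\}$ and invoking the sum rule and \cite[Theorem~23.5]{roc70a}. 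You instead split the scalar Fenchel equality $\varphi(c)+\varphi^*(q)=\langle c,q\rangle$ directly into three termwise nonnegative gaps on the adapted problem. That route is legitimate and arguably more transparent, and you correctly identify its weak point: the term-by-term decomposition of the expectation (and the summation by parts) is only valid once one knows all the cross terms such as $E\sum_t q_tS_t(\Delta x_t)$ and $E\sum_t w_t\cdot\Delta x_t$ are well defined, which is exactly what the pointwise Lagrangian formulation of \cite{bpp} is designed to avoid having to check by hand. Two details you should make explicit if you carry this out: (i) the passage from the product $q_tS_t(\Delta x_t)$ appearing in the Fenchel gap for $V$ to the normal integrand $(q_tS_t)(\Delta x_t)$ requires handling the case $q_t=0$ (where $(q_tS_t)$ is by definition $\delta_{\cl\dom S_t}$), which the paper resolves through the normal-cone-to-$E_t$ computation and the monotonicity of $V$; (ii) dual attainment and the absence of a duality gap are hypotheses here (a dual solution $(w,q)$ is given, and closedness comes from Theorem~\ref{thm:cl}), so your step (i) should be phrased as using these facts rather than re-deriving them. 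With those points attended to, your argument and the paper's deliver the same conditions; the paper's detour through $\bar\varphi$ additionally yields the adaptedness of the optimal dual variables (Theorem~\ref{thm:barvarphia*}), which in your direct adapted formulation is built in from the start.
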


In the classical linear model where $S_t(x,\omega)=s_t(\omega)\cdot x$, the optimality conditions in Theorem~\ref{thm:oc} simplify to
\begin{align*}
E_t\Delta (q_{t+1}s_{t+1}) &\in N_{D_t}(x_t),\\
q &\in\partial V(s\cdot\Delta x+c).
\end{align*}
In the unconstrained case where $D\equiv\reals^J$, the first condition means that $qs$ is a martingale. 
On the other hand, we can write the last condition as $w_t=q_t\bar s_t$, where $\bar s_t\in L^0(\Omega,\F_t,P;\reals^J)$ is such that
\[
\bar s_t\in\partial S_t(\Delta x_t).
\]
Following \cite{cms14}, we call such a process $\bar s$ a {\em shadow price}. By \cite[Theorem~23.5]{roc70a}, we can write this as $\Delta x_t\in\partial S^*_t(\bar s_t)$. If $S$ is positively homogeneous like in \cite{cms14}, this becomes
\[
\Delta x_t\in N_{\dom S^*_t}(\bar s_t).
\]
In particular then, the optimal strategy trades only when $\bar s_t\notin\inte\dom S^*_t$ and in that case, the increment $\Delta x_t$ belongs to the (outward) normal to $\dom S^*_t$ at $\bar s_t$. This extends the complementarity condition from \cite[Definition~2.2]{cms14} to multiple assets and general sublinear trading costs. If, moreover, $S^*(\bar s)\in\M^a$, and $\bar x$ is optimal in \eqref{alm}, then it is optimal also in the linearized problem
\[
\minimize\quad E V(\bar s\cdot\Delta x+\bar c)\quad\ovr\quad x\in\N_D,
\]
where $\bar c:=c-S^*(\bar s)$. Indeed, if $\bar x$, $(\bar w,\bar q)$ are optimal in \eqref{alm} and the dual, we have $S(\Delta\bar x)+S^*(\bar s)=\bar s\cdot\Delta\bar x$, by the optimality conditions in Theorem~\ref{thm:oc}, so they satisfy the optimality conditions for the linearized model for which they are feasible as well, so the claim follows from the last part of Theorem~\ref{thm:oc}. 

The condition $\Delta x\in N_{\dom S^*_t}(\bar s_t)$ is closely related to the result of \cite{dn90}, who found that under transaction costs, there is a ``no-transaction region'' where optimal trading strategies stay constant. In the sublinear case, this region is the interior of $\dom S^*_t$. Note also that if $S$ happens to be strictly convex, $\partial S^*$ is (at most) single-valued (see \cite[Theorems~26.1 and~26.1]{roc70a}), so the shadow price characterizes the optimal trading strategy uniquely.




Section~3 of \cite{cms14} gives an example where shadow prices do not exist and thus, the supremum in the dual representation of $\varphi$ is not attained. The following gives sufficient conditions for the attainment. Recall that the {\em Mackey-topology} on $\M^a$ is the strongest locally convex topology compatible with the pairing with $\Q^a$.

\begin{theorem}\label{thm:bounded}
Under the assumptions of Theorem~\ref{thm:varphi*}, the dual optimum is attained if $\varphi$ is bounded from above on a Mackey-neighborhood of $c$.
\end{theorem}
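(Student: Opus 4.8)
The plan is to produce a dual optimizer as a subgradient of the convex function $\varphi$ at $c$, exploiting the classical fact that a convex function bounded above on a neighbourhood of a point is continuous there. If $\varphi(c)=-\infty$ there is nothing to prove, since then $\varphi^*\equiv+\infty$ and the dual value $-\infty$ is attained at every $q\in\Q^a$; as $\varphi$ is bounded above near $c$, the only remaining case is $\varphi(c)\in\reals$, to which I restrict from now on.

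First I would recall that $\varphi$ is convex on $\M^a$, being the infimal projection $\varphi(c)=\inf_{d\in\C}EV(c-d)$ of a convex function. A routine convexity argument --- passing to a Mackey-neighbourhood of $c$ symmetric about $c$ on which $\varphi\le M<\infty$ and using $\varphi(c)\le\tfrac12\varphi(c_0)+\tfrac12\varphi(2c-c_0)$ --- shows that $\varphi$ is in fact finite on a Mackey-neighbourhood of $c$, and the standard fact that a convex function finite and bounded above on a neighbourhood of a point is continuous there gives continuity of $\varphi$ at $c$ in the Mackey topology $\tau(\M^a,\Q^a)$. Since the Mackey topology is compatible with the separating duality between $\M^a$ and $\Q^a$, separating the point $(c,\varphi(c))$ from the (now nonempty) interior of $\epi\varphi$ produces a subgradient represented by an element of $\Q^a$; that is, $\partial\varphi(c)\neq\emptyset$ and $\partial\varphi(c)\subseteq\Q^a$ (cf.\ \cite[Theorem~11]{roc74}).

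Finally, fix any $q\in\partial\varphi(c)$. Taking the supremum over $c'\in\M^a$ in the subgradient inequality $\varphi(c')\ge\varphi(c)+\langle c'-c,q\rangle$ gives $\varphi^*(q)\le\langle c,q\rangle-\varphi(c)$, and the Fenchel inequality gives the reverse, so $\varphi^*(q)=\langle c,q\rangle-\varphi(c)$; together with the elementary bound $\langle c,q'\rangle-\varphi^*(q')\le\varphi(c)$ valid for all $q'\in\Q^a$, this shows that $q$ maximizes $q'\mapsto\langle c,q'\rangle-\varphi^*(q')$ over $\Q^a$, i.e.\ $q$ solves the dual (with no duality gap at $c$). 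By Theorem~\ref{thm:varphi*}, $\varphi^*(q)=EV^*(q)+\inf_{w\in\N^1}E\{\,\cdots\,\}$ with the infimum over $w$ attained for this $q$; choosing such a $w\in\N^1$ yields a full dual solution $(q,w)$ of the problem in Section~\ref{sec:oc}. There is no serious obstacle here; the only point requiring care is that the subgradient obtained from continuity must lie in $\Q^a$ rather than in a larger dual space, which is exactly what the Mackey-topology hypothesis guarantees, since the continuous linear functionals for $\tau(\M^a,\Q^a)$ are precisely the elements of $\Q^a$.
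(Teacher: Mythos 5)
Your proof is correct and follows essentially the same route as the paper, which simply notes that dual solutions are exactly the subgradients of $\varphi$ at $c$ and then invokes \cite[Theorem~11]{roc74} (boundedness above on a Mackey-neighborhood implies subdifferentiability, with the subgradient lying in $\Q^a$ by Mackey--Arens). You have merely unpacked that citation — continuity from boundedness above, separation of the epigraph, and the Fenchel--Young identification of subgradients with dual optimizers — and added the (correct) observation that the $w$-component of the dual solution comes from the attainment statement in Theorem~\ref{thm:varphi*}.
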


\begin{proof}
By Theorem~\ref{thm:varphi*}, dual solutions coincide with subgradients of $\varphi$ at the origin. The claim thus follows from \cite[Theorem~11]{roc74}.
\end{proof}


Much like in \cite[Corollary~5.2]{bc11}, the dual attainment in $\Q^a$ can be guaranteed under appropriate continuity assumptions on the expected loss function $EV$. Recall that a locally convex topological vector space is {\em barreled} if every closed convex absorbing set is a neighborhood of the origin. By \cite[Corollary 8B]{roc74}, a lower semicontinuous convex function on a barreled space is continuous throughout the algebraic interior (core) of its domain. On the other hand, by \cite[Theorem 11]{roc74}, continuity implies subdifferentiability. Fr\'echet spaces and, in particular, Banach spaces are barreled. We will say that $\M^a$ is barreled if it is barreled with respect to a topology compatible with the pairing with $\Q^a$.

\begin{example}
Assume that $\M^a$ is barreled. Then the boundedness condition in Theorem~\ref{thm:bounded} is satisfied, if $EV$ is finite throughout $\M^a$ and $EV^*$ is proper on $\Q^a$. The finiteness of $EV$ fails in Example~\ref{ex:liquid}, but the boundedness condition holds if $EV_T(c_0+\cdots+c_T)<\infty$ for all $c\in\M^a$ and either $\varphi$ or $c\mapsto EV_T(c_0+\cdots+c_T)$ is lower semicontinuous on $\M^a$ (see Theorem~\ref{thm:cl}). 
\end{example}

\begin{proof}
By \cite[Theorem~21]{roc74}, $EV$ is lower semicontinuous so, by \cite[Corollary~8B]{roc74}, it is continuous on $\M^a$. Choosing $x=0$, gives
\[
\varphi(c) \le EV(c),
\]
so \cite[Theorem~8]{roc74} implies that $\varphi$ is continuous and thus subdifferentiable throughout $\M^a$, by \cite[Theorem~11]{roc74}. Similarly, in Example~\ref{ex:liquid},
\[
\varphi(c) \le EV_T(c_0+\cdots+c_T),
\]
so, by \cite[Corollary~8B]{roc74}, both of the conditions imply the continuity of $\varphi$.
\end{proof}

\section{Duality in contingent claim valuation}\label{sec:val}

The main results of \cite{pen14} relate the accounting values and indifference swap rates to arbitrage bounds and the classical replication based values. Section~6 of \cite{pen14} gives conditions on lower semicontinuity and properness of $\pi^0_s$ and $\pi_s(\bar c,p;\cdot)$. This section refines those conditions and gives dual expressions for $\pi^0_s$ and $\pi_s(\bar c,p;\cdot)$. We start with accounting values.

\subsection{Accounting values}

As noted in the introduction, the accounting value $\pi^0_s$ extends the notion of a convex risk measure to sequences of payments and markets without a perfectly liquid cash-account. This section extends the analogy by giving dual representations for $\pi^0_s$ which reduce to the well-known dual representations of risk measures when applied in the single period setting with perfectly liquid cash. We also give general conditions under which the conjugate (``penalty function'') in the dual representation separates into two components, one corresponding to the market model and the other one to the agent's risk preferences. This can be seen as an extension of a corresponding separation in models with perfectly liquid cash; see e.g.~\cite[Proposition~4.104]{fs11}.

Note that the accounting value can be expressed as
\begin{align*}
\pi^0_s(c) &= \inf\{\alpha\,|\, c-\alpha p_0\in\A\},
\end{align*}
where $\A=\{c\in\M^a\,|\,\varphi(c)\le 0\}$ consists of financial positions which the agent can cover with acceptable level of risk (as measured by $EV$) given the possibility to trade in the illiquid markets described by $S$ and $D$. This is analogous to the correspondence between convex risk measures and their acceptance sets in \cite{adeh99} where the financial market is described by a single perfectly liquid asset. Besides the market model, another notable extension here is that acceptance sets consist of sequences of payments instead of payments at a single date. 

The dual representation for $\pi^0_s$ below involves the support function of $\A$ which corresponds to the ``penalty function'' in the dual representation of a convex risk measure; see \cite[Chapter~4]{fs11} and the references therein. Under mild conditions the support function separates into two terms: the first term is the support function of the set
\[
\B = \{c\in\M^a\,|\,EV(c)\le 0\}
\]
while the second one is the support function of the set $\C$ of claims that can be superhedged without a cost. While $\B$ depends only on the agent's risk preferences, $\C$ depends only on the market model. 

\begin{theorem}\label{thm:acc}
If $\F_0$ is the trivial sigma-field, $V^\infty\ge 0$ and the assumptions of Theorem~\ref{thm:cl} hold, then the conditions
\begin{enumerate}
\item\label{plsc}
$\pi^0_s$ is proper and lower semicontinuous,
\item\label{p0}
$\pi^0_s(0)>-\infty$,
\item\label{rec}
$p^0\notin\C^\infty$,
\item\label{dual}
$q_0=1$ for some $q\in\dom\sigma_\C$
\end{enumerate}
are equivalent and imply the validity of the dual representation
\[
\pi^0_s(c) = \sup_{q\in\Q^a}\left\{\left.\langle c,q\rangle - \sigma_\A(q)\,\right|\,q_0=1\right\}.
\]
If $\inf\varphi<0$, then under the assumptions of Theorem~\ref{thm:varphi*}, $\sigma_\A = \sigma_\B + \sigma_\C$ and
\[
\sigma_\B(q) = \inf_{\alpha>0}\alpha EV^*(q/\alpha).
\]
\end{theorem}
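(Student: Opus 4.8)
The plan is to treat $\pi^0_s$ as the \emph{directional reduction} of the optimum value function along $p^0$: writing $\A=\{c\in\M^a\mid\varphi(c)\le 0\}$ for the acceptance set, one has $\pi^0_s(c)=\inf\{\alpha\in\reals\mid c-\alpha p^0\in\A\}$, and $\A$ is convex, $\sigma(\M^a,\Q^a)$-closed and contains $0$ by Theorem~\ref{thm:cl}. The first step is the observation that, since $V$ is nondecreasing and $p^0\ge 0$, the section $\{\alpha\mid c-\alpha p^0\in\A\}$ is an up-closed half-line for every $c$; consequently $\epi\pi^0_s=\{(c,\alpha)\in\M^a\times\reals\mid c-\alpha p^0\in\A\}$, which is closed and convex, so $\pi^0_s$ is a closed convex function with $\pi^0_s(0)\le 0$. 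Such a function is proper precisely when it is nowhere $-\infty$; and since it is lower semicontinuous and obeys $\pi^0_s(c+\beta p^0)=\pi^0_s(c)-\beta$, an improper value at some $c_1$ would, by convexity, give $\pi^0_s(\lambda c_1)=-\infty$ for $\lambda\in(0,1)$ and hence, letting $\lambda\downto 0$, $\pi^0_s(0)=-\infty$. Thus \ref{plsc}$\Leftrightarrow$\ref{p0}.

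For \ref{p0}$\Leftrightarrow$\ref{rec}$\Leftrightarrow$\ref{dual} I would argue through recession cones. From $\pi^0_s(0)=-\sup\{\gamma\mid\gamma p^0\in\A\}$ and closedness of $\A$, \ref{p0} fails exactly when $p^0\in\A^\infty=\{c\mid\varphi^\infty(c)\le 0\}$. One half of \ref{p0}$\Leftrightarrow$\ref{rec} is immediate: if $p^0\in\C^\infty$ then $\varphi$ is nonincreasing in the direction $p^0$, so $\varphi(\gamma p^0)\le\varphi(0)\le 0$ for all $\gamma\ge 0$ and $\pi^0_s(0)=-\infty$. For the converse I would first record an auxiliary fact: Assumptions~\ref{ass1} and~\ref{ass3} force $\{x\in\N_{D^\infty}\mid S^\infty(\Delta x)\le 0\}$ to be a linear space. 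Indeed, if $S^\infty(\Delta x)\le 0$ then $x$ lies in the linear space of Assumption~\ref{ass3} (because $V^\infty$ is nondecreasing with $V^\infty(0,\cdot)=0$), hence so does $-x$; but sublinearity of each $S_t^\infty$ gives $S^\infty(\Delta(-x))\ge -S^\infty(\Delta x)\ge 0$, while the growth condition of Assumption~\ref{ass1} forces $\{v\mid V^\infty(v,\omega)\le 0\}\cap\reals^{T+1}_+=\{0\}$, so $S^\infty(\Delta(-x))=0$; thus the convex cone $\{x\mid S^\infty(\Delta x)\le 0\}$ is symmetric, hence linear. By Corollary~\ref{cor:rec}, $\C$ is then closed and $\C^\infty=\{c\mid\exists x\in\N_{D^\infty}\colon S^\infty(\Delta x)+c\le 0\}$; in particular $\C^\infty\cap\reals^{T+1}_+=\{0\}$ by the same symmetrization argument. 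The remaining implication $p^0\in\A^\infty\Rightarrow p^0\in\C^\infty$ is the step I expect to be the main obstacle: the condition $\varphi^\infty(p^0)\le 0$ says (via Theorem~\ref{thm:cl}) that $p^0$ can be hedged into the recession acceptance cone $\{v\mid V^\infty(v,\cdot)\le 0\}$ at asymptotically vanishing cost, and this must be upgraded to an outright superhedge of $p^0$ into $\reals^{T+1}_-$; the conversion again rests on sublinearity of $S^\infty$ and on $\{V^\infty(\cdot,\omega)\le 0\}$ meeting the nonnegative orthant only at the origin. Finally, \ref{rec}$\Leftrightarrow$\ref{dual}: since $\C$ is closed and convex, $\C^\infty$ equals the polar of the barrier cone $\dom\sigma_\C$; since $\F_0$ is trivial, $q_0$ is constant and $\langle p^0,q\rangle=q_0$ for every $q\in\Q^a$; and as $\dom\sigma_\C$ is a convex cone, ``$p^0\notin(\dom\sigma_\C)^{\circ}$'' is equivalent to the existence of $q\in\dom\sigma_\C$ with $q_0>0$, whence, after rescaling, $q_0=1$.

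Granting the equivalent conditions, $\pi^0_s$ is proper, closed and convex, so $\pi^0_s=(\pi^0_s)^{**}$ by the biconjugate theorem. Using $\epi\pi^0_s=\{(d+\alpha p^0,\alpha)\mid d\in\A,\ \alpha\in\reals\}$ I would compute, for $q\in\Q^a$,
\[
(\pi^0_s)^*(q)=\sup_{d\in\A,\ \alpha\in\reals}\bigl\{\langle d,q\rangle+\alpha(q_0-1)\bigr\}=\sigma_\A(q)+\sup_{\alpha\in\reals}\alpha(q_0-1),
\]
which equals $\sigma_\A(q)$ when $q_0=1$ and $+\infty$ otherwise; biconjugating yields $\pi^0_s(c)=\sup_{q\in\Q^a}\{\langle c,q\rangle-\sigma_\A(q)\mid q_0=1\}$.

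For the last assertion, assume in addition $\inf\varphi<0$ and the hypotheses of Theorem~\ref{thm:varphi*}. First, $\A=\B+\C$ as subsets of $\M^a$: ``$\supseteq$'' holds because $V$ is nondecreasing, and for ``$\subseteq$'' one uses the attainment part of Theorem~\ref{thm:cl} — if $\varphi(c)\le 0$ and $x$ attains \eqref{alm}, then $d:=-S(\Delta x)\in\C$ and $c-d\in\B$. Hence $\sigma_\A=\sigma_{\B+\C}=\sigma_\B+\sigma_\C$. The same attainment argument gives $\inf\varphi<0\Leftrightarrow\inf EV<0$; this Slater condition renders the support function of a zero-sublevel set equal to its Lagrangian dual, so applying it to the closed proper convex function $\varphi$, together with $\varphi^*=EV^*+\sigma_\C$ (Corollary~\ref{cor:support2}) and positive homogeneity of $\sigma_\C$, gives
\[
\sigma_\A(q)=\inf_{\alpha>0}\alpha\varphi^*(q/\alpha)=\inf_{\alpha>0}\bigl(\alpha EV^*(q/\alpha)+\sigma_\C(q)\bigr)=\Bigl(\inf_{\alpha>0}\alpha EV^*(q/\alpha)\Bigr)+\sigma_\C(q);
\]
comparing with $\sigma_\A=\sigma_\B+\sigma_\C$ (and applying the same Lagrangian formula directly to $EV$ for the values outside $\dom\sigma_\C$) identifies $\sigma_\B(q)=\inf_{\alpha>0}\alpha EV^*(q/\alpha)$.
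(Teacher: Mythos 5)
Your overall architecture is the same as the paper's: reduce conditions \ref{plsc}--\ref{dual} to the single statement $p^0\notin\A^\infty$, identify $\A^\infty$ with $\C^\infty$, obtain the dual representation by conjugating through the acceptance set (the paper packages your epigraph computation and the equivalence \ref{rec}$\Leftrightarrow$\ref{dual} into Lemma~\ref{lem:piD}), and derive the penalty decomposition by Lagrangian duality with Slater point $\inf\varphi<0$. Those parts are fine.

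The genuine gap is exactly where you flag it: the inclusion $\A^\infty\subseteq\C^\infty$ is never proved, and the mechanism you gesture at would not close it. You propose to upgrade ``$V^\infty(S^\infty(\Delta x)+p^0)=0$ a.s.''\ to ``$S^\infty(\Delta x)+p^0\le 0$'' using only the growth condition $\{v\in\reals^{T+1}_+\mid V^\infty(v,\omega)\le 0\}=\{0\}$ from Assumption~\ref{ass1}. That condition controls the zero set of $V^\infty$ only on the nonnegative orthant, whereas the vector $S^\infty(\Delta x)+p^0$ has no sign; for instance $V^\infty(v)=\max(v_0+\dots+v_T,0)$ satisfies Assumption~\ref{ass1} and $V^\infty\ge 0$ yet vanishes far outside $\reals^{T+1}_-$. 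The paper's route is different and shorter: by \cite[Corollary~6B]{roc66}, $\A^\infty=\{c\in\M^a\mid\varphi^\infty(c)\le 0\}$, and one then feeds in the explicit recession formula $\varphi^\infty(c)=\inf_{x\in\N_{D^\infty}}EV^\infty(S^\infty(\Delta x)+c)$ supplied by Theorem~\ref{thm:cl} together with the standing hypothesis $V^\infty\ge 0$ to conclude $\A^\infty=\{c\mid\exists x\in\N_{D^\infty}:S^\infty(\Delta x)+c\le 0\}$, which is $\C^\infty$ by Corollary~\ref{cor:rec}. So the identification is done at the level of the recession \emph{function} of $\varphi$, not by a hedging ``upgrade'' argument; you should use that formula (which you already cite for $\C^\infty$ but not for $\A^\infty$) rather than leave the crux as an expected obstacle.

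Two smaller points. First, your auxiliary claim that Assumptions~\ref{ass1} and~\ref{ass3} force linearity of $\{x\in\N_{D^\infty}\mid S^\infty(\Delta x)\le 0\}$ is unnecessary detour-work: Corollary~\ref{cor:rec} is invoked by the paper under exactly the linearity hypothesis that Assumption~\ref{ass3} already delivers in the form needed. Second, the set identity $\A=\B+\C$ that you use to get $\sigma_\A=\sigma_\B+\sigma_\C$ requires $S(\Delta x)\in\M^a$ for the optimal $x$, which is not guaranteed; it is also redundant, since your own Lagrangian computation $\sigma_\A(q)=\inf_{\alpha>0}\alpha\varphi^*(q/\alpha)=\inf_{\alpha>0}\alpha EV^*(q/\alpha)+\sigma_\C(q)$ combined with the separate Lagrangian identity $\sigma_\B(q)=\inf_{\alpha>0}\alpha EV^*(q/\alpha)$ (valid because $\inf EV=\inf\varphi<0$, and using Assumption~\ref{ass2} with the interchange rule to pass from $\M^a$ to $L^1$) already yields the decomposition — this is precisely how the paper argues.
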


\begin{proof}
Closedness of $\varphi$ in Theorem~\ref{thm:cl} implies the closedness of $\A$. By Lemma~\ref{lem:piD} in the appendix, the dual representation is then valid under the first two conditions which are both equivalent to $p^0\notin\A^\infty$ which in turn is equivalent to the existence of a $q\in\dom\sigma_\A$ with $\langle p^0,q\rangle=1$. Here, $\langle p^0,q\rangle=q_0$ since $p^0=(1,0,\ldots,0)$ and $\F_0$ is the trivial sigma field by assumption. By \cite[Corollary~6B]{roc66}, $\A^\infty=\{c\in\M^a\,|\,\varphi^\infty(c)\le 0\}$. When $V^\infty\ge 0$, the expression for $\varphi^\infty$ in Theorem~\ref{thm:cl} yields
\[
\A^\infty=\{c\in\M^a\,|\,\exists x\in\N_{D^\infty}:\ S^\infty(\Delta x)+c\le 0\},
\]
which, by Corollary~\ref{cor:rec}, equals $\C^\infty$. Thus, the first two conditions are both equivalent to 3. Another application of Lemma~\ref{lem:piD} now implies that 3 is equivalent to 4. This finishes the proof of the first claim.

By Lagrangian duality (see e.g.\ Example~1'' on page 45 of \cite{roc74}), the condition $\inf\varphi<0$ implies
\begin{align*}
\sigma_\A(q) &= \sup_{c\in \M^a}\{\langle c,q\rangle\,|\,\varphi(c)\le 0\}\\
&= \inf_{\alpha>0}\sup_{c\in \M^a}\{\langle c,q\rangle-\alpha\varphi(c)\}\\
&= \inf_{\alpha>0}\alpha\varphi^*(q/\alpha)\\
&= \inf_{\alpha>0}\alpha EV^*(q/\alpha) + \sigma_\C(q),
\end{align*}
where the last equality comes from Corollary~\ref{cor:support2} and the fact that $\sigma_\C$ is positively homogeneous. By Lagrangian duality again,
\begin{align*}
\sigma_\B(q) &= \sup_{c\in\M^a}\{\langle c,q\rangle\,|\,EV(c)\le 0\}.
\end{align*}
Under Assumption~\ref{ass2}, we have for all $q\in\Q^a$
\begin{align*}
\sup_{c\in\M^a}\{\langle c,q/\alpha\rangle- EV(c)\} &= \sup_{c\in L^1}\{\langle c,q/\alpha\rangle- EV(c)\}\\ 
&=EV^*(q/\alpha),
\end{align*}
where the last equality comes from the interchange rule \cite[Theorem~14.60]{rw98}.
\end{proof}

The first part of Theorem~\ref{thm:acc} is analogous to Corollary~1 in \cite[Section~4.3]{fkm15} which was concerned with risk measures in a general single period setting. The dual representation in the second part is analogous to the one in \cite[Proposition~4.104]{fs11} in the single period setting with portfolio constraints and linear trading costs. See also \cite[Theorem~3.6]{be5}, which gives a dual representation for the infimal convolution of two convex risk measures.

When $\C$ is a cone, the dual representation in Theorem~\ref{thm:acc} can be written as
\[
\pi^0_s(c) = \sup\left\{\left.\langle c,q\rangle - \sigma_\B(q)\,\right|\,q\in\C^*,\ q_0=1\right\}.
\]
If, in addition, $EV$ is sublinear, then $\B=\{c\in\M^a\,|\,EV(c)\le 0\}$ is a cone and $\sigma_\B=\delta_{\B^*}$ (the indicator function of the polar cone), so we get the more familiar expression
\begin{equation}\label{gdb}
\pi^0_s(c) = \sup\left\{\left.\langle c,q\rangle \,\right|\,q\in\C^*\cap\B^*,\ q_0=1\right\}.
\end{equation}
In the completely risk averse case where $V=\delta_{\reals^{T+1}_-}$, we have $\B=\M^a_-$ and $\B^*=\Q^a_+$. Since $\C^*\subset\Q^a_+$, we thus recover the dual representation of the superhedging cost; see e.g.\ \cite{pen12}. In general, $\B$ may be interpreted much like the set of ``desirable claims'' in the theory of good-deal bounds; see \cite{ch2} and the references therein. In the dual representation above, the polar cone $\B^*$ restricts the set of stochastic discount factors used in the valuation of the claim $c$, thus making the value lower than the superhedging cost. This is simply a dual formulation of the no-arbitrage bound obtained by purely algebraic arguments in \cite{pen14}; see Section~\ref{sec:oicv}. The general structure is similar also to the models of ``two-price economies'' in \cite{ebe15}, \cite{mad15} and their references. In particular, expression \eqref{gdb} has the same form as the dual representation of the ``ask price'' in \cite[Section~3]{mad15} which addresses conical markets without perfectly liquid numeraire in the context of finite probability spaces.

While Theorem~\ref{thm:acc} is formulated for short positions, it can be immediately translated to accounting values for long positions through the identity $\pi^0_l(c)=-\pi^0_s(-c)$. In particular, the dual representation of $\pi^0_l$ in the general becomes
\[
\pi^0_l(c) = \inf_{q\in\Q^a}\left\{\langle c,q\rangle + \sigma_\A(q)\,|\,q_0=1\right\},
\]
while in the conical case
\begin{equation*}\label{gdbs}
\pi^0_l(c) = \inf\left\{\left.\langle c,q\rangle \,\right|\,q\in\C^*\cap\B^*,\ q_0=1\right\}.
\end{equation*}

If we ignore the financial market (by setting $S\equiv 0$), the last part of Theorem~\ref{thm:acc} gives a multiperiod extension of \cite[Theorem~4.115]{fs6} to extended real-valued random loss functions. Besides the generality, the convex analytic proof above is considerably simpler.

\subsection{Indifference swap rates}

This section gives a dual representation for the indifference swap rate $\pi_s(\bar c,p;c)$. The arguments involved are very similar to those in the previous section once we notice that the indifference swap rate can be expressed as
\[
\pi_s(\bar c,p;c) = \inf\{\alpha\,|\, c-\alpha p\in\A(\bar c)\},
\]
where $\A(\bar c)=\{c\in\M^a\,|\,\varphi(\bar c+c)\le\varphi(\bar c)\}$ is the set of claims that an agent with current financial position $\bar c\in\M^a$ deems acceptable given his risk preferences and ability to trade in the illiquid market described by $S$ and $D$.

Analogously to the dual representation for risk measures and accounting values in the previous section, the dual representation for $\pi_s(\bar c,p;\cdot)$ below involves the support function of $\A(\bar c)$. Under mild conditions, the support function splits again into two terms. One of them is still the support function of the set $\C$ of claims that can be superhedged without a cost but now the second term is the support function of the set
\[
\B(\bar c) = \{c\in\M^a\,|\,EV(\bar c+c)\le\varphi(\bar c)\}.
\]
This is the set of claims an agent with financial position $\bar c\in\M^a$ and risk preferences $EV$ would deem at least as desirable as the possibility to trade in a financial market described by $S$ and $D$.

\begin{theorem}\label{thm:idsr}
If the assumptions of Theorem~\ref{thm:cl} are satisfied and $V^\infty\ge 0$, then, for every $\bar c\in\dom\varphi$ and $p\in\M^a$, the conditions
\begin{enumerate}
\item
$\pi_s(\bar c,p;\cdot)$ is proper and lower semicontinuous,
\item
$\pi_s(\bar c,p;0)>-\infty$,
\item
$p\notin\C^\infty$,
\item
$\langle p,q\rangle=1$ for some $q\in\dom\sigma_\C$
\end{enumerate}
are equivalent and imply the validity of the dual representation
\begin{align*}
\pi_s(\bar c,p;c) &= \sup_{q\in\Q^a}\left\{\left.\langle c,q\rangle - \sigma_{\A(\bar c)}(q)\,\right|\, \langle p,q\rangle=1\right\}
\end{align*}
If the above conditions hold for some $\bar c\in\dom\varphi$, then they hold for every $\bar c\in\dom\varphi$. If $\inf\varphi<\varphi(\bar c)$, then under the assumptions of Theorem~\ref{thm:varphi*}, $\sigma_{\A(\bar c)}=\sigma_{\B(\bar c)}+\sigma_\C$, where
\[
\sigma_{\B(\bar c)}(q) = \inf_{\alpha>0}\alpha[EV^*(q/\alpha)-\langle\bar c,q/\alpha\rangle-\varphi(\bar c)].
\]
\end{theorem}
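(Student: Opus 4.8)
The plan is to mirror the proof of Theorem~\ref{thm:acc}, since $\pi_s(\bar c,p;\cdot)$ has exactly the same structure as $\pi^0_s$ with $\A$ replaced by $\A(\bar c)$, the premium $p^0$ replaced by $p$, and the acceptance threshold $0$ replaced by $\varphi(\bar c)$. First I would observe that, since $\bar c\in\dom\varphi$, we have $\varphi(\bar c+\cdot)-\varphi(\bar c)$ is a proper convex function on $\M^a$, closed by Theorem~\ref{thm:cl} (closedness of $\varphi$ is preserved under a finite translation and a finite vertical shift), hence $\A(\bar c)$ is a closed convex set. Then $\pi_s(\bar c,p;c)=\inf\{\alpha\mid c-\alpha p\in\A(\bar c)\}$ is exactly the gauge-type functional treated in Lemma~\ref{lem:piD} in the appendix, so that lemma gives the equivalence of conditions 1, 2, 4 and the validity of the dual representation, once we know that condition 2 (equivalently 1) is the same as $p\notin\A(\bar c)^\infty$.

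The next step is to identify $\A(\bar c)^\infty$ with $\C^\infty$, which gives the equivalence with condition 3. By \cite[Corollary~6B]{roc66}, $\A(\bar c)^\infty=\{c\in\M^a\mid(\varphi(\bar c+\cdot))^\infty(c)\le 0\}$, and since translating the argument does not change the recession function, $(\varphi(\bar c+\cdot))^\infty=\varphi^\infty$. The expression for $\varphi^\infty$ from Theorem~\ref{thm:cl}, combined with the hypothesis $V^\infty\ge 0$ exactly as in the proof of Theorem~\ref{thm:acc}, yields $\varphi^\infty(c)\le 0$ iff there exists $x\in\N_{D^\infty}$ with $S^\infty(\Delta x)+c\le 0$, which by Corollary~\ref{cor:rec} is precisely $\C^\infty$. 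Since this characterization of $\A(\bar c)^\infty$ does not involve $\bar c$, it follows immediately that the four conditions either hold simultaneously for every $\bar c\in\dom\varphi$ or for none; this is the ``if the conditions hold for some $\bar c$'' clause. A second application of Lemma~\ref{lem:piD} gives the equivalence of 3 and 4 (note $\langle p,q\rangle$ plays the role that $q_0$ played in Theorem~\ref{thm:acc}, but here we do not need $\F_0$ trivial because $p$ is a general element of $\M^a$ and the pairing $\langle p,\cdot\rangle$ is used directly).

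For the separation $\sigma_{\A(\bar c)}=\sigma_{\B(\bar c)}+\sigma_\C$, I would run the same Lagrangian-duality computation as in Theorem~\ref{thm:acc}. Under $\inf\varphi<\varphi(\bar c)$ the Slater-type condition for the constraint $\varphi(\bar c+c)\le\varphi(\bar c)$ holds, so
\begin{align*}
\sigma_{\A(\bar c)}(q)
&=\sup_{c}\{\langle c,q\rangle\mid\varphi(\bar c+c)\le\varphi(\bar c)\}\\
&=\inf_{\alpha>0}\sup_{c}\{\langle c,q\rangle-\alpha\varphi(\bar c+c)+\alpha\varphi(\bar c)\}\\
&=\inf_{\alpha>0}\{\alpha\varphi^*(q/\alpha)-\alpha\langle\bar c,q/\alpha\rangle+\alpha\varphi(\bar c)\},
\end{align*}
where in the last line I substituted $c'=\bar c+c$ and used the definition of the conjugate. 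Now invoke Corollary~\ref{cor:support2}, $\varphi^*=EV^*+\sigma_\C$, together with positive homogeneity of $\sigma_\C$, to split off the $\sigma_\C(q)$ term and leave $\inf_{\alpha>0}\alpha[EV^*(q/\alpha)-\langle\bar c,q/\alpha\rangle-\varphi(\bar c)]$, which is the stated formula for $\sigma_{\B(\bar c)}$. Finally, to justify that this infimal expression really is the support function of $\B(\bar c)=\{c\mid EV(\bar c+c)\le\varphi(\bar c)\}$, I would repeat the argument from Theorem~\ref{thm:acc}: Lagrangian duality for the constraint $EV(\bar c+c)\le\varphi(\bar c)$ (using $\inf\varphi<\varphi(\bar c)$ and $\varphi\le EV$ to get a strictly feasible point) plus the interchange rule \cite[Theorem~14.60]{rw98} and Assumption~\ref{ass2} to reduce the supremum over $\M^a$ to one over $L^1$, giving $\sup_c\{\langle c,q/\alpha\rangle-EV(\bar c+c)\}=EV^*(q/\alpha)-\langle\bar c,q/\alpha\rangle$. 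The main obstacle is the bookkeeping around the translation by $\bar c$ and the vertical shift by $\varphi(\bar c)$: one must be careful that $\varphi(\bar c)$ is finite (guaranteed by $\bar c\in\dom\varphi$) so that all the Fenchel-conjugate manipulations and the Slater condition make sense, and that closedness of $\varphi$ transfers to $\A(\bar c)$; everything else is a faithful rerun of the proof of Theorem~\ref{thm:acc}.
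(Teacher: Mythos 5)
Your proposal follows essentially the same route as the paper's proof: closedness of $\varphi$ gives closedness of $\A(\bar c)$, Lemma~\ref{lem:piD} together with the identification $\A(\bar c)^\infty=\{c\in\M^a\,|\,\varphi^\infty(c)\le 0\}=\C^\infty$ (via \cite[Corollary~6B]{roc66}, $V^\infty\ge 0$, Theorem~\ref{thm:cl} and Corollary~\ref{cor:rec}) yields the equivalences, the dual representation and the independence of $\bar c$, and the decomposition of $\sigma_{\A(\bar c)}$ is obtained by the same two Lagrangian-duality computations combined with Corollary~\ref{cor:support2} and positive homogeneity of $\sigma_\C$. One small remark: your displayed computation correctly produces the term $+\alpha\varphi(\bar c)$ (the Lagrangian for the constraint $\varphi(\bar c+c)-\varphi(\bar c)\le 0$ is $\langle c,q\rangle-\alpha[\varphi(\bar c+c)-\varphi(\bar c)]$, whose constant part is $+\alpha\varphi(\bar c)$), yet in the next sentence you assert that this equals the stated formula, which carries $-\varphi(\bar c)$; be aware that this sign discrepancy is present in the paper's own proof and theorem statement as well, so it does not reflect a flaw in your method, but your write-up should not silently flip the sign to match the target formula.
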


\begin{proof}
The closedness of $\varphi$ in Theorem~\ref{thm:cl} implies the closedness of $\A(\bar c)$ and then, $\A^\infty(\bar c)=\{c\in\M^a\,|\,\varphi^\infty(c)\le 0\}$, by \cite[Corollary~6B]{roc66}. The first claim is now proved just like in Theorem~\ref{thm:acc}.

By Lagrangian duality, the condition $\inf\varphi<\varphi(\bar c)$ implies
\begin{align*}
\sigma_{\A(\bar c)}(q) &= \sup_{c\in \M^a}\{\langle c,q\rangle\,|\,\varphi(\bar c + c)-\varphi(\bar c)\le 0\}\\
&= \inf_{\alpha>0}\sup_{c\in \M^a}\{\langle c,q\rangle-\alpha[\varphi(\bar c + c)-\varphi(\bar c)]\}\\
&= \inf_{\alpha>0}\sup_{c\in \M^a}\{\langle c,q\rangle-\alpha[\varphi(c)-\varphi(\bar c)]\} - \langle\bar c,q\rangle\\
&= \inf_{\alpha>0}\alpha[\varphi^*(q/\alpha)-\varphi(\bar c)] - \langle\bar c,q\rangle\\
&= \inf_{\alpha>0}\alpha[EV^*(q/\alpha)-\varphi(\bar c)] - \langle\bar c,q\rangle + \sigma_\C(q),
\end{align*}
where the last equality comes from Corollary~\ref{cor:support2} and the fact that $\sigma_\C$ is positively homogeneous. By Lagrangian duality again
\begin{align*}
\sigma_{\B(\bar c)}(q) &= \sup_{c\in\M^a}\{\langle c,q\rangle\,|\,EV(\bar c+c)\le\varphi(\bar c)\}\\
&= \inf_{\alpha>0}\alpha[EV^*(q/\alpha)-\varphi(\bar c)] - \langle\bar c,q\rangle,
\end{align*}
just like in the proof of Theorem~\ref{thm:acc}.
\end{proof}

The structure and assumptions of Theorem~\ref{thm:idsr} are essentially the same as those in Theorem~\ref{thm:acc}. The main difference is in the interpretations. While the accounting value looks for least amount of initial cash that allows one to find an acceptable hedging strategy, the indifference swap rate compares two financial positions. The difference is reflected in the definitions of the ``acceptance sets'' $\B$ and $\B(\bar c)$, where the latter compares claims with the current financial position of a rational agent who has access to financial markets.

Specializing to market models with perfectly liquid cash, we obtain illiquid discrete-time versions of pricing formulas obtained in e.g.\ \cite{oz9} and~\cite{bfg11}.

\begin{example}[Numeraire and martingale measures]
Let $p=(1,0,\ldots,0)$ and assume that 
\[
S_t(x,\omega) = x^0 + \tilde S_t(\tilde x,\omega)\quad\text{and}\quad D_t(\omega) = \reals\times\tilde D_t(\omega)
\]
with a sublinear $\tilde S$ and conical $\tilde D$ as in Example~\ref{ex:liquid}. Like at the end of Section~\ref{sec:ALM}, we can then write the dual representation in terms of probability measures as
\begin{align*}
\pi(\bar c;c) &= \sup_{y\in\C^*}\left\{\left.\langle\bar c+c,y\rangle - \sigma_{\A(\bar c)}(y)\,\right|\, \langle p,y\rangle=1\right\}\\
&= \sup_{Q\in\P}\sup_{\alpha>0}\left\{E^Q\sum_{t=0}^T(\bar c_t+c_t)-\alpha\left[E\sum_{t=0}^Tv_t^*(E_t\frac{dQ}{dP}/\alpha)-\varphi(\bar c)\right]\right\}.
\end{align*}
This can be seen as an illiquid discrete-time version of the pricing formulas in~\cite[Section~7]{oz9} and~\cite[Section~4]{bfg11}.
\end{example}


\section{Contingent claims with physical delivery}\label{sec:physical}

In this section, we study problems of the form
\begin{equation}\label{alm+}\tag{ALM+}
\minimize\quad E V(S(\Delta x+\theta)+c)\quad\ovr\quad x\in\N_D,
\end{equation}
where $\theta$ is a $\reals^J$-valued process that can be interpreted as a contingent claim with {\em physical delivery} (portfolio-valued contingent claims) that the agent has to deliver in addition to the cash-settled claim $c$. Superhedging of contingent claims with physical delivery has been studied e.g.\ in \cite{kab99,pp10,kre9}. Our formulation is close to \cite{gr15} who studied optimal investment and superhedging under superlinear trading costs in an unconstrained continuous time market model with perfectly liquid cash.

We will denote the optimum value of \eqref{alm+} by $\bar\varphi(\theta,c)$. 
Clearly, the optimum value function $\varphi$ of \eqref{alm} is simply the restriction of $\bar\varphi(0,\cdot)$ to $\M^a$. Combined with some functional analytic arguments, this simple identity allow us to derive the main results of the previous sections from corresponding results on $\bar\varphi$. The introduction of the extra parameter $\theta$, in fact, simplifies the analysis and provides extra information about \eqref{alm}.

\subsection{Nonadapted claims}

We start by analyzing \eqref{alm+} with possibly nonadapted claims $(\theta,c)$. More precisely, we will study the optimum value $\bar\varphi(\theta,c)$ of \eqref{alm+} on the space
\[
\U:=L^\infty(\Omega,\F,P;\reals^{J(T+1)})\times \M
\]
of claim processes $u=(\theta,c)$ whose physical component $\theta$ is essentially bounded and the cash component $c$ belongs to $\M$. In most situations, one is interested in adapted claims but our formulation allows also for situations where the investor may remain unaware of the exact claim amounts until a certain future time (as may happen e.g.\ in asset management departments of large financial institutions). The space $\U$ is in separating duality with
\[
\Y:=L^1(\Omega,\F,P;\reals^{J(T+1)})\times \Q
\]
under the bilinear form $\langle u,y\rangle:=E(u\cdot y)$. We will split the dual variables $y$ into two processes $w$ and $q$ corresponding to the splitting of $u$ into $\theta$ and $c$. One can thus express the bilinear form as
\[
\langle u,y\rangle = \langle\theta,w\rangle + \langle c,q\rangle.
\]

The following result shows that, under Assumption~\ref{ass1}, the conjugate of $\bar\varphi$ can be expressed in terms of the loss function and the market model. The proof is given in the appendix.

\begin{theorem}\label{thm:barvarphi*}
If $V$ satisfies Assumption~\ref{ass1}, then the conjugate of the optimum value function $\bar\varphi$ of \eqref{alm+} with respect to the pairing of $\U$ with $\Y$ can be expressed as
\[
\bar\varphi^*(y) = E\left\{V^*(q) + \sum_{t=0}^T[(q_tS_t)^*(w_t) + \sigma_{D_t}(E_t\Delta w_{t+1})]\right\}.
\]
\end{theorem}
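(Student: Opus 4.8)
**Proof proposal for Theorem~\ref{thm:barvarphi*}.**

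The plan is to compute $\bar\varphi^*$ directly from the definition, using the fact that $\bar\varphi$ is the optimum value of a convex stochastic optimization problem with the trading strategy $x$ as the decision variable and $(\theta,c)$ as the parameter. First I would write, for $y=(w,q)\in\Y$,
\[
\bar\varphi^*(y) = \sup_{(\theta,c)\in\U}\left\{\langle\theta,w\rangle + \langle c,q\rangle - \bar\varphi(\theta,c)\right\}
= \sup_{x\in\N}\ \sup_{(\theta,c)}\left\{\langle\theta,w\rangle + \langle c,q\rangle - EV(S(\Delta x+\theta)+c) - \delta_{\N_D}(x)\right\},
\]
where I have pulled the outer supremum over $x$ out (this is legitimate since $\bar\varphi(\theta,c)=\inf_x\{\cdots\}$). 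For fixed $x\in\N_D$, I would change variables: set $z_t:=\Delta x_t+\theta_t$ (so $\theta_t=z_t-\Delta x_t$) and keep $c$. Then the inner supremum becomes
\[
\langle -\Delta x, w\rangle + \sup_{z,c}\left\{\langle z,w\rangle + \langle c,q\rangle - E V(S(z)+c)\right\},
\]
and since $z$ ranges over all of $L^\infty(\Omega,\F,P;\reals^{J(T+1)})$ and $c$ over all of $\M$ (the substitution is a bijection for each fixed $x$), the supremum over $z,c$ no longer depends on $x$.

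The key analytic step is to evaluate $\sup_{z,c}\{\langle z,w\rangle + \langle c,q\rangle - EV(S(z)+c)\}$ and identify it as $E\{V^*(q) + \sum_t (q_tS_t)^*(w_t)\}$. Here I would first fix $z$ and optimize over $c$ pointwise: by the interchange rule for integral functionals \cite[Theorem~14.60]{rw98} (applicable because $\M$ is decomposable and $V$ is a normal integrand, with the well-definedness of $EV$ guaranteed by Assumption~\ref{ass1} via Corollary~\ref{cor:vs}),
\[
\sup_{c\in\M}\{\langle c,q\rangle - EV(S(z)+c)\} = E\left\{V^*(q) + q\cdot S(z)\right\} = EV^*(q) + \sum_{t=0}^T E\,[q_t S_t(z_t)]
\]
after the further substitution $c\mapsto c - S(z)$. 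Then I optimize over $z$, again using the interchange rule componentwise in $t$: $\sup_{z_t}\{\langle z_t,w_t\rangle - E[q_t S_t(z_t)]\} = E[(q_tS_t)^*(w_t)]$, where the notation $(q_tS_t)$ and its conjugate are exactly as set up before Theorem~\ref{thm:varphi*} (including the convention for $q_t=0$), and $q_tS_t$ is a normal integrand by \cite[Proposition~14.44 and Proposition~14.46]{rw98}. Collecting terms, the inner supremum over $(\theta,c)$ for fixed $x\in\N_D$ equals $\langle-\Delta x,w\rangle + EV^*(q) + \sum_t E[(q_tS_t)^*(w_t)]$.

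It remains to take the supremum over $x\in\N_D$. The only $x$-dependence is the linear term $\langle-\Delta x,w\rangle = -\sum_t E[w_t\cdot\Delta x_t]$; applying summation by parts (Abel's identity) and using $x_{-1}=0$, $x_T=0$ (the latter since $D_T\equiv\{0\}$), this rearranges to $\sum_t E[x_t\cdot\Delta w_{t+1}]$ with the dummy $w_{T+1}$, and since $x_t$ is $\F_t$-measurable one may replace $\Delta w_{t+1}$ by $E_t\Delta w_{t+1}$. Hence $\sup_{x\in\N_D}\langle-\Delta x,w\rangle = \sup_{x\in\N_D}\sum_t E[x_t\cdot E_t\Delta w_{t+1}] = \sum_t E\sigma_{D_t}(E_t\Delta w_{t+1})$, the last equality again by the interchange rule together with closedness and convexity of $D_t$ and $0\in D_t$ (so $\sigma_{D_t}\ge 0$, ensuring the integral is well-defined in $\ereals$). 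Adding this to the previous expression gives exactly the claimed formula. The main obstacle, and the place requiring care, is the repeated invocation of the interchange rule: one must check that the integrands involved ($V$, $q_tS_t$, $\sigma_{D_t}$) are normal integrands and that the relevant positive or negative parts are integrable so that all the $E\{\cdots\}$ are unambiguously defined in $\ereals$ — but these are precisely the hypotheses baked into Assumption~\ref{ass1}, the standing decomposability of $\M,\Q$, and the cited results of \cite{rw98}. No no-arbitrage or closedness hypothesis is needed here, consistent with the theorem assuming only Assumption~\ref{ass1}.
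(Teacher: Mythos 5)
Your overall strategy (conjugate directly, swap the two suprema, and reduce to pointwise conjugation via the interchange rule) is sound, and your computation correctly produces the stated formula as the conjugate of the value function restricted to \emph{bounded} strategies. The genuine gap is at the step where you claim that, for fixed $x\in\N_D$, the change of variables $z_t=\Delta x_t+\theta_t$ separates the inner supremum into $\langle-\Delta x,w\rangle$ plus an $x$-independent quantity. An element of $\N$ is merely adapted and measurable, with no integrability, so $\langle\Delta x,w\rangle=E[\Delta x\cdot w]$ need not be defined, $z$ ranges over the affine set $\Delta x+L^\infty$ rather than over all of $L^\infty$, and your later use of the tower property, $E[x_t\cdot\Delta w_{t+1}]=E[x_t\cdot E_t\Delta w_{t+1}]$, also requires integrability of the product. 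Your argument therefore establishes only the inequality ``$\ge$'' (by restricting to $x\in\N_D\cap L^\infty$); the reverse inequality --- that enlarging the strategy space from bounded to general adapted strategies does not increase the conjugate --- is precisely the nontrivial content. The paper handles this by introducing the auxiliary value function $\tilde\varphi(u)=\inf_{x\in\N^\infty}Ef(x,u)$, computing $\tilde\varphi^*$ via the Lagrangian and the interchange rule (essentially your computation), and then invoking \cite[Theorem~2]{bpp}, which identifies both $\tilde\varphi^*(y)$ and $\bar\varphi^*(y)$ with $Ef^*(v,y)$ for the particular shift $v_t=E_t\Delta w_{t+1}-\Delta w_{t+1}\in\N^\perp$; this is exactly the step your proposal leaves open. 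The authors even note that an earlier attempt at this passage, \cite[Theorem~2.2]{pen11c}, ``suffers from errors that were corrected in \cite{bpp}'', so this is a known pitfall rather than a formality.

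A secondary issue: your two-stage interchange first optimizes over $c$ via the substitution $c\mapsto c-S(z)$, which presupposes $S(z)\in\M$ for $z\in L^\infty$. Theorem~\ref{thm:barvarphi*} assumes only Assumption~\ref{ass1} and does \emph{not} assume $S(x,\cdot)\in\M^a$ (that hypothesis appears only in Theorem~\ref{thm:varphi*}). This is avoidable by performing a single joint interchange over the decomposable space $\U=L^\infty\times\M$, i.e.\ by conjugating the normal integrand $f(x,u,\omega)$ in $u$ pointwise, which is exactly the Lagrangian computation in the paper's proof.
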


The next result gives sufficient conditions for the closedness of $\bar\varphi$ as well as an expression for the recession function $\bar\varphi^\infty$. The proof can be found in the appendix.

\begin{theorem}\label{thm:barcl}
If Assumptions~\ref{ass1}, \ref{ass3} and \ref{scale} hold and there exists $y\in\Y$ such that
\[
E\left\{V^*(q) + \sum_{t=0}^T[(q_tS_t)^*(w_t) + \sigma_{D_t}(E_t\Delta w_{t+1})]\right\}<\infty,
\]
then $\bar\varphi$ is $\sigma(\U,\Y)$-closed in $\U$, the infimum in \eqref{alm+} is attained for all $u\in\U$ and 
\begin{align*}
\bar\varphi^\infty(u) &= \inf_{x\in\N}EV^\infty(S^\infty(\Delta x+\theta)+c).
\end{align*}
\end{theorem}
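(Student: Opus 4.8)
The plan is to identify \eqref{alm+} with the generic parametric convex stochastic optimization problem treated in \cite{pen11c,pp12,bpp,per16} and to apply the closedness-and-attainment theorem proved there. Define the convex normal integrand
\[
F(x,u,\omega):=V\big(S(\Delta x+\theta,\omega)+c,\omega\big)+\sum_{t=0}^{T}\delta_{D_t(\omega)}(x_t)
\]
on $\reals^{J(T+1)}\times\reals^{J(T+1)}\times\reals^{T+1}\times\Omega$, where $u=(\theta,c)$, $x=(x_t)_{t=0}^T$ and $x_{-1}:=0$; normality of $F$ follows from normality of $S$ and $V$, $\F_t$-measurability of $D_t$, and Corollary~\ref{cor:vs}. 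Then $\bar\varphi(u)=\inf_{x\in\N}EF(x,u)$, so $\bar\varphi$ is exactly the value function of a convex stochastic optimization problem whose decision variable is the adapted strategy $x\in\N$, and $\Y$ is the natural dual space of $\U$.

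The abstract theorem needs two inputs: properness of the value function, and that the cone of ``recession strategies'' $L:=\{x\in\N\mid F^\infty(x,0,\omega)\le 0\ P\text{-a.s.}\}$ be a linear subspace of $\N$. Properness follows from $\bar\varphi(0)\le EV(0)=0<\infty$ (take $x=0$, which is feasible since $0\in D_t$) together with the standing assumption which, via Theorem~\ref{thm:barvarphi*}, furnishes a $y\in\Y$ with $\bar\varphi^*(y)<\infty$; since $\bar\varphi^*\ge-\bar\varphi(0)\ge 0$ everywhere as well, $\bar\varphi^*$ is proper, whence $\bar\varphi$ has an affine minorant and is proper. For the recession structure, Assumption~\ref{ass1} (monotonicity of $V$, $V(0)=0$) lets one run the recession calculus through the monotone composition $V\circ(S(\Delta\cdot)+\,\cdot\,)$ and the separable sum of indicators to get $F^\infty(x,0,\omega)=V^\infty(S^\infty(\Delta x,\omega),\omega)+\sum_{t=0}^T\delta_{D_t^\infty(\omega)}(x_t)$, so that $L=\{x\in\N_{D^\infty}\mid V^\infty(S^\infty(\Delta x))\le 0\}$; this is a linear space precisely by Assumption~\ref{ass3}. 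The abstract theorem then delivers $\sigma(\U,\Y)$-closedness of $\bar\varphi$, attainment of the infimum in \eqref{alm+} for every $u\in\U$, and the recession formula $\bar\varphi^\infty(u)=\inf_{x\in\N}E\,F^\infty(x,u)$, which, with the $\delta_{D_t^\infty}$ terms absorbed into the constraint, is the asserted $\bar\varphi^\infty(u)=\inf_{x\in\N}EV^\infty(S^\infty(\Delta x+\theta)+c)$.

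The delicate point is the role of Assumption~\ref{scale}. It is needed to pass from the recession \emph{integrand} $F^\infty$ to the recession function of the \emph{integral functional}, i.e.\ to justify $(EV)^\infty=E(V^\infty)$ (and, jointly, the corresponding interchange inside $\bar\varphi^\infty$), as well as to guarantee that $EV$, and hence $\bar\varphi$, is a closed functional even though $V$ may be unbounded below. Because of that potential unboundedness, the difference quotients $\alpha^{-1}\big(V(\bar c+\alpha c,\omega)-V(\bar c,\omega)\big)$ need not be integrably bounded from below, so the monotone convergence passage to $V^\infty$ can fail; Assumption~\ref{scale} — equivalently, a nontrivial dilation invariance of $\dom EV^*$, since pointwise $V^\infty(\cdot,\omega)=\sigma_{\dom V^*(\cdot,\omega)}$ — is exactly the hypothesis under which the theory of \cite{bpp,per16} supplies both the closedness of $EV$ and the recession-integral interchange. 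This is the direct analogue of the role the asymptotic elasticity conditions of \cite{ks99,sch1} play in the classical liquid theory; compare Lemma~\ref{lem:rrae}.

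I expect the main obstacle to be precisely this reconciliation: checking that Assumptions~\ref{ass1}, \ref{ass3} and \ref{scale} imply the hypotheses of the cited abstract closedness theorem, most notably the recession-versus-expectation interchange and the clean separable form of $F^\infty$. The closedness and attainment are then ``off the shelf'': the engine in \cite{per16,bpp} exploits the linearity of $L$ to prune a minimizing sequence by subtracting off its $L$-components via a measurable selection, producing a bounded, and then convergent, subsequence in the relevant topology and an optimal $x$.
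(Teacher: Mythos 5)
Your proposal follows the same route as the paper: recast \eqref{alm+} in the parametric format with the normal integrand $f(x,u,\omega)=V(S(\Delta x+\theta,\omega)+c,\omega)+\delta_{D(\omega)}(x)$, compute $f^\infty(x,u,\omega)=V^\infty(S^\infty(\Delta x+\theta,\omega)+c,\omega)+\delta_{D^\infty(\omega)}(x)$ by the recession calculus of \cite{roc70a} and \cite{pen99} so that Assumption~\ref{ass3} gives linearity of $\L=\{x\in\N\mid f^\infty(x,0)\le 0\}$, and then invoke the closedness/attainment result of \cite{per16} (with \cite{bpp}). The only place you are looser than the paper is in stating the second hypothesis of that abstract theorem: it is not mere properness of $\bar\varphi$ but the existence of $(v,y)\in\N^\perp\times\Y$ with $\lambda(v,y)\in\dom Ef^*$ for two distinct $\lambda>0$, which the paper verifies from the stated finiteness assumption (one multiple) together with Assumption~\ref{scale} applied to $EV^*(q)$ and the positive homogeneity of the $(q_tS_t)^*(w_t)$ and $\sigma_{D_t}$ terms (a second multiple) --- precisely the role you assign to Assumption~\ref{scale}, just phrased as a recession/closedness interchange rather than as the explicit two-multiples condition.
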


When $V=\delta_{\reals^{J+1}_-}$, we have $\bar\varphi=\delta_{\bar\C}$, where
\[
\bar\C:=\{(\theta,c)\,|\,\exists x\in\N_D:\ S(\Delta x+\theta)+c\le 0\ P\text{-a.s.}\}
\]
is the set of multivariate claim processes that can be superhedged without a cost. The closedness result above can then be seen as a discrete-time version of \cite[Proposition~3.5]{gr15} which addresses unconstrained continuous-time models with perfectly liquid cash. When $V=\delta_{\reals^{J+1}_-}$, the assumptions of Theorem~\ref{thm:barcl} reduce to the requirement that
\[
\{x\in\N_{D^\infty}\,|\, S^\infty(\Delta x)\le 0\}
\]
is a linear space; see the discussion after Assumption~\ref{ass3} in Section~\ref{sec:cl}. This certainly holds under the superlinear growth condition of \cite{gr15}, which (in the notation of Example~\ref{ex:liquid}) implies that there is a strictly positive adapted process $H$ such that
\[
\tilde S_t(x,\omega)\ge H_t(\omega)|\tilde x|^\alpha\quad\forall\tilde x\in\reals^J.
\]
Indeed, this implies that $\tilde S_t^\infty=\delta_{\{0\}}$ so the linearity condition means that $\{(x^0,0)\in\N_D\,|\,\Delta x^0\le 0\}$ is a linear space. But this is obvious since $x_{-1}=0$ and $D_T=\{0\}$. Note that, unlike the superlinear growth condition, the linearity condition above does allow for cost functions $S$ which are decreasing in some directions, which is quite a natural assumption for assets with free disposal.

Under the assumptions of Theorems~\ref{thm:barvarphi*} and \ref{thm:barcl}, the optimum value of \eqref{alm+} equals that of the {\em dual} problem
\[
\minimize_{(w,q)\in\Y}\quad E\left\{V^*(q) + \sum_{t=0}^T[(q_tS_t)^*(w_t) + \sigma_{D_t}(E_t\Delta w_{t+1}) - \theta_t\cdot w_t - c_tq_t]\right\}.
\]
We know from the general conjugate duality theory that the dual solutions are then the subgradients of the optimum value function $\bar\varphi$ at $u=(\theta,c)$. Much like in \cite{roc74}, the dual variables can be used to give optimality conditions for the solutions of the primal problem \eqref{alm+}. The situation here is slightly different, however, since in \eqref{alm+}, the primal solutions are sought over the vector space $\N$ that lacks an appropriate locally convex topology. Applying the optimality conditions derived for convex stochastic optimization in \cite[Section~3]{bpp}, we obtain the following. The proof is given in the appendix.

\begin{theorem}\label{thm:baroc}
If $\bar\varphi$ is closed at $(\theta,c)$ and Assumption~\ref{ass1} holds, then $x\in\N$ solves \eqref{alm+} and $y\in\Y$ solves the dual problem if and only if they are feasible and
\begin{align*}
E_t\Delta w_{t+1} &\in N_{D_t}(x_t),\\
q &\in\partial V(S(\Delta x+\theta)+c),\\
w_t&\in\partial(q_tS_t)(\Delta x_t+\theta_t).
\end{align*}
Conversely, if $x$ and $y$ are feasible primal-dual pair satisfying the above system, then $x$ solves the primal, $(w,q)$ solves the dual, and $\varphi$ is closed at $(\theta,c)$.
\end{theorem}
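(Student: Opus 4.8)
The plan is to invoke the general optimality conditions for convex stochastic optimization from \cite[Section~3]{bpp}, after writing \eqref{alm+} in the abstract format to which that theory applies. First I would set up the Lagrangian: introduce the linear constraint mapping that takes a strategy $x\in\N$ and the claim $(\theta,c)$ to the vector of arguments $(S_t(\Delta x_t+\theta_t)+c_t)_{t=0}^T$ together with the portfolio-constraint residuals, so that \eqref{alm+} becomes a problem of minimizing $E\bar L$ for a suitable normal integrand over $x\in\N$. Since $\bar\varphi$ is assumed closed at $(\theta,c)$, the optimal value of \eqref{alm+} equals that of its dual, and the abstract duality theory guarantees that a feasible primal $x$ and a feasible dual $y=(w,q)$ are jointly optimal if and only if the pointwise Fenchel equality holds for the integrand of the Lagrangian. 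The task is then to unwind that single pointwise equality into the three displayed inclusions.

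The key algebraic step is the following separation. The Lagrangian integrand evaluated along $(x,y)$ splits, state by state, into a sum of independent blocks: for each $t$, a term coupling $w_t$ with $\Delta w_{t+1}$ through the constraint $x_t\in D_t$, a term $V^*(q)$ paired against $V(S(\Delta x+\theta)+c)$, and a term $(q_tS_t)^*(w_t)$ paired against $(q_tS_t)(\Delta x_t+\theta_t)$. Because the Fenchel–Young inequality holds with equality for a sum if and only if it holds with equality for each summand, the global optimality equality is equivalent to the simultaneous validity of (i) $E_t\Delta w_{t+1}\in N_{D_t}(x_t)$, which is exactly the subdifferential form of the complementarity between $x_t\in D_t$ and its support function $\sigma_{D_t}$; (ii) $q\in\partial V(S(\Delta x+\theta)+c)$, the equality case in Fenchel's inequality for $V$ and $V^*$; and (iii) $w_t\in\partial(q_tS_t)(\Delta x_t+\theta_t)$, the equality case for $q_tS_t$ and its conjugate $(q_tS_t)^*$. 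Assumption~\ref{ass1} is what is needed to ensure $V$ and the composite integrands are genuine convex normal integrands for which conjugation and subdifferentiation behave as in \cite{rw98}, and to ensure the objective of \eqref{alm+} is a well-defined convex function (Corollary~\ref{cor:vs}), so that the abstract machinery applies.

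For the converse direction I would run the same computation backwards: if $x$ and $y$ are feasible and satisfy the three inclusions, then adding the three corresponding Fenchel equalities and integrating shows that the primal objective at $x$ plus the dual objective at $y$ equals $\langle u,y\rangle - \langle u,y\rangle = 0$ up to the claim-pairing terms, i.e.\ the primal value at $x$ equals the dual value at $y$; weak duality then forces both to be optimal and, since the two optimal values coincide, forces $\bar\varphi$ to be closed (lower semicontinuous) at $(\theta,c)$ by the biconjugate theorem. The main obstacle I anticipate is a bookkeeping one rather than a conceptual one: carefully matching the abstract constraint structure of \cite[Section~3]{bpp} to the present formulation, in particular handling the ``predictable'' coupling $\sigma_{D_t}(E_t\Delta w_{t+1})$ correctly --- one must check that the conditional expectation appearing there is exactly what the interchange rule and the adapted-projection structure produce, and that the dummy variable $w_{T+1}$ and the terminal condition $D_T\equiv\{0\}$ cause no boundary issues. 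Once the dictionary between the two setups is fixed, the three optimality conditions drop out of the pointwise Fenchel equality essentially by inspection.
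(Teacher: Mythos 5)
Your overall route is the paper's route: reduce joint optimality, via the closedness of $\bar\varphi$ at $(\theta,c)$ and Theorem~\ref{thm:barvarphi*}, to the abstract saddle-point conditions of \cite[Section~3]{bpp} for the Lagrangian integrand $l$, and then unwind those conditions. Your sufficiency (``converse'') direction is also essentially right: summing the three Fenchel equalities, cancelling the intermediate terms $q_tS_t(\Delta x_t+\theta_t)$, and using summation by parts together with $E[x_t\cdot E_t\Delta w_{t+1}]=E[x_t\cdot\Delta w_{t+1}]$ does give primal value $=$ dual value, whence optimality and closedness at the point.

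The gap is in the necessity direction, precisely where you say the three inclusions ``drop out of the pointwise Fenchel equality essentially by inspection'' because the Lagrangian ``splits into a sum of independent blocks.'' The blocks are not independent: the condition from \cite[Theorem~8]{bpp} is $(\theta,c)\in\partial_{(w,q)}[-l](x,w,q)$, and in the function $(w,q)\mapsto V^*(q)+\sum_t(q_tS_t)^*(w_t)$ the variable $q_t$ appears both in $V^*$ and in $(q_tS_t)^*$; moreover $(q_tS_t)(\Delta x_t+\theta_t)$ is not a separate summand of the primal objective (it sits inside $V$), so there is no literal summand-by-summand Fenchel pairing to appeal to. To split the joint subdifferential one needs (a) the subdifferential sum rule \cite[Theorem~23.8]{roc70a}, whose constraint qualification is exactly where Assumption~\ref{ass1} enters (it gives $\dom V^*\cap\reals^{T+1}_{++}\ne\emptyset$, which meets the relative interior of $\dom (q,w)\mapsto(qS_t)^*(w)$) --- your proposal assigns Assumption~\ref{ass1} only a measurability role and misses this; and (b) a computation of the \emph{joint} subdifferential $\partial_{(q_t,w_t)}(q_tS_t)^*(w_t)$, which the paper carries out by recognizing $(q_tS_t)^*(w_t)$ as the support function of $E_t=\{(\eta,d)\mid S_t(\eta)+d\le 0\}$ and passing to normal cones. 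This produces $q\in\partial V(c-d)$ and $w_t\in\partial(q_tS_t)(\Delta x_t+\theta_t)$ only for \emph{some} slack process $d\le -S(\Delta x+\theta)$ with $q\cdot(S(\Delta x+\theta)+d)=0$; a final appeal to the monotonicity of $V$ is needed to replace $d$ by $-S(\Delta x+\theta)$ and arrive at the stated condition $q\in\partial V(S(\Delta x+\theta)+c)$. These three steps constitute most of the paper's proof and are not supplied by your plan.
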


\subsection{Adapted claims}\label{sec:adapted}

When the claim process $(\theta,c)$ is adapted and the loss function $V$ satisfies Assumption~\ref{ass2}, the above results can be expressed in terms of adapted dual variables. We will denote the linear subspaces of adapted processes in $\U$ and $\Y$ by $\U^a$ and $\Y^a$, respectively. In other words,
\[
\U^a:=\N^\infty\times\M^a\quad\text{and}\quad \Y^a:=\N^1\times\Q^a.
\]
Since, by assumption, $\M$ and $\Q$ are closed under adapted projections, we have that $\U^a$ and $\Y^a$ are in separating duality under the bilinear form defined for $\U$ and $\Y$.

We will denote the restriction of $\bar\varphi$ to $\U^a$ by $\bar\varphi_a$. Since the relative topology $\sigma(\U,\Y)$ on $\U^a$ coincides with $\sigma(\U^a,\Y^a)$, Theorem~\ref{thm:barcl} implies that $\bar\varphi_a$ is closed with respect to $\sigma(\U^a,\Y^a)$. It is immediate from the definition of the recession function that $\bar\varphi_a^\infty$ is simply the restriction of $\bar\varphi^\infty$ to $\U^a$. To restrict the dual variables to $\Y^a$, only one simple observation is needed.

If $\psi$ is any functional on $\U$ such that $\psi(\ap u)\le\psi(u)$ for all $u\in\U$, then for any $y\in\Y$,
\begin{align*}
\psi^*(\ap y)= \sup_{u\in\U}\{\langle u,\ap y\rangle - \psi(u)\} \le \sup_{u\in\U}\{\langle\ap u,y\rangle - \psi(\ap u)\} \le \psi^*(y).
\end{align*}
In particular, if $V$ satisfies Assumptions~\ref{ass1} and \ref{ass2}, then by the interchange rule, $EV^*(\ap q)\le V^*(q)$ so $V^*$ satisfies Assumption~\ref{ass2} as well. The following shows that Assumption~\ref{ass2} is inherited by $\bar\varphi^*$ and thus by the closure of $\bar\varphi$ as well.

\begin{theorem}\label{thm:barvarphia*}
If $V$ satisfies Assumptions~\ref{ass1} and \ref{ass2}, then $\bar\varphi^*(\ap y)\le\bar\varphi^*(y)$ for all $y\in\Y$ and $\bar\varphi_a^*=\bar\varphi^*$ on $\Y^a$. In particular, if $(\theta,c)\in\U^a$ and $y\in\Y$ solves the dual, then $\ap y\in\Y^a$ solves the dual as well and $x\in\N$ solves \eqref{alm+} if and only if it is feasible and satisfies the subdifferential conditions of Theorem~\ref{thm:baroc} with $\ap y$.
\end{theorem}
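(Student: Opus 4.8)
The plan is to exploit the general averaging principle recorded just before the statement, namely that for any functional $\psi$ on $\U$ satisfying $\psi(\ap u)\le\psi(u)$ one has $\psi^*(\ap y)\le\psi^*(y)$, together with Theorem~\ref{thm:barvarphi*} which gives $\bar\varphi^*$ explicitly. First I would verify that $\bar\varphi$ itself satisfies $\bar\varphi(\ap u)\le\bar\varphi(u)$ for all $u=(\theta,c)\in\U$. This follows because for any feasible $x\in\N_D$ for the claim $u$, the adapted process $\ap x$ (or rather $x$ itself, which is already adapted) remains feasible, and applying Assumption~\ref{ass2} to the integrand $EV$ together with Jensen's inequality for the convex normal integrands $S_t$ (conditional expectation commutes appropriately since $S_t$ is $\F_t$-measurable) one gets $EV(S(\Delta x+\ap\theta)+\ap c)\le EV(S(\Delta x+\theta)+c)$; taking the infimum over $x$ yields the monotonicity of $\bar\varphi$ under adapted projection. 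Then the displayed inequality preceding the theorem gives $\bar\varphi^*(\ap y)\le\bar\varphi^*(y)$ for all $y\in\Y$, which is the first assertion.

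Next I would establish $\bar\varphi_a^*=\bar\varphi^*$ on $\Y^a$. The inequality $\bar\varphi_a^*\le\bar\varphi^*$ on $\Y^a$ is trivial since $\bar\varphi_a$ is the restriction of $\bar\varphi$ to the smaller set $\U^a$, so the supremum defining the conjugate is over a smaller set. For the reverse inequality, fix $y\in\Y^a$; then for any $u\in\U$, using $\langle u,y\rangle=\langle\ap u,y\rangle$ (valid because $y$ is adapted, so only the adapted projection of $u$ is seen by the pairing — here one uses the tower property componentwise) and $\bar\varphi(\ap u)\le\bar\varphi(u)$, we get $\langle u,y\rangle-\bar\varphi(u)\le\langle\ap u,y\rangle-\bar\varphi(\ap u)\le\bar\varphi_a^*(y)$; taking the supremum over $u\in\U$ gives $\bar\varphi^*(y)\le\bar\varphi_a^*(y)$.

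For the ``in particular'' clause, suppose $(\theta,c)\in\U^a$ and $y\in\Y$ solves the dual, i.e.\ attains $\inf_{y\in\Y}\{\bar\varphi^*(y)-\langle u,y\rangle\}$, which by the duality coincides with $\bar\varphi(u)$ (this uses closedness of $\bar\varphi_a$ from Theorem~\ref{thm:barcl}, as recorded in the previous subsection, so there is no duality gap for adapted $u$). Since $u$ is adapted, $\langle u,y\rangle=\langle u,\ap y\rangle$, and by the first part $\bar\varphi^*(\ap y)\le\bar\varphi^*(y)$, so $\bar\varphi^*(\ap y)-\langle u,\ap y\rangle\le\bar\varphi^*(y)-\langle u,y\rangle$; hence $\ap y$ is also a minimizer, and it lies in $\Y^a$ because $\M$ and $\Q$ are closed under adapted projections. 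Finally, since $(\ap w,\ap q)=\ap y$ solves the dual and $\bar\varphi$ is closed at $u$ (by Theorem~\ref{thm:barcl}), Theorem~\ref{thm:baroc} applies with the dual solution $\ap y$, giving that $x\in\N$ solves \eqref{alm+} iff it is feasible and satisfies the three subdifferential inclusions with $\ap y$ in place of $y$. The main obstacle I anticipate is the careful justification that $\bar\varphi(\ap u)\le\bar\varphi(u)$: one must check that passing $S_t$ inside the conditional expectation only decreases the value, which requires Jensen's inequality for the $\F_t$-measurable convex normal integrand $S_t$ (so that $S_t(E_t[\Delta x_t+\theta_t])\le E_t S_t(\Delta x_t+\theta_t)$ does not quite appear directly — rather one applies Assumption~\ref{ass2} to the composite loss functional after noting that $\ap{(S(\Delta x+\theta)+c)}$ is dominated in the $EV$-order by what one obtains), and care is needed because $V$ is only nondecreasing, not monotone in a two-sided sense; the clean route is to apply Assumption~\ref{ass2} directly to the $\M$-valued argument $S(\Delta x+\theta)+c$, observing that its adapted projection is $S(\Delta x+\ap\theta)+\ap c$ when $x$ is adapted and $S_t$ is $\F_t$-measurable, and $\Delta x$ is adapted.
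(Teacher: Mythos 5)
Your overall architecture is reasonable, and the parts of your argument that sit downstream of the primal inequality $\bar\varphi(\ap{u})\le\bar\varphi(u)$ are correct: given that inequality, your derivation of $\bar\varphi^*(\ap{y})\le\bar\varphi^*(y)$, your two-line proof of $\bar\varphi_a^*=\bar\varphi^*$ on $\Y^a$ (more elementary than the paper's infimal-convolution computation), and your treatment of the ``in particular'' clause all go through. The gap is in the primal inequality itself, which is precisely what the paper avoids proving. The paper's proof never touches $\bar\varphi(\ap{u})$; it applies conditional Jensen's inequality directly to the explicit dual expression of Theorem~\ref{thm:barvarphi*}, where every integrand --- $V^*(q)$, $(q_tS_t)^*(w_t)$, $\sigma_{D_t}(\cdot)$ --- is \emph{nonnegative} (because $V(0,\omega)=0$, $S_t(0,\omega)=0$ and $0\in D_t(\omega)$), so all conditional expectations are well defined and Jensen applies with no integrability caveats.

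Your primal route hits two concrete obstructions. First, Assumption~\ref{ass2} is a statement about $c\in\M$ only, and for a general $x\in\N_D$ (merely adapted, not bounded or integrable) the process $S(\Delta x+\theta)+c$ need not lie in $\M$; indeed $S_t(\Delta x_t+\theta_t)$ need not even be quasi-integrable, so the conditional expectations defining its adapted projection may be undefined. Note that Theorem~\ref{thm:barvarphia*} does not assume $S(x,\cdot)\in\M^a$, so there is nothing to fall back on. Second, the identity asserted in your ``clean route'' --- that the adapted projection of $S(\Delta x+\theta)+c$ \emph{equals} $S(\Delta x+\ap{\theta})+\ap{c}$ --- is false for nonlinear $S_t$ and nonadapted $\theta$; at best one has the inequality $S_t(\Delta x_t+\ap{\theta}_t)\le E_t[S_t(\Delta x_t+\theta_t)]$ from conditional Jensen, which again presupposes the well-definedness just questioned. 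So the step ``apply Assumption~\ref{ass2} to the $\M$-valued argument $S(\Delta x+\theta)+c$'' does not go through under the stated hypotheses. Since both your first claim and your proof of $\bar\varphi^*\le\bar\varphi_a^*$ on $\Y^a$ rest on this primal inequality, the repair is to switch to the dual side throughout: combine $EV^*(\ap{q})\le EV^*(q)$ (the averaging principle applied to $EV$ on $\M$, as recorded before the theorem) with Jensen's inequality for the nonnegative $\F_t$-measurable convex normal integrands $(w,q)\mapsto(qS_t)^*(w)$ and $v\mapsto\sigma_{D_t}(v)$ in the formula of Theorem~\ref{thm:barvarphi*}, and then recover $\bar\varphi_a^*=\bar\varphi^*$ on $\Y^a$ as the paper does.
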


\begin{proof}
Since $S$ is adapted and $S(0)=0$, we have 
\[
E[(q_tS_t)^*(\ap w_t) + \sigma_{D_t}(E_t\Delta\ap w_{t+1})] \le E[(q_tS_t)^*(w_t) + \sigma_{D_t}(E_t\Delta w_{t+1})],
\]
by Jensen's inequality for normal integrands; see e.g.\ \cite[Corollary~2.1]{pen11c}. The first claim thus follows from the expression for $\bar\varphi^*$ in Theorem~\ref{thm:barvarphi*} since $EV^*(\ap q)\le EV^*(q)$ under Assumption~\ref{ass2}. For any $y\in\Y^a$, the conjugate of $\bar\varphi_a$ can be expressed as
\begin{align*}
\bar\varphi_a^*(y) &= \sup_{u\in\U^a}\{\langle u,y\rangle - \bar\varphi(u)\}\\
&= \sup_{u\in\U}\{\langle u,y\rangle - (\bar\varphi+\delta_{\U^a})(u)\}\\
&= \cl\inf_{y'\in\Y}\{\bar\varphi^*(y-y') + \delta_{(\U^a)^\perp}(y')\}\\
&= \bar\varphi^*(y),
\end{align*}
where the closure is taken with respect to the pairing of $\Y$ with $\U$ and the last equality holds by the first claim.
\end{proof}

The proofs of Theorems~\ref{thm:varphi*}, \ref{thm:cl} and \ref{thm:oc} are now simple applications of the Theorems~\ref{thm:barcl} and \ref{thm:barvarphia*} and the fact that $\varphi$ is the restriction of $\bar\varphi(0,\cdot)$ to the space $\M^a$ of adapted cash-valued claims.

\begin{proof}(of Theorem~\ref{thm:varphi*})
Defining $\gamma_q:\N^\infty\to\ereals$ by
\[
\gamma_q(\theta) := \inf_{c\in\M^a}\{\bar\varphi_a(\theta,c) - \langle c,q\rangle\},
\]
we have $\varphi^*(q)=-\gamma_q(0)$ and $\gamma_q^*(w)=\bar\varphi^*_a(w,q)$ so that 
\[
-\gamma_q^{**}(0)=\inf_{w\in\N^1}\bar\varphi^*_a(w,q),
\]
which, by Theorem~\ref{thm:barvarphia*}, is the desired expression. It thus suffices to show that $\gamma_q(0)=\gamma_q^{**}(0)$ and that the infimum is attained. By \cite[Theorem~17]{roc74}, this holds if $\gamma_q$ is bounded from above on a Mackey-neighborhood of the origin. Choosing $x=0$ and $c=-S(\theta)$, we get
\begin{align*}
\gamma_q(\theta) &= \inf_{c\in\M^a,x\in\N_D}E[V(S(\Delta x+\theta)+c) - c\cdot q]\le E\sum_{t=0}^Tq_tS_t(\theta_t).
\end{align*}
Convexity of $S(\cdot,\omega)$ and the assumption that $S(x,\cdot)\in\M$ for all $x\in\reals^J$ implies $S(x)\in\M$ for all $x\in L^\infty$, so \cite[Theorem~22]{roc74} implies that the last expression is Mackey-continuous on $L^\infty$ and thus bounded above on a neighborhood of the origin.
\end{proof}

\begin{proof}(of Theorem~\ref{thm:cl})
Theorem~\ref{thm:cl} follows from Theorem~\ref{thm:barcl} once we notice that $\varphi$ is the restriction of $\bar\varphi(0,\cdot)$ to $\M^a$ and that $\sigma(\M^a,\Q^a)$ is the relative topology of $\sigma(\U,\Y)$ on $\M^a$. 
\end{proof}

\begin{proof}(of Theorem~\ref{thm:oc})
It suffices to apply Theorem~\ref{thm:barvarphia*} with $\theta=0$.
\end{proof}

\section*{Appendix}

A set-valued mapping $\omega\mapsto C(\omega)$ is {\em $\F_t$-measurable} is the inverse image
\[
C^{-1}(U):=\{\omega\in\Omega\,|\,C(\omega)\cap U\ne\emptyset\}
\]
of every open $U\subset\reals^n$ is $\F_t$-measurable. An extended real-valued function $f$ on $\reals^n\times\Omega$ is a {\em normal integrand} if the set-valued mapping $\omega\mapsto\epi f(\cdot,\omega):=\{(x,\alpha)\in\reals^n\times\reals\,|\, f(x,\omega)\le\alpha\}$ is closed-valued and measurable. We refer the reader to \cite[Chapter~14]{rw98} for further details on measurable set-valued maps and normal integrands.

\subsection*{Compositions of convex functions}

Given an extended real-valued function $g$ on $\reals^m$ and an $\reals^m$-valued function $F$ on a subset $\dom F$ of $\reals^n$, we define their composition as the extended real-valued function
\[
(g\comp F)(x):=
\begin{cases}
  g(F(x)) & \text{if $x\in\dom F$},\\
  +\infty & \text{if $x\notin\dom F$}.
\end{cases}
\]
Given a convex cone $K\subset\reals^m$, the function $F$ is said to be {\em $K$-convex} if the set
\[
\epi_KF:=\{(x,u)\,|\,x\in\dom F,\ F(x)-u\in K\}
\]
is convex. It is said to be {\em closed} if $\epi_KF$ is a closed set. It is easily verified (see the proof of Lemma~\ref{lem:comp} below) that if $g$ is convex and $F$ is $K$ convex then $g\comp F$ is convex if
\begin{equation}\label{growthcomp}
F(x)-u\in K\implies g(F(x))\le g(u)\quad\forall x\in\dom F.
\end{equation}
For such composition, full subdifferential and recession calculus is available; see \cite{pen99}. The composition is well-behaved also in terms of measurability.

We say that a family $\{F(\cdot,\omega)\}_{\omega\in\Omega}$ of closed $K$-convex functions is a {\em random $K$-convex function} if $\epi_KF(\cdot,\omega)$ is measurable.

\begin{lemma}\label{lem:comp}
Let $g$ be a convex normal integrand and let $F$ be a random $K$-convex function such that \eqref{growthcomp} holds almost surely. If $(-K)\cap\{u\,|\,g^\infty(u,\omega)\le 0\}$ is linear, then $g\comp F$ is a convex normal integrand on $\reals^n\times\Omega$. 
\end{lemma}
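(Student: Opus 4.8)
The plan is to decompose the assertion into two pointwise statements, valid for $P$-a.e.\ $\omega$ — that $(g\comp F)(\cdot,\omega)$ is convex and that it is lower semicontinuous (equivalently, that $\epi(g\comp F)(\cdot,\omega)$ is closed) — together with one measurability statement, that $\omega\mapsto\epi(g\comp F)(\cdot,\omega)$ is measurable; by definition, these three facts are exactly what ``convex normal integrand'' means. We may assume $g(\cdot,\omega)$ proper for a.e.\ $\omega$, the degenerate cases being immediate. Convexity is the easy part already announced in the text: $\dom F(\cdot,\omega)$ is the image of the convex set $\epi_K F(\cdot,\omega)$ under $(x,u)\mapsto x$, hence convex, and for $x_0,x_1\in\dom F(\cdot,\omega)$ and $\lambda\in[0,1]$, $K$-convexity gives $x_\lambda:=(1-\lambda)x_0+\lambda x_1\in\dom F(\cdot,\omega)$ with $F(x_\lambda,\omega)-[(1-\lambda)F(x_0,\omega)+\lambda F(x_1,\omega)]\in K$; applying \eqref{growthcomp} and then convexity of $g(\cdot,\omega)$ yields $(g\comp F)(x_\lambda,\omega)\le(1-\lambda)(g\comp F)(x_0,\omega)+\lambda(g\comp F)(x_1,\omega)$.

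For lower semicontinuity of $(g\comp F)(\cdot,\omega)$ I would invoke the deterministic calculus of convex compositions in \cite{pen99}: when $g(\cdot,\omega)$ is closed convex, $F(\cdot,\omega)$ is closed $K$-convex, \eqref{growthcomp} holds and $(-K)\cap\{u\mid g^\infty(u,\omega)\le 0\}$ is a linear subspace, the composition $g\comp F(\cdot,\omega)$ is a closed convex function. This is the only step that uses the linearity hypothesis. Heuristically, representing $\epi(g\comp F)(\cdot,\omega)$ as a projection of a closed convex set (next paragraph), a direction $(0,u)$ can obstruct the closedness of that projection only if $(0,u)\in(\epi_K F(\cdot,\omega))^\infty$ and $g^\infty(u,\omega)\le 0$; but $(0,u)\in(\epi_K F(\cdot,\omega))^\infty$ forces $-u\in K$ (divide $F(x_0,\omega)-u_0-su\in K$ by $s$ and let $s\to\infty$, using that $K$ is closed), so any such $u$ lies in $(-K)\cap\{u\mid g^\infty(u,\omega)\le 0\}$, and linearity of the latter removes the obstruction — which is precisely what the composition theorem of \cite{pen99} establishes rigorously.

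It then remains to check measurability of $\omega\mapsto\epi(g\comp F)(\cdot,\omega)$. Using $0\in K$ and \eqref{growthcomp} one sees that, whenever $x\in\dom F(\cdot,\omega)$, the value $u=F(x,\omega)$ attains the infimum in $\inf\{g(u,\omega)\mid (x,u)\in\epi_K F(\cdot,\omega)\}$, which yields the identity
\[
\epi(g\comp F)(\cdot,\omega)=\big\{(x,\alpha)\mid \exists u\in\reals^m:\ (x,u)\in\epi_K F(\cdot,\omega),\ g(u,\omega)\le\alpha\big\};
\]
thus $\epi(g\comp F)(\cdot,\omega)$ is the image under the projection $(x,u,\alpha)\mapsto(x,\alpha)$ of the intersection $\big(\epi_K F(\cdot,\omega)\times\reals\big)\cap\big(\reals^n\times\epi g(\cdot,\omega)\big)$. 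Here $\omega\mapsto\epi_K F(\cdot,\omega)$ is measurable by hypothesis and $\omega\mapsto\epi g(\cdot,\omega)$ is measurable since $g$ is a normal integrand; products with full spaces, intersections, and images under linear projections all preserve measurability of closed-valued mappings (with closedness of the final image supplied by the previous paragraph), by the measurable-set-valued calculus of \cite[Chapter~14]{rw98}. Combined with the pointwise convexity and closedness, this shows $\epi(g\comp F)(\cdot,\omega)$ is a closed convex set depending measurably on $\omega$, i.e.\ $g\comp F$ is a convex normal integrand.

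The main obstacle is the pointwise lower semicontinuity of $g\comp F(\cdot,\omega)$: this is where the linearity assumption on $(-K)\cap\{u\mid g^\infty(u,\omega)\le 0\}$ is genuinely needed, and where one must either quote the recession/subdifferential calculus for convex compositions from \cite{pen99} or reprove the underlying closedness-under-projection statement. The measurability part, once the projection formula for the epigraph is in place, is routine bookkeeping with normal integrands.
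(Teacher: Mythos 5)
Your argument is correct and follows essentially the same route as the paper: you represent $g\comp F$ as the inf-projection over $u$ of the normal integrand $h(x,u,\omega)=g(u,\omega)+\delta_{\epi_KF(\cdot,\omega)}(x,u)$ (your epigraph-projection identity is exactly this), obtain pointwise closedness from the recession-direction/linearity argument (the paper cites \cite[Theorem~9.2]{roc70a} where you cite \cite{pen99}), and conclude normality of the inf-projection (the paper via \cite[Proposition~14.47]{rw98}, you via the equivalent set-valued measurability bookkeeping). No gaps.
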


\begin{proof}
By \cite[Example~14.32 and Proposition~14.44]{rw98},
\[
h(x,u,\omega) := g(u,\omega) + \delta_{\epi_KF(\cdot,\omega)}(x,u),
\]
is a normal integrand. The growth condition gives $(g\comp F)(x,\omega) =  \inf_{u\in\reals^m}h(x,u,\omega)$ while the linearity condition implies, by \cite[Theorem~9.2]{roc70a}, that this expression is lsc in $x$ and thus a normal integrand, by \cite[Proposition~14.47]{rw98}. 
\end{proof}

In the market model of Section~\ref{sec:oicv}, the function $S(x,\omega):=(S_t(x,\omega))_{t=0}^T$ with
\[
\dom S(\cdot,\omega):=\bigcap_{t=0,\ldots,T}\dom S_t(\cdot,\omega)
\]
defines a random $K$-convex function with $K:=\reals^{T+1}_-$. By \cite[Corollary~8.6.1]{roc70a}, Assumption~\ref{ass1} means that $(-K)\cap\{c\in\reals^{T+1}\,|\,V^\infty(c,\omega)\le 0\}=\{0\}$, so Lemma~\ref{lem:comp} gives the following.

\begin{corollary}\label{cor:vs}
Under Assumption~\ref{ass1}, the function $(x,c,\omega)\mapsto V(S(\Delta x,\omega))+c)$ is a convex normal integrand and, in particular, $\B(\reals^{J(T+1)}\times\reals^{(T+1)})\otimes\F$-measurable.
\end{corollary}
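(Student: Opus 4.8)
The plan is to obtain the corollary as a direct application of Lemma~\ref{lem:comp}, taking $g:=V$ and $F$ the random $\reals^{T+1}_-$-convex function $(x,c)\mapsto S(\Delta x,\cdot)+c$ assembled from the market model $S$, the linear difference operator $x\mapsto\Delta x$, and the linear shift by $c$. Write $K:=\reals^{T+1}_-$.

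First I would verify that $F$ is a random $K$-convex function on $\reals^{J(T+1)}\times\reals^{T+1}$. Since $\Delta x=(x_t-x_{t-1})_{t=0}^T$ (with $x_{-1}:=0$) is a linear function of $x$ and each $S_t(\cdot,\omega)$ is a convex normal integrand, the set
\[
\epi_K F(\cdot,\cdot,\omega)=\{(x,c,u)\mid \Delta x_t\in\dom S_t(\cdot,\omega),\ S_t(\Delta x_t,\omega)+c_t\le u_t\ \forall\,t\}
\]
is the preimage of $\epi_K S(\cdot,\omega)$ (which, up to a reordering of coordinates, is the product $\prod_t\epi S_t(\cdot,\omega)$ of closed convex sets) under the fixed linear map $(x,c,u)\mapsto(\Delta x,\,u-c)$; hence it is closed and convex, and it depends $\F$-measurably on $\omega$ because each $\omega\mapsto\epi S_t(\cdot,\omega)$ is closed-valued and $\F_t$-measurable (hence $\F$-measurable) and preimages under a continuous map preserve measurability of set-valued mappings; see \cite[Chapter~14]{rw98}. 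Thus $F$ is a random $\reals^{T+1}_-$-convex function.

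Next I would check the two hypotheses of Lemma~\ref{lem:comp}. The growth condition \eqref{growthcomp} asks that $S(\Delta x,\omega)+c-u\in\reals^{T+1}_-$ imply $V(S(\Delta x,\omega)+c,\omega)\le V(u,\omega)$, which is exactly the monotonicity of $V(\cdot,\omega)$ with respect to the componentwise order granted by Assumption~\ref{ass1}. The linearity condition requires $(-K)\cap\{u\mid V^\infty(u,\omega)\le 0\}=\reals^{T+1}_+\cap\{u\mid V^\infty(u,\omega)\le 0\}$ to be a linear subspace; by \cite[Corollary~8.6.1]{roc70a}, the last growth property in Assumption~\ref{ass1} forces this intersection to equal $\{0\}$, which is trivially linear. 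Lemma~\ref{lem:comp} then yields that $(x,c,\omega)\mapsto(g\comp F)(x,c,\omega)=V(S(\Delta x,\omega)+c,\omega)$ is a convex normal integrand on $\reals^{J(T+1)}\times\reals^{T+1}\times\Omega$, and joint $\B(\reals^{J(T+1)}\times\reals^{T+1})\otimes\F$-measurability follows automatically, since every normal integrand is jointly measurable; see \cite[Chapter~14]{rw98}.

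The only step requiring genuine care is the verification that $\omega\mapsto\epi_K F(\cdot,\cdot,\omega)$ is measurable, i.e.\ that $K$-convexity together with measurability of the epigraphical mapping survives precomposition with the linear operator $x\mapsto\Delta x$ and translation by $c$; this is routine in the calculus of measurable set-valued mappings. Everything else is bookkeeping already laid out in the discussion preceding the statement.
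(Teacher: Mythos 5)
Your proposal is correct and follows essentially the same route as the paper: both apply Lemma~\ref{lem:comp} with $g=V$ and a random $\reals^{T+1}_-$-convex function built from the stacked costs $(S_t)_{t=0}^T$, verifying the growth condition via monotonicity of $V$ and the linearity condition via the last part of Assumption~\ref{ass1} together with \cite[Corollary~8.6.1]{roc70a}. The only difference is that you fold the linear precomposition $(x,c)\mapsto(\Delta x,c)$ and the shift by $c$ explicitly into the $K$-convex map $F$, which the paper leaves implicit; this is a harmless (indeed slightly more careful) presentation of the same argument.
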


\subsection{Proofs of Theorems~\ref{thm:barvarphi*}, \ref{thm:barcl} and \ref{thm:baroc}}

As in \cite{pen14}, we analyze \eqref{alm+} in the more general parametric stochastic optimization format from \cite{pen11c,bpp}. To this end, we express the value function as
\[
\bar\varphi(u)= \inf_{x\in\N}\int f(x(\omega),u(\omega),\omega)dP(\omega),
\]
where $\N$ is the linear space of adapted $\reals^J$-valued processes and $f$ is the extended real-valued function on $\reals^{J(T+1)}\times\reals^{(J+1)(T+1)}\times\Omega$ defined by
\[
f(x,u,\omega)= V(S(\Delta x+\theta,\omega)+c,\omega)+\delta_{D(\omega)}(x).
\]
Indeed, $f$ is a normal integrand by, \cite[Proposition~14.44 and Proposition~14.45]{rw98} and Corollary~\ref{cor:vs}.

The dual expressions involve the associated {\em Lagrangian integrand}
\begin{align}
l(x,y,\omega)&=\inf_{u\in\reals^{(J+1)(T+1)}}\{f(x,u,\omega)-u\cdot y\}\nonumber\\
&=\delta_{D(\omega)}(x) - V^*(q,\omega) + \sum_{t=0}^T[\Delta x_t\cdot w_t - (q_tS_t)^*(w_t,\omega)]\nonumber\\
&=\delta_{D(\omega)}(x) - V^*(q,\omega) + \sum_{t=0}^T[- x_t\cdot\Delta w_{t+1} - (q_tS_t)^*(w_t,\omega)]\label{le}
\end{align}
and the conjugate of $f$
\begin{align*}
f^*(v,y,\omega)&=\sup_{\stackrel{x\in\reals^{J(T+1)}}{u\in\reals^{(J+1)(T+1)}}}\{x\cdot v+u\cdot y-f(x,u,\omega)\}\\
&=\sup_{x\in\reals^{J(T+1)}}\{x\cdot v-l(x,y,\omega)\}\\
&=  V^*(q,\omega) + \sum_{t=0}^T[\sigma_{D_t(\omega)}(v_t+\Delta w_{t+1}) + (q_tS_t)^*(w_t,\omega)].
\end{align*}
By \cite[Theorem~14.50]{rw98}, $f^*$ is a normal integrand as well.

We define an auxiliary value function
\[
\tilde\varphi(u):=\inf_{u\in\N^\infty}Ef(x,u),
\]
where $\N^\infty:=\N\cap L^\infty$, the essentially bounded adapted trading strategies. 

\begin{proof}(of Theorem~\ref{thm:barvarphi*})
Since $\{x\in\N^\infty\,|\,\exists u\in\U:\ Ef(x,u)<\infty\}=\N^\infty$, Theorem~2 of \cite{bpp} says that conjugate of $\tilde\varphi$ can be expressed as
\begin{align*}
\tilde\varphi^*(y) &= -\inf_{x\in\N^\infty}El(x,y)\\
&= -\inf_{x\in\N^\infty}\left\{E\sum_{t=0}^T[\delta_{D_t}(x_t) - x_t\cdot E_t[\Delta w_{t+1}]- (q_tS_t)^*(w_t)] - V^*(q)\right\}\\
&= E\left\{\sum_{t=0}^T[\sigma_{D_t}(E_t\Delta w_{t+1}) + (q_tS_t)^*(w_t)] + V^*(q)\right\}
\end{align*}
where the second equality comes from properties of conditional expectation and the last from the interchange rule for integration and minimization; see \cite[Theorem~14.60]{rw98}. We have $Ef^*(v,y)=\tilde\varphi^*(y)$ for any $y\in\Y$ and $v_t=E_t\Delta w_{t+1}-\Delta w_{t+1}$. Thus, by \cite[Theorem~2]{bpp}, $\bar\varphi^*=\tilde\varphi^*$. 
\end{proof}

\begin{proof}(of Theorem~\ref{thm:barcl})
By \cite[Theorem~5 and Lemma~6]{per16}, it suffices to show that $\L:=\{x\in\N\mid f^\infty(x,0)\le 0\}$ is a linear space and that there exists $(v,y)\in\N^\perp\times\Y$ with $\lambda(v,y)\in\dom Ef^*$ for two different $\lambda>0$.

By \cite[Theorem~9.3]{roc70a} and \cite[Theorem~7.3]{pen99},
\[
f^\infty(x,u,\omega) =V^\infty(S^\infty(\Delta x+\theta,\omega)+c,\omega) + \delta_{D^\infty(\omega)}(x),
\]
so $\L$ is a linear space under Assumption~\ref{ass3}. The assumption in the theorem means that $Ef^*(v,y)$ is finite for some $y$ and $v=E_t\Delta w_{t+1}-\Delta w_{t+1}$. By Assumption~\ref{ass3}, $EV^*(\lambda q)$ is finite for some nonnegative $\lambda\ne 1$, and since the other terms in 
\begin{align*}
Ef^*(v,y) &= E[V^*(q) + \sum_{t=0}^T[(q_tS_t)^*(w_t) + \sigma_{D_t}(v_t +\Delta w_{t+1})]
\end{align*}
are positively homogeneous, we get $\lambda (v,y)\in\dom Ef^*$.
\end{proof}

\begin{proof}(of Theorem~\ref{thm:baroc})
By Theorem~\ref{thm:barvarphi*}, $y$ solves the dual if and only if it maximizes $\langle u,y\rangle-\bar\varphi^*(y)$. Under the closedness condition, this is equivalent to $y\in\partial\bar\varphi(u)$. The assumptions of \cite[Theorem~8]{bpp} are then satisfied with $v_t:=E_t\Delta w_{t+1}-\Delta w_{t+1}$. Thus, $x\in\N$ and $y\in\Y$ are optimal if and only if they are feasible and 
\begin{align*}
v&\in\partial_xl(x,w,q),\\
(\theta,c) &\in\partial_{(w,q)} [-l](x,w,q)
\end{align*}
$P$-almost surely. Looking at the last expression \eqref{le} for $l$, we see that the first inclusion here coincides with the first one in the statement. The second inclusion means that $c\in\partial V^*(q)+d$ and $\Delta x+\theta =\eta$ for some $(\eta_t,d_t)\in\partial_{q,w}(q_tS_t)^*(w_t)$. Indeed, the subdifferential sum rule \cite[Theorem~23.8]{roc70a} applies $\omega$-wise since Assumption~\ref{ass1} implies that $\dom V^*\cap\reals^{T+1}_{++}\ne\emptyset$ while, by \cite[Theorem~6.6]{roc70a}, the relative interior of the domain of the function $(w,q)\mapsto(qS_t)^*(w,\omega)$ contains a $(w,q)$ whenever $q\in\reals^{T+1}_{++}$. It is easily checked that $(q_tS_t)^*(w_t)$ is the support function of the set $E_t:=\{(\eta,d)\,|\,S_t(\eta)+d\le 0\}$. By \cite[Theorem~23.5 and p. 215]{roc70a}, $(\eta_t,d_t)\in\partial_{q,w}(q_tS_t)^*(w_t)$ is thus equivalent to $(w_t,q_t)\in N_{E_t}(\eta_t,d_t)$. This means that either $S_t(\eta_t)+d_t<0$ and $(w_t,q_t)=(0,0)$ or $S_t(\eta_t)+d_t=0$ and $w_t\in\partial(q_tS_t)(\eta_t)$. The second inclusion thus means that
\begin{align*}
q &\in\partial V(c-d),\\
w_t&\in\partial(q_tS_t)(\Delta x_t+\theta_t)
\end{align*}
for some measurable process $d\le-S(\Delta x+\theta)$ such that $q(S(\Delta x+\theta)+d)=0$ almost surely. Since $V$ is nondecreasing, this is satisfied also by $d=-S(\Delta x+\theta)$.
\end{proof}

\subsection*{A convex analytic lemma}

The following was used in the proofs of Theorems~\ref{thm:acc} and \ref{thm:idsr}.

\begin{lemma}\label{lem:piD}
Let $\D\subset\M^a$ be a closed set containing the origin, let $p\in\M^a$ and 
\[
\pi(c)=\inf\{\alpha\in\reals\,|\,c-\alpha p\in\D\}.
\]
Then the conditions
\begin{enumerate}
\item
$\pi(c)$ is closed and proper,
\item
$\pi(0)>\infty$,
\item
$p\notin\D^\infty$,
\item
$\langle p,q\rangle=1$ for some $q\in\dom\sigma_\D$
\end{enumerate}
are equivalent and imply the validity of the dual representation
\[
\pi^*(c) = \sup_{q\in\Q^a}\{\langle c,q\rangle - \sigma_\D(q)\,|\, \langle p,q\rangle =1\}.
\]
\end{lemma}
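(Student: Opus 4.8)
The plan is to reduce the whole statement to the single identity $\pi=\pi^{**}$. Since $\D$ is convex, $\pi$ is a convex function (if $c_i-\alpha_i p\in\D$ for $i=1,2$, then $\lambda(c_1-\alpha_1p)+(1-\lambda)(c_2-\alpha_2p)\in\D$, so $\pi(\lambda c_1+(1-\lambda)c_2)\le\lambda\alpha_1+(1-\lambda)\alpha_2$). For the conjugate I would write $\langle c,q\rangle-\pi(c)=\sup\{\langle c,q\rangle-\alpha\mid c-\alpha p\in\D\}$ and substitute $d=c-\alpha p$, which gives, with no extra hypotheses,
\[
\pi^*(q)=\sup_{d\in\D,\ \alpha\in\reals}\bigl\{\langle d,q\rangle+\alpha(\langle p,q\rangle-1)\bigr\}=\sigma_\D(q)+\delta_{\{q\,\mid\,\langle p,q\rangle=1\}}(q).
\]
Hence $\pi^{**}(c)=\sup\{\langle c,q\rangle-\sigma_\D(q)\mid\langle p,q\rangle=1\}$ is exactly the displayed dual expression, so the claimed representation is precisely $\pi=\pi^{**}$. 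As $\pi$ is convex and $\pi(0)\le 0$ (take $\alpha=0$, using $0\in\D$), this holds as soon as $\pi$ is closed and proper, the only possible failure of properness being the value $-\infty$.

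Next I would dispatch the equivalences via the support-function description of the recession cone of the closed convex set $\D$: $p\in\D^\infty$ if and only if $\langle p,q\rangle\le 0$ for all $q\in\dom\sigma_\D$ (cf.\ \cite{roc70a,roc74}). Since $\dom\sigma_\D$ is a cone, this yields $(3)\Leftrightarrow(4)$ at once. For $(2)\Leftrightarrow(3)$ note $\pi(0)=-\sup\{t\mid tp\in\D\}$ and that $\{t\mid tp\in\D\}$ is a closed interval containing $0$; thus $\pi(0)=-\infty$ exactly when $\reals_+p\subseteq\D$, which by the dictionary is exactly $p\in\D^\infty$. Finally $(1)\Rightarrow(2)$ is immediate, a proper convex function being nowhere $-\infty$.

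The real content is $(4)\Rightarrow(1)$: closedness and properness of $\pi$ under $p\notin\D^\infty$. Properness: if $\pi(c)=-\infty$ then $\{t\mid c+tp\in\D\}$ is nonempty and unbounded above, hence contains a half-line, so $\D$ contains a ray in direction $p$; pairing that ray against any $q\in\dom\sigma_\D$ forces $\langle p,q\rangle\le 0$, hence $p\in\D^\infty$, contradicting (3). Closedness: I would show $\{\pi\le\beta\}=\D+\{\alpha p\mid\alpha\le\beta\}$ — "$\supseteq$" is clear, and "$\subseteq$" holds because under (3) the set $\{\alpha\mid c-\alpha p\in\D\}$ is bounded below (same ray argument) and closed, hence attains its infimum $\pi(c)$, so $c\in\D+\pi(c)p$ — and then that $\D+\reals_{\le\beta}p$ is $\sigma(\M^a,\Q^a)$-closed: along a convergent net $d_\nu-t_\nu p$ with $d_\nu\in\D+\beta p$ and $t_\nu\ge 0$, pairing with a fixed $q\in\dom\sigma_\D$ bounds the $t_\nu$, so one passes to a convergent subnet and invokes closedness of $\D$. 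Thus all sublevel sets of $\pi$ are closed, $\pi$ is closed and convex, and, $\pi^*$ being proper by (4), $\pi=\pi^{**}$ is proper. Equivalently, and more in the style of the rest of the paper, $\pi$ is the partial minimization over $\alpha$ of the closed proper convex function $(c,\alpha)\mapsto\alpha+\delta_\D(c-\alpha p)$, whose recession function along $\{c=0\}$ vanishes only at $\alpha=0$ precisely because $p\notin\D^\infty$, so the standard closedness criterion for inf-projections (as used for the recession-cone computations elsewhere in the paper, cf.\ \cite[Corollary~6B]{roc66} and \cite{per16}) yields closedness of $\pi$ together with attainment of the defining infimum in one stroke.

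I expect the main obstacle to be this last closedness step: adding the unbounded ray $\reals_{\le\beta}p$ to the closed convex set $\D$ generally destroys closedness, and in the infinite-dimensional weak topology one cannot appeal to a finite-dimensional recession lemma — the support-functional estimate above is what does the job, and it is available exactly because $p\notin\D^\infty$. Everything else is routine conjugacy bookkeeping.
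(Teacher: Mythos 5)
Your argument is correct, and it is worth noting that the paper itself does not prove this lemma at all: its ``proof'' is a one-line citation to Proposition~4.2 and Theorem~5.2 of an earlier paper (\cite{pen11b}), with the remark that the arguments there for $\D=\C$ carry over to general convex $\D$. Your write-up therefore supplies exactly the content the paper outsources, and it does so along the standard lines one would expect that reference to follow: the conjugate computation $\pi^*=\sigma_\D+\delta_{\{\langle p,\cdot\rangle=1\}}$, the recession-cone/barrier-cone duality $\D^\infty=\{x\mid\langle x,q\rangle\le 0\ \forall q\in\dom\sigma_\D\}$ for the equivalence of (2), (3), (4), and the identification of sublevel sets as $\D+\{\alpha p\mid\alpha\le\beta\}$ together with the pairing against a fixed $q\in\dom\sigma_\D$ with $\langle p,q\rangle=1$ to control the ray parameter along a convergent net. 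You correctly isolate the only genuinely infinite-dimensional difficulty (closedness of $\D+\reals_{\le\beta}p$ in $\sigma(\M^a,\Q^a)$) and resolve it with the support-functional bound rather than a finite-dimensional recession lemma. Two cosmetic points: the lemma as stated says ``closed set'' and ``$\pi(0)>\infty$''; you are right to read these as ``closed \emph{convex} set'' (the paper's own proof note confirms convexity is intended) and ``$\pi(0)>-\infty$'', and it would be worth saying so explicitly since your convexity of $\pi$, the biconjugation step, and the recession-cone duality all require it.
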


\begin{proof}
This is proved for $\D=\C$ in \cite[Proposition~4.2 and Theorem~5.2]{pen11b}. The same arguments work for any convex $\D$.
\end{proof}

\subsection*{Asymptotic elasticity of a random loss function}

The main result of \cite{pp12} was used in \cite{pen14} to give condition for the lower semicontinuity of the optimal value function of \eqref{alm}. One of the conditions was that the functions $V_t$ are bounded from below. While this is satisfied in many cases, it rules out isoelastic utility functions. Theorem~5 in \cite{per16} allows us to replace the lower bound by more general ``asymptotic elasticity'' conditions introduced in \cite{ks99} and \cite{sch1}. We extend the conditions of \cite{ks99,sch1} further by allowing for nonsmooth and random functions $V_t$ and by not imposing the Inada conditions, which can be written as $V_t^\infty(\cdot,\omega)=\delta_{\reals_-}$. Indeed, it suffices to have one of the conditions in the following lemma. 

\begin{lemma}\label{lem:scale}
Let $h$ be a convex normal integrand on the real line. If there exist $\lambda>1$, $\bar y\in\dom Eh^*$ and $C\ge 0$ such that
\[
h^*(\lambda y,\omega)\le Ch^*(y,\omega)\quad\forall y\ge \bar y(\omega)
\]
then (in $L^0$)
\[
\lambda\dom Eh^*\subseteq\dom Eh^*\quad\forall\lambda\ge 1.
\]
If there exist $\lambda\in(0,1)$, $\bar y\in\dom Eh^*$ and $C>0$ such that
\[
h^*(\lambda y,\omega)\le Ch^*(y,\omega)\quad\forall y\in[0,\bar y(\omega)]
\]
then
\[
\lambda\dom Eh^*\subseteq\dom Eh^*\quad\forall\lambda\in(0,1].
\]
\end{lemma}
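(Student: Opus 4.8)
The plan is to argue the two assertions separately, but in parallel, since they are mirror images. Fix $y\in\dom Eh^*$, so that $Eh^*(y)<\infty$ (with $h^*$ the pointwise conjugate, which is again a convex normal integrand). The goal is to show that $\mu y\in\dom Eh^*$ for every $\mu\ge 1$ (respectively every $\mu\in(0,1]$). Since $\dom Eh^*$ is convex and contains a neighbourhood of the relevant scalings once we know $\lambda^n y\in\dom Eh^*$ for all $n$, it suffices to iterate the hypothesis: show $\lambda y\in\dom Eh^*$, hence $\lambda^2 y\in\dom Eh^*$, and so on, and then fill in intermediate multiples by convexity together with $0\in\dom Eh^*$ or monotonicity of $h^*$ along rays.

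\textbf{First assertion (RAE$_+$-type).} Split $\Omega$ according to the sign of $y$. On $\{y(\omega)\le 0\}$ we have $\lambda y(\omega)\le y(\omega)\le 0$; since $h^*(\cdot,\omega)$ is convex with $h^*(0,\omega)=-h^{\infty}(0,\omega)\le\ $ (finite, because $h$ is a proper normal integrand bounded below on bounded sets), one controls $h^*(\lambda y(\omega),\omega)$ by a convex combination of $h^*(y(\omega),\omega)$ and $h^*(0,\omega)$, both of which are integrable on this set (the latter being constant, hence integrable since we work on a probability space only after checking $h^*(0,\cdot)\in L^1$; if it is not, one first translates so that the reference point is in the effective domain). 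On $\{0<y(\omega)<\bar y(\omega)\}$, again $\lambda y(\omega)$ lies between $0$ and $\lambda\bar y(\omega)$, and one bounds $h^*(\lambda y(\omega),\omega)$ above by $\max\{h^*(0,\omega),h^*(\lambda\bar y(\omega),\omega)\}\le\max\{h^*(0,\omega),C h^*(\bar y(\omega),\omega)\}$ using the hypothesis applied at $\bar y(\omega)$, all integrable since $\bar y\in\dom Eh^*$. On $\{y(\omega)\ge\bar y(\omega)\}$, the hypothesis applies directly: $h^*(\lambda y(\omega),\omega)\le C\,h^*(y(\omega),\omega)$, integrable. Adding the three pieces gives $Eh^*(\lambda y)<\infty$. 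Iterating, $Eh^*(\lambda^n y)<\infty$ for all $n$; for a general $\mu\ge 1$ pick $n$ with $\lambda^n\ge\mu$ and write $\mu y$ as a convex combination of $y$ and $\lambda^n y$ (possible because $\mu y$ lies on the segment between them when $y\ge 0$, and more care is needed where $y<0$, where one instead uses that $h^*(\cdot,\omega)$ restricted to the ray $\{s\,y(\omega):s\ge 0\}$ is convex in $s$ so its finiteness at $s=1$ and $s=\lambda^n$ forces finiteness on $[1,\lambda^n]$). Hence $\mu y\in\dom Eh^*$.

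\textbf{Second assertion (RAE$_-$-type).} This is the symmetric argument with the roles of large and small scalings interchanged: now the hypothesis $h^*(\lambda y,\omega)\le C h^*(y,\omega)$ for $y\in[0,\bar y(\omega)]$ with $\lambda\in(0,1)$ lets us push $y$ toward $0$. Splitting $\Omega$ as before, the region $\{0\le y(\omega)\le\bar y(\omega)\}$ is handled directly by the hypothesis, the region $\{y(\omega)>\bar y(\omega)\}$ by convexity between $0$ (with $h^*(0,\cdot)\in L^1$) and $y(\omega)$ using $\lambda y(\omega)=\lambda\frac{y(\omega)}{\bar y(\omega)}\bar y(\omega)+\dots$ — more precisely $\lambda y(\omega)\in(0,y(\omega))$ so $h^*(\lambda y(\omega),\omega)\le\max\{h^*(0,\omega),h^*(y(\omega),\omega)\}$, integrable — and the region $\{y(\omega)<0\}$ by convexity again. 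Iteration and convex interpolation then give $\mu y\in\dom Eh^*$ for all $\mu\in(0,1]$.

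\textbf{Main obstacle.} The delicate point is the integrability of the ``anchor'' $h^*(0,\cdot)$ (or of $h^*$ at whatever reference point one uses to start the convex interpolation), and, relatedly, dealing with the set where $y$ changes sign, since there the segment from $y$ to $\lambda y$ or $\lambda^n y$ need not contain the intermediate multiples in the naive way and one must instead invoke convexity of $s\mapsto h^*(s\,y(\omega),\omega)$ on $[0,\infty)$ directly. One way to sidestep the anchor issue cleanly is to note that, since $h$ is a proper convex normal integrand, $h^*(\cdot,\omega)$ is proper, so $\dom h^*(\cdot,\omega)\ne\emptyset$; picking any measurable selection $y_0(\omega)\in\dom h^*(\cdot,\omega)$ and using decomposability one may assume $y_0\in\dom Eh^*$ after localizing, and then interpolate between $y_0$ and the scalings of $y$ rather than between $0$ and them. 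I expect the bookkeeping of these sign cases to be the only real work; everything else is convexity and the hypothesis applied verbatim.
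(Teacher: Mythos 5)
Your core mechanism is the same as the paper's: partition $\Omega$ according to the position of $y$ relative to $\bar y$, apply the hypothesis verbatim on the piece where it applies, and use convexity of $h^*(\cdot,\omega)$ (a max over two endpoint values) on the remaining piece; then iterate and interpolate to pass from the specific $\lambda$ of the hypothesis to all $\lambda\ge 1$ (resp.\ $\lambda\in(0,1]$) — a step the paper leaves implicit but which you rightly spell out. The one substantive difference is your choice of interpolation anchor. You anchor at $0$, which forces you to confront the integrability of $h^*(0,\cdot)=-\inf_x h(x,\cdot)$; you correctly flag this as the main obstacle, but the workarounds you sketch (translating the reference point, taking a measurable selection of $\dom h^*(\cdot,\omega)$ and ``localizing'' via decomposability) do not really close it: a measurable selection of $\dom h^*(\cdot,\omega)$ need not produce an element of $\dom Eh^*$, since pointwise finiteness of $h^*$ says nothing about integrability. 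The paper avoids the anchor entirely: on $\{y\le\bar y\}$ it bounds $h^*(\lambda y)\le\max\{h^*(y),h^*(\lambda\bar y)\}\le\max\{h^*(y),Ch^*(\bar y)\}$, i.e.\ it interpolates between $y$ itself and $\lambda\bar y$, both of which are already under control ($Eh^*(y)<\infty$ by assumption, and $h^*(\lambda\bar y)\le Ch^*(\bar y)$ with $Eh^*(\bar y)<\infty$); the symmetric bound works on $\{y>\bar y\}$ in the second assertion. With that choice of endpoints your ``main obstacle'' simply does not arise, and the proof reduces to two displayed inequalities. (Both your argument and the paper's convexity step tacitly use that the relevant points are nonnegative, so that $\lambda y$ does lie on the segment between the two chosen endpoints; this is harmless in the intended application, where $h^*$ is a conjugate of a nondecreasing integrand and hence has domain in $\reals_+$, but it is worth being explicit about where you invoke it.)
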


\begin{proof}
Let $y\in\dom Eh^*$. Under the first condition
\begin{align*}
h^*(\lambda y) &= \one_{\{y\le \bar y\}}h^*(\lambda y) + \one_{\{y> \bar y\}}h^*(\lambda y)\\
&\le \one_{\{y\le \bar y\}}\max\{h^*(y),h^*(\lambda \bar y)\} + \one_{\{y> \bar y\}}h^*(\lambda y)\\
&\le \one_{\{y\le \bar y\}}\max\{h^*(y),Ch^*(\bar y)\} + \one_{\{y> \bar y\}}Ch^*(y),
\end{align*}
where the first inequality comes from the convexity of $h^*$. Since $\bar y,y\in\dom Eh^*$, the last expression is integrable.

Under the second condition,
\begin{align*}
Eh^*(\lambda y) &= E\one_{\{y\le \bar y\}}h^*(\lambda y) + E\one_{\{y> \bar y\}}h^*(\lambda y)\\
&\le E\one_{\{y\le \bar y\}}h^*(\lambda y) + E\one_{\{y> \bar y\}}\max\{h^*(y),h^*(\lambda \bar y)\}\\
&\le E\one_{\{y\le \bar y\}}Ch^*(y) + E\one_{\{y> \bar y\}}\max\{h^*(y),Ch^*(\bar y)\},
\end{align*}
where the first inequality comes from the convexity of $h^*$.
\end{proof}




The following lemmas give alternative formulations of the conditions in Lemma~\ref{lem:scale} which were used in the proof of Theorem~\ref{thm:barcl}. In both lemmas, third condition extends the ``asymptotic elasticity'' conditions introduced in \cite{ks99} and \cite{sch1}, respectively, for deterministic utility functions.

\begin{lemma}\label{lem:rae01}
Given a nondecreasing closed convex function $g$ on the real line with $g(0)=0$, the following are equivalent conditions on $\beta>1$ and $\bar y$ for which $\partial g^*(\bar y)\subset\reals_+$.
\begin{alignat}{3}
g^*(\lambda y)&\le \lambda^{\frac{\beta}{\beta-1}}g^*(y)\quad &&\forall \lambda\ge 1,\ y\ge \bar y.\label{1a}\\
cy &\le \frac{\beta}{\beta-1} g^*(y)\quad &&\forall y \ge \bar y,\ c\in\partial g^*(y).\label{1b}\\
cy &\ge \beta g(c)\quad &&\forall c\ge\partial g^*(\bar y),\ y\in\partial g(c).\label{1c}\\
g(\lambda c) &\ge \lambda^\beta g(c)\quad &&\forall \lambda\ge 1,\ c\ge\partial g^*(\bar y).\label{1d}
\end{alignat}
\end{lemma}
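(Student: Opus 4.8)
The plan is to prove the four conditions equivalent by inserting two ``monotone ratio'' reformulations as bridges: that the map $y\mapsto y^{-p}g^*(y)$ is nonincreasing on $[\bar y,\infty)$, and that $c\mapsto c^{-\beta}g(c)$ is nondecreasing on the corresponding half-line in $c$; here $p:=\beta/(\beta-1)$ is the conjugate exponent of $\beta$, so that $1/p+1/\beta=1$. I would then establish
\[
\eqref{1a}\iff\big[y^{-p}g^*(y)\text{ nonincreasing}\big]\iff\eqref{1b}\iff\eqref{1c}\iff\big[c^{-\beta}g(c)\text{ nondecreasing}\big]\iff\eqref{1d}.
\]
Throughout I would use the elementary consequences of $g$ being closed convex and nondecreasing with $g(0)=0$: $g\ge 0$ on $\reals_+$, $g^*\ge 0$, $\dom g^*\subseteq\reals_+$, $\partial g(c)\subseteq\reals_+$, and the Fenchel correspondence $c\in\partial g^*(y)\iff y\in\partial g(c)\iff cy=g(c)+g^*(y)$. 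The hypothesis $\partial g^*(\bar y)\subset\reals_+$ enters precisely here: together with monotonicity of the subdifferential it forces $\partial g^*(y)\subset\reals_+$ for every $y\ge\bar y$, so that every $c$, $y$, $cy$ below is nonnegative and the manipulations (dividing by powers of $c$ or $y$, multiplying inequalities) preserve signs.

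The first and last equivalences are immediate: writing $y_1:=y$, $y_2:=\lambda y$ turns $g^*(\lambda y)\le\lambda^p g^*(y)$ (for $\lambda\ge 1$) verbatim into $y_2^{-p}g^*(y_2)\le y_1^{-p}g^*(y_1)$ for all $\bar y\le y_1\le y_2$, and symmetrically for \eqref{1d} and $c^{-\beta}g$ (if $\bar y=0$ one works on the open half-line, the endpoint carrying no information). The middle equivalences are a differentiation argument. In the only nontrivial case $g^*$ is finite on $[\bar y,\infty)$ --- both \eqref{1a} and \eqref{1b} force this, since a finite value of $g^*$ just below a vertical asymptote contradicts either --- so $y^{-p}g^*(y)$ is locally Lipschitz there and is nonincreasing iff its right one-sided derivative $y^{-p-1}\big(y\,(g^*)'_+(y)-p\,g^*(y)\big)$ is $\le 0$ throughout $(\bar y,\infty)$; since $(g^*)'_+(y)=\sup\partial g^*(y)$ and $y\ge 0$, this is exactly \eqref{1b}. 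The argument relating $c^{-\beta}g$ to \eqref{1c} is symmetric but uses the \emph{left} one-sided derivative: \eqref{1c} is a lower bound on $cy$ and is therefore tightest for the smallest $y\in\partial g(c)$, namely the left derivative $(g)'_-(c)=\inf\partial g(c)$; here \eqref{1c} and \eqref{1d} must be read allowing $g(c)=+\infty$, where $\partial g(c)=\emptyset$ and the condition is vacuous.

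For the central bridge \eqref{1b}$\iff$\eqref{1c} I would work on the graph $G:=\{(c,y)\mid c\in\partial g^*(y)\}$, which by conjugacy equals $\{(c,y)\mid y\in\partial g(c)\}$ and on which $cy=g(c)+g^*(y)$. Substituting $g^*(y)=cy-g(c)$ into $cy\le p\,g^*(y)$ and using $1-1/p=1/\beta$ turns it, in one line, into $\beta g(c)\le cy$, and the step reverses. The delicate point --- and the step I expect to be the real obstacle --- is that the index restriction ``$y\ge\bar y$'' in \eqref{1b} and the restriction ``$c\ge\partial g^*(\bar y)$'' in \eqref{1c} must cut out the same portion of $G$. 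Monotonicity of $\partial g^*$ shows that for $(c,y)\in G$ with $c$ strictly beyond $c_0:=\sup\partial g^*(\bar y)$ one automatically has $y\ge\bar y$, and conversely, so the two quantified sets agree away from the single threshold value $c_0$; at $c_0$ itself $g$ may have a corner, and one must check that the relevant inequality there --- at the pair $(c_0,\bar y)$, which \eqref{1b} does control since it allows $y=\bar y$, and at pairs obtained from it by letting $c\downto c_0$ and using continuity of $g$ --- transfers correctly. (This is also why ``$c\ge\partial g^*(\bar y)$'' has to be read as ``$c$ past $\partial g^*(\bar y)$'' rather than literally ``$c\ge\inf\partial g^*(\bar y)$'': at a corner of $g$ the full interval $\partial g(c_0)$ dips below $\bar y$, and imposing \eqref{1c} for those $y$ would be strictly stronger than \eqref{1b}.) Apart from this boundary bookkeeping, everything is differentiation of locally Lipschitz functions and one line of Fenchel algebra; chaining the three links gives \eqref{1a}$\iff$\eqref{1b}$\iff$\eqref{1c}$\iff$\eqref{1d}.
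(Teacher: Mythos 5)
Your proposal is correct in substance but takes a genuinely different route from the paper. The paper runs a single cycle \eqref{1c}$\Rightarrow$\eqref{1d}$\Rightarrow$\eqref{1a}$\Rightarrow$\eqref{1b}$\Rightarrow$\eqref{1c}: it integrates the differential inequality $cy\ge\beta g(c)$ via Gronwall's inequality to get \eqref{1d}, transfers the scaling inequality to the conjugate by the direct computation $g^*(y)\ge\sup_c\{cy-\lambda^{-\beta}g(\lambda c)\}=\lambda^{-\beta}g^*(\lambda^{\beta-1}y)$ to get \eqref{1a}, recovers \eqref{1b} from \eqref{1a} by a difference-quotient limit as $\lambda\searrow 1$, and closes the cycle with the same one-line Fenchel substitution you use for \eqref{1b}$\iff$\eqref{1c}. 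You instead insert the two monotone-ratio reformulations ($y^{-p}g^*(y)$ nonincreasing, $c^{-\beta}g(c)$ nondecreasing) and differentiate locally Lipschitz functions on each side separately, so that the transfer between the primal and dual scaling inequalities happens entirely through the pointwise Fenchel identity on the graph of $\partial g$ rather than through a supremum over $c$. This is more symmetric and avoids Gronwall (your direction \eqref{1b}$\Rightarrow$[ratio nonincreasing] is precisely the integrated differential inequality, transported to the dual side), at the cost of having to match the quantifier ranges ``$y\ge\bar y$'' and ``$c\ge\partial g^*(\bar y)$'' on the graph of the subdifferential --- a point you rightly flag and which the paper's own proof passes over silently (its step \eqref{1d}$\Rightarrow$\eqref{1a} likewise restricts the supremum to $c\ge\partial g^*(\bar y)$ without comment). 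Your observation that at a kink of $g$ at $c_0=\sup\partial g^*(\bar y)$ the literal reading of \eqref{1c} is strictly stronger than \eqref{1b} identifies a genuine imprecision in the statement rather than a defect of your argument; reading ``$c\ge\partial g^*(\bar y)$'' as ``$c$ beyond the threshold'' is the right fix. Your finiteness claim (that \eqref{1a} or \eqref{1b} forces $g^*<\infty$ on $[\bar y,\infty)$) does need the small extra argument that a convex $g^*$ satisfying $y\,(g^*)'_+(y)\le p\,g^*(y)$ on an interval cannot blow up at its finite right endpoint, but that follows from the very monotonicity of $y^{-p}g^*(y)$ you are establishing, so the proof closes.
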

\begin{proof}
Assuming \eqref{1c}, we have $g(\lambda c)\ge g(c)+\int_{[c,\lambda c]}\beta\frac{g(s)}{s}ds$, so
\begin{align*}
g(\lambda c) &\ge g(c) +\int_{[c,\lambda c]}g(c)\frac{\beta}{s}\exp(\int_{[s,\lambda c]}\frac{\beta}{r}dr)ds = \lambda^\beta g(c)
\end{align*}
by  Gronwall's inequality. For $y\ge\bar y(\omega)$, condition \eqref{1d} gives
\begin{align*}
g^*(y) \ge \sup\{cy-\lambda^{-\beta}g(\lambda c)\}= \lambda^{-\beta}g^*(\lambda^{\beta-1}y),
\end{align*}
which implies \eqref{1a}. We have for every $\lambda> 1$, $y\ge \bar y(\omega)$, and $c\in\partial g^*(y)$,
\[
c(\lambda y-y)\le g^*(\lambda y)-g^*(y) \le (\lambda^{\frac{\beta}{\beta-1}}-1)g^*(y)
\]
from where we get \eqref{1b} by first dividing by $\lambda-1$ and then letting $\lambda\searrow 1$. Using the equivalence
\[
c\in\partial g^*(y,\omega)\iff y\in\partial g(c)\iff g(c)+g^*(y)=cy
\]
we can write condition \eqref{1b} as 
\[
cy\le\frac{\beta}{\beta-1}(cy-g(c))\quad\forall c\ge\partial g^*(\bar y),\ y\in\partial g(c)
\]
which is equivalent with \eqref{1d}.
\end{proof}

\begin{lemma}\label{lem:rae02}
Given a nondecreasing closed convex function $g$ on the real line with $g(0)=0$, the following are equivalent conditions on $\beta\in(0,1)$ and $\bar y$ for which $\partial g^*(\bar y)\subseteq\reals_-$.
\begin{alignat}{3}
g^*(\lambda y) &\le \lambda^{\frac{-\beta}{1-\beta}} g^*(y)\quad &&\forall \lambda\in(0,1),\ y\in(0,\bar y].\label{2a}\\
cy &\ge\frac{-\beta}{1-\beta} g^*(y)\quad &&\forall y\in(0,\bar y],\ c\in\partial g^*(y).\label{2b}\\
cy &\ge\beta g(c)\quad &&\forall c\le\partial g^*(\bar y),\ y\in\partial g(c).\label{2c}\\
g(\lambda c) &\ge \lambda^\beta g(c)\quad &&\forall \lambda\ge 1,\ c\le\partial g^*(\bar y).\label{2d}
\end{alignat}
\end{lemma}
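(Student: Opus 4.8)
The plan is to prove the four conditions equivalent by running the cycle \eqref{2c}$\Rightarrow$\eqref{2d}$\Rightarrow$\eqref{2a}$\Rightarrow$\eqref{2b}$\Rightarrow$\eqref{2c}, in exact parallel with the proof of Lemma~\ref{lem:rae01}, keeping track of the sign changes forced by $\beta\in(0,1)$ and by the fact that here the relevant subgradients of $g^*$ are nonpositive and the relevant arguments are small and positive. It is convenient to record first the standing facts that $g$ nondecreasing with $g(0)=0$ forces $g\le 0$ on $\reals_-$, $g^*\ge 0$ and $\dom g^*\subseteq\reals_+$, and that $\partial g^*(\bar y)\subseteq\reals_-$ makes $g^*$ nonincreasing on $(0,\bar y]$, with $\partial g^*(y)\le\partial g^*(\bar y)$ for $y\in(0,\bar y]$; these identify the two parameter ranges in \eqref{2a}/\eqref{2b} with those in \eqref{2c}/\eqref{2d} under the correspondence $c\in\partial g^*(y)\iff y\in\partial g(c)$.

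For \eqref{2c}$\Rightarrow$\eqref{2d} the cleanest device is the reflection $\tilde g(c):=-g(-c)$, which is nonnegative, nondecreasing, concave and vanishes at the origin; under it \eqref{2c} becomes the differential inequality $a\tilde g'(a)\le\beta\tilde g(a)$ for all $a\ge a_0:=-\sup\partial g^*(\bar y)$, and integrating $\frac{d}{da}\ln\tilde g(a)\le\beta/a$ from $a$ to $\lambda a$ (Gronwall, as in Lemma~\ref{lem:rae01}) yields $\tilde g(\lambda a)\le\lambda^\beta\tilde g(a)$, which is \eqref{2d}. For \eqref{2d}$\Rightarrow$\eqref{2a}, rewriting \eqref{2d} as $g(c)\le\lambda^{-\beta}g(\lambda c)$ and substituting this and then $c'=\lambda c$ into $g^*(y)=\sup_c\{cy-g(c)\}$ gives $g^*(y)\ge\lambda^{-\beta}g^*(\lambda^{\beta-1}y)$ for $\lambda\ge1$, exactly as in Lemma~\ref{lem:rae01}; since $\beta-1<0$, the reparametrization $\mu:=\lambda^{\beta-1}\in(0,1]$ together with the identity $\frac{\beta}{\beta-1}=\frac{-\beta}{1-\beta}$ turns this into \eqref{2a}, all quantities staying in range because $\lambda^{\beta-1}y\le y\le\bar y$ and $\lambda c\le c\le\partial g^*(\bar y)$.

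For \eqref{2a}$\Rightarrow$\eqref{2b} I would feed $\lambda\in(0,1)$, $y\in(0,\bar y]$ and $c\in\partial g^*(y)$ into the subgradient inequality $g^*(\lambda y)-g^*(y)\ge c(\lambda-1)y$, bound the left side above by $(\lambda^{\beta/(\beta-1)}-1)g^*(y)$ via \eqref{2a}, divide by $(\lambda-1)<0$ (reversing the inequality), and let $\lambda\nearrow1$, so that the difference quotient of $\lambda\mapsto\lambda^{\beta/(\beta-1)}$ tends to $\tfrac{\beta}{\beta-1}=\tfrac{-\beta}{1-\beta}$; this gives \eqref{2b}. Finally \eqref{2b}$\iff$\eqref{2c} is the Fenchel bookkeeping of Lemma~\ref{lem:rae01}: using $c\in\partial g^*(y)\iff g(c)+g^*(y)=cy$ to replace $g^*(y)$ by $cy-g(c)$ in \eqref{2b} and clearing the positive factor $1-\beta$ reduces \eqref{2b} to $cy\ge\beta g(c)$, and the subgradient correspondence matches the ranges in both directions, closing the cycle.

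The main obstacle is not any single deep step but the consistent sign- and range-tracking, concentrated in \eqref{2c}$\Rightarrow$\eqref{2d}: with $c<0$ the monotone-selection/integration argument used for $\beta>1$ has to be carried out for a concave rather than a convex function, which is precisely why the reflection $\tilde g(c)=-g(-c)$ is worth introducing, and one must separately dispose of the degenerate case in which $g^*$ (equivalently $\tilde g$) is flat on the relevant half-line, where several of the divisions above are illegitimate but all four conditions hold trivially.
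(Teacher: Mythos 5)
Your proposal is correct and follows essentially the same route as the paper: the same implication cycle \eqref{2c}$\Rightarrow$\eqref{2d}$\Rightarrow$\eqref{2a}$\Rightarrow$\eqref{2b}$\Leftrightarrow$\eqref{2c}, with the same Gronwall-type integral comparison, the same conjugate substitution $c'=\lambda c$, the same subgradient difference-quotient limit $\lambda\nearrow 1$, and the same Fenchel-equality bookkeeping. Your reflection $\tilde g(c)=-g(-c)$ in the first step and your explicit handling of the degenerate flat case are minor presentational refinements of the paper's direct integration on $[\lambda c,c]$, not a different argument.
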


\begin{proof}
Assuming \eqref{2c}, we have $g(c)\le g(\lambda c)+\int_{[\lambda c,c]}\beta\frac{g(s)}{s}ds$, so
\begin{align*}
g(c) &\le g(\lambda c) +g(\lambda c)\int_{[\lambda c,c]}\frac{\beta}{s}ds =g(\lambda c)(1-\ln(\lambda^{-\beta})) \le \lambda^{-\beta} g(\lambda c)
\end{align*}
For $y\in(0,\bar y]$, condition \eqref{2d} gives
\begin{align*}
g^*(y) \ge \sup\{cy-\lambda^{-\beta}g(\lambda c)\}= \lambda^{-\beta}g^*(\lambda^{\beta-1}y),
\end{align*}
which implies \eqref{2a}. We have for every $\lambda\in(0,1)$, $y\in(0,\bar y]$, and $c\in\partial g^*(y)$,
\[
c(\lambda y-y)\le g^*(\lambda y)-g^*(y) \le (\lambda^{\frac{-\beta}{1-\beta}}-1)g^*(y)
\]
from where we get \eqref{2b} by first dividing by $\lambda-1$ and then letting $\lambda\nearrow 1$. Using the equivalence
\[
c\in\partial g^*(y)\iff y\in\partial g(c)\iff g(c)+g^*(y)=cy
\]
we can write condition \eqref{2b} as 
\[
cy\ge\frac{-\beta}{1-\beta}(cy-g(c))\quad\forall c\ge\partial g^*(\bar y),\ y\in\partial g(c)
\]
which is equivalent with \eqref{2c}.
\end{proof}

\bibliographystyle{plain}
\bibliography{sp}

\end{document}